\theoremstyle{theorem}
\newtheorem{theorem}{Theorem}
\newtheorem{lemma}[theorem]{Lemma}
\theoremstyle{definition}
\newtheorem{algorithm}[theorem]{Algorithm}
\newtheorem{definition}[theorem]{Definition}
\theoremstyle{definition}  % looks nicer than "remark"
\newtheorem{example}[theorem]{Example}
\newtheorem{remark}[theorem]{Remark}
\theoremstyle{definition}
\newtheorem{convention}{Convention}
\newcommand{\Z}{\mathbb{Z}}
\newcommand{\N}{\mathbb{N}}
\newcommand{\D}{\mathbb{D}}
\newcommand{\sem}[1]{[\![#1]\!]}
\newcommand{\CNOT}{\left[\begin{array}{cccc}
1&0&0&0\\
0&1&0&0\\
0&0&0&1\\
0&0&1&0\\
\end{array}\right]}
\newcommand{\Hadamard}{\frac{1}{\sqrt{2}}\left[\begin{array}{rr}
1&1\\
1&-1\\
\end{array}\right]}
\newcommand{\Phase}{\left[\begin{array}{cc}
1&0\\
0&i\\
\end{array}\right]}
\newcommand{\jay}{j}
\newcommand{\kay}{k}
\newcommand{\tee}{t}
\newcommand{\ay}{a}
\newcommand{\bee}{b}
\renewcommand{\:}{\mathbin{:}}
\newcommand{\xRightarrow}[2][]{\ext@arrow
  0359\Rightarrowfill@{#1}{#2}}
\newcommand{\level}{\mathop{\rm level}\nolimits}
\newcommand{\extent}{\mathop{\rm extent}\nolimits}
\newcommand{\lde}{\mathop{\rm lde}\nolimits}
\newcommand{\edge}{\xrightarrow}
\newcommand{\nedge}{\xRightarrow}
\renewcommand{\vec}{\overrightarrow}
\newcommand{\s}[1]{\{#1\}}
\newcommand{\mymod}{\mathop{{\rm mod}}}
\newcommand{\mmod}[1]{\,(\mymod#1)}
\newcommand{\divides}{\mid}
\newcounter{mycounter}
\def\myitem{\refstepcounter{mycounter}(\arabic{mycounter})}
\newcounter{casecounter}
\newcounter{subcasecounter}[casecounter]
\newcounter{subsubcasecounter}[subcasecounter]
\renewcommand{\thecasecounter}{\arabic{casecounter}}
\renewcommand{\thesubcasecounter}{\arabic{casecounter}.\arabic{subcasecounter}}
\renewcommand{\thesubsubcasecounter}{\arabic{casecounter}.\arabic{subcasecounter}.\arabic{subsubcasecounter}}
\def\case{\refstepcounter{casecounter}\item[{\bf Case \thecasecounter.}]}
\def\subcase{\refstepcounter{subcasecounter}\item[{\bf Subcase \thesubcasecounter.}]}
\def\subsubcase{\refstepcounter{subsubcasecounter}\item[{\bf Subcase \thesubsubcasecounter.}]}
\newcommand{\matindex}[1]{{\scriptsize#1}}% Matrix index
\newcommand{\urlalt}[2]{\href{#2}{\nolinkurl{#1}}}
\begin{document}
\title{Generators and Relations for the Group $U_4(\Z[\frac{1}{\sqrt2},\lowercase{i}])$}
\author{Seth Eveson Murray Greylyn}
\defencemonth{August}\defenceyear{2014}\copyrightyear{2014}
\convocation{October}{2014}
\supervisor{Peter Selinger}
\reader{Jason Brown}
\reader{Karl Dilcher}
\dedicate{For Jennifer.}
\frontmatter

\begin{abstract}
We give a presentation by generators and relations of the group $U_4(\Z[\frac{1}{\sqrt2},i])$ of unitary $4\times 4$ matrices with entries in the ring $\Z[\frac{1}{\sqrt2},i]$. This is motivated by the problem of exact synthesis for the Clifford+$T$ gate set in quantum computation.
\end{abstract}

\begin{acknowledgements}
I would like to acknowledge the guidance of my thesis supervisor, Dr. Peter Selinger, without whose skill, dedication, and patient advice this thesis would not have been possible. Special thanks should also be paid to Dr. Jason I. Brown and Dr. Karl Dilcher for graciously agreeing to act as readers for this work. The faculty and staff of Dalhousie University have provided me with too many mentors, confidantes, and friends to list exhaustively, but I should mention Dr. Sara Faridi and Paula Flemming, without whose counsel and assistance I could not have fulfilled the requirements of the degree for which this work was completed. Perhaps most importantly, I am indebted to Dr. Suzanne Seager for instilling in me the skill and confidence to recognise just how beautiful mathematics can be.
\end{acknowledgements}
\mainmatter
\chapter{Introduction}

In this work, we give a presentation of the group $U_4(\Z[\frac{1}{\sqrt{2}},i])$ in terms of generators and relations. This is motivated by the problem of {\em exact synthesis} in quantum computation. We shall use the rest of this introduction to very briefly sketch the main ideas behind the exact synthesis problem. Apart from the introduction, the remainder of this thesis is formulated in completely algebraic terms, and can be read without any background in quantum computation.

The problem in quantum information theory which led to the current work is the decomposition of unitary operators, via the operations of matrix multiplication and tensor product, into operators from a given finite set, called \textit{gates}. The decomposition can either be done exactly (`exact synthesis') or up to $\epsilon$ (`approximate synthesis'). A common gate set is the so-called Clifford+$T$ set, consisting of operators $\omega, H, S, T, CNOT:$\[
\omega = e^{i\pi/4},\quad H = \Hadamard,\quad S = \Phase,\quad T =\left[\begin{array}{cc}
1&0\\
0&\omega
\end{array}
\right],\]\[ CNOT = \CNOT.
\]\label{page-intro}
Note: since $S = T^2,$ it is not strictly necessary to include $S$ among the finite gate set; however, the $T$ gate is often considered more computationally `expensive' to perform than the $S$ gate, which explains why $S$ is commonly included in the gate set.

Also note that the gate set contains matrices of different dimensions. Since we have closed the set of operators not only under matrix multiplication, but also under tensor product, this means that operators such as 
\[ I\otimes H = \left[\begin{array}{c|c}H&0\\\hline0&H\end{array}\right]
\]
are also included. Using terminology from quantum computing, a $2\times2$ matrix is called a \textit{single qubit} operator, a $4\times4$ matrix is called a \textit{two qubit} operator, and, more generally, a $2^n\times2^n$ matrix is called an \textit{$n$-qubit} operator. An operator that is explicitly built up from gates via tensor product and matrix multiplication is also called a {\em circuit}.

To be suitable for quantum computation, a gate set must be \textit{dense}, in the sense that every unitary $2^n\times 2^n$ matrix can be approximated in the operator norm, up to a unit scalar and arbitrary $\epsilon>0,$ by an operator generated from the gate set. The Clifford+$T$ gate set is dense. Moreover, it is well-adapted to quantum error correction, making it an attractive gate set to use {\cite{Nielsen-Chuang}}.

In the last few years, starting with the work of Matsumoto and Amano {\cite{MATS-AMAN}}, tools from algebra and number theory have been applied to the exact synthesis problem. The connection arises from the fact that groups of quantum circuits can often be characterised in purely algebraic terms; e.g., as groups of unitary matrices over a given ring of algebraic numbers. 

The first quantum-algebraic result of interest to us is by Kliuchnikov et al. {\cite{Kliuchnikov-etal}}, who showed that the single qubit Clifford+$T$ group is precisely the group $U_2(\Z[\frac{1}{\sqrt{2}},i])$ of $2\times 2$ unitary matrices with entries in the ring $\Z[\frac{1}{\sqrt{2}},i]$. This result was generalised by Giles and Selinger {\cite{GILES-SEL}}, who showed that the group of $n$-qubit Clifford+$T$ circuits (using at most one \textit{ancilla} or \textit{scratch space qubit}) is precisely the group of $2^n\times 2^n$ unitary matrices with entries from that ring.

The decomposition of a given operator into gates from a finite gate set is not typically unique. In general, it would be desirable to find circuits that are as `small' or `short' as possible, because quantum circuits could then be implemented with fewer resources. This problem is, in general, hard; finding the shortest possible Clifford+$T$ circuit implementing a given operator is likely to be an NP-complete problem. However, we are naturally led to the problem of \textit{circuit simplification}; i.e., given a particular circuit, to find an equivalent shorter one. To that end, it would be useful to have an \textit{equational} characterisation of circuit equivalence, or in other words, to find a set of relations that are sufficient to transform any circuit into any other equivalent circuit.

The single-qubit Clifford+$T$ group already has a known equational representation, which follows from a result of Matsumoto and Amano {\cite{MATS-AMAN}}. The presentation is by the generators $\omega,H,S,T$ and the following equations: \[\begin{array}{ccc}
\omega^8 &=& 1\\
 H^2 &= &1\\
  S^4 &=& 1\\
  SHSHSH&=&\omega\\
  \omega H&=&H\omega\\
  \omega T&=&T\omega\\
  \omega S&=&S\omega\\
  T^2 &=& S\\
HSSHT\omega &=& THSSHS
\end{array}
\]
Here, the symbols $H,S,T$ are interpreted by the corresponding $2\times 2$ matrices shown on p.~\pageref{page-intro}, and the symbol $\omega$ is interpreted as the $2\times 2$ matrix $\omega I$.

In this thesis, we consider the next case (i.e., the case of two qubits), consisting of the group $U_4(\Z[\frac{1}{\sqrt{2}},i]).$ Our main result is a presentation of this group in terms of generators and relations. We aim to make the work as self-contained as possible. Where results from the literature are cited without proof, they are usually only provided for motivation and context. All essential tools are developed \textit{in situ}.

\chapter{Generators for
  \texorpdfstring{$\boldsymbol{U_n(\D[\omega])}$}{U^^e2^^82^^99(^^e2^^85^^85[^^cf^^89])}}

\section{Some Rings}

Throughout, we use the physics notations $x^\dagger$ to denote the
complex conjugate of complex number $x$, and $\boldsymbol{v}^\dagger$
to denote the adjoint of a vector.

\begin{definition}
\label{RingsDef}
Let $\omega = \frac{1+i}{\sqrt{2}} = e^{i\pi/4}.$ Note that $\omega$ is an eighth root of unity, and so in particular, the relation $\omega^8 = 1$ listed in the introduction is evident.

Let $\D$ be the ring of \textbf{dyadic fractions}; i.e., $\D = \Z[\frac{1}{2}] = \{\frac{a}{2^n}\:a\in\Z,n\in\N_0\}$. From these we construct two essential rings:
\begin{enumerate}
\item$\D[\omega] = \{a\omega^3+b\omega^2+c\omega+d\:a,b,c,d\in\D\}.$
\item$\Z[\omega] = \{a\omega^3+b\omega^2+c\omega+d\:a,b,c,d\in\Z\}.$
\end{enumerate}
\end{definition}

Note that, since $\frac12=(\frac1{\sqrt2})^2$ and $\omega=\frac{1+i}{\sqrt{2}}$ are elements of $\Z[\frac{1}{\sqrt{2}},i]$, and conversely, $\frac{1}{\sqrt{2}}=\frac{\omega-\omega^3}{2}$ and $i=\omega^2$ are elements of $\D[\omega]$, it follows that $\D[\omega] = \Z[\frac{1}{\sqrt{2}},i].$ Henceforth we shall use the less-cumbersome $\D[\omega]$ notation to refer to this ring. 

$\Z[\omega]$ is called the ring of \textit{cyclotomic integers of degree 8}. It is well-known that $\Z[\omega]$ is a Unique Factorization Domain {\cite{Washington}}. In particular, the notion of {\em divisibility} makes sense in this ring; we write $a\divides b$ if there exists some $c\in\Z[\omega]$ with $ac=b$, and $b\equiv c\pmod a$ of $a\divides(c-b)$.

\begin{definition}
\label{U2nDef}
Let $U_{n}(\D[\omega])$ be the group of all $n\times n$ unitary matrices with elements from $\D[\omega].$
\end{definition}

\begin{example}
\label{IntroExample}
The operators listed on p.~\pageref{page-intro}
are elements of $U_1(\D[\omega])$, $U_2(\D[\omega])$, and
$U_4(\D[\omega])$, according to their respective dimensions.
\end{example}

\section{One- and Two-Level Operators}

To find suitable generators for the group $U_n(\D[\omega]),$ we find it convenient to make use of the concepts of one- and two-level operators. Thus we make

\begin{definition}~\label{2LvlDef}
Let $U = \begin{bmatrix}
x_{1,1} \, x_{1,2} \\
x_{2,1} \, x_{2,2}
\end{bmatrix}$ and $1\leq\alpha<\beta\leq n$. Then \[
U_{[\alpha,\beta]} = \raisebox{1ex}{$\begin{blockarray}{cccccccc}
  &&\matindex{\cdots}&\matindex{\alpha}&\matindex{\cdots}&\matindex{\beta}&\matindex{\cdots}& \\
  \begin{block}{c[c@{}c|c|c|c|c@{}c]}
    \matindex{\vdots}&&I&0&0&0&0 \\ \cline{3-7}
    \matindex{\alpha}&&0&x_{1,1}&0&x_{1,2}&0\\ \cline{3-7}
    \matindex{\vdots}&&0&0&I&0&0\\ \cline{3-7}
    \matindex{\beta}&&0&x_{2,1}&0&x_{2,2}&0 \\ \cline{3-7}
    \matindex{\vdots}&&0&0&0&0&I \\
  \end{block}
\end{blockarray}$}.
\]
We call $U_{[\alpha,\beta]}$ a \textbf{two-level operator} which applies the operator $U$ to rows (or columns) $\alpha$ and $\beta$ of an $n\times n$ matrix.
\end{definition}

\begin{definition}
\label{1LvlDef}
A \textbf{one-level operator} is defined analogously to the above. Let $U = \left[x_1\right]$ be the $1\times 1$ matrix whose entry is $x_1,$ and let $1\leq\alpha\leq n.$ Then \[
U_{[\alpha]} = \raisebox{1ex}{$\begin{blockarray}{cccccc}
  &&\matindex{\cdots}&\matindex{\alpha}&\matindex{\cdots}& \\
  \begin{block}{c[c@{}c|c|c@{}c]}
    \matindex{\vdots}&&I&0&0 \\ \cline{3-5}
    \matindex{\alpha}&&0&x_{1}&0\\ \cline{3-5}
    \matindex{\vdots}&&0&0&I \\
  \end{block}
\end{blockarray}$}.
\]
\end{definition}

\section{Generators for \texorpdfstring{$U_n(\D[\omega])$}{U^^e2^^82^^99(^^e2^^85^^85[^^cf^^89])}}

\begin{definition}
\label{StarDef}
Let \[
X = \begin{bmatrix}
0&1\\
1&0
\end{bmatrix},\quad H = \Hadamard,\quad\mbox{ and }\omega = \left[e^{i\pi/4}\right].
\]
Consider the set of operators
\begin{equation}
 \{X_{[\alpha,\beta]},H_{[\alpha,\beta]},\omega_{[\alpha]}\mid 1\leq\alpha<\beta\leq n\}
\tag{$*$}
\end{equation}
Note that we have slightly abused the notation here by using the symbol $\omega$ to denote both a scalar and the corresponding $1\times 1$ matrix. The meaning will always be clear from the context.
\end{definition}

In fact, the above set of operators generates the group under consideration. This is established by the following theorem of Giles and Selinger, from \cite{GILES-SEL}.

\begin{theorem}
\label{GSDecompThm}
Let $M$ be an $n\times n$ matrix. Then $M\in U_{n}(\D[\omega])$ if, and only if, $M$ can be decomposed into a product of one- and two-level matrices of type $(*).$
\end{theorem}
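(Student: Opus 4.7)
The backward direction ($\Leftarrow$) is immediate: each of $X_{[\alpha,\beta]}$, $H_{[\alpha,\beta]}$, and $\omega_{[\alpha]}$ is itself a unitary matrix with entries in $\D[\omega]$, and $U_n(\D[\omega])$ is closed under multiplication, so every product of generators from $(*)$ lies in $U_n(\D[\omega])$.

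The forward direction ($\Rightarrow$) I would prove by a constructive reduction algorithm. Given $M \in U_n(\D[\omega])$, the plan is to left-multiply $M$ by a sequence of generators from $(*)$ so as to reduce its first column to $e_1$; by unitarity, the first row is then automatically $e_1^T$, leaving an $(n-1) \times (n-1)$ unitary block in the lower-right corner to which the argument recurses. The base case $n=1$ is handled by a direct computation: any $z \in \D[\omega]$ with $zz^\dagger = 1$ must be an eighth root of unity. (Writing $z = (a + b\omega + c\omega^2 + d\omega^3)/\sqrt{2}^k$ with $a,b,c,d \in \Z$, the equation $zz^\dagger = 1$ reduces to a Diophantine condition in $\Z[\sqrt{2}]$ whose only solutions correspond to the $\omega^j$.)

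The central invariant is the \emph{least denominator exponent} $\lde(x) = \min\{k \geq 0 : \sqrt{2}^k x \in \Z[\omega]\}$, extended to vectors by taking the maximum over entries; the reduction of the first column $v$ proceeds by induction on $\lde(v)$. The main tool is a parity lemma based on the residue field $\Z[\omega]/(\sqrt{2}) \cong \mathbb{F}_2$ (which arises because the prime $2$ is totally ramified in $\Z[\zeta_8]$). Setting $k = \lde(v)$ and $u = \sqrt{2}^k v$, unitarity gives $u \in \Z[\omega]^n$ with $u^\dagger u = 2^k$. Since complex conjugation acts trivially on the residue field (explicitly, $\omega^\dagger = -\omega^3 \equiv \omega \pmod{\sqrt{2}}$), one has $u_i u_i^\dagger \equiv u_i \pmod{\sqrt{2}}$, whence $\sum_i u_i \equiv u^\dagger u \equiv 0 \pmod{\sqrt{2}}$ whenever $k \geq 1$; so the ``odd'' entries of $u$ are even in number, and minimality of $k$ forces at least two of them. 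For a pair of odd entries $u_\alpha, u_\beta$ that additionally agree modulo $2$, both $u_\alpha + u_\beta$ and $u_\alpha - u_\beta$ lie in $2\Z[\omega]$, so applying $H_{[\alpha,\beta]}$ replaces them by entries of strictly smaller $\lde$; repeated pairings---interleaved with $\omega_{[\alpha]}$-corrections used to align residues modulo $2$---thus reduce $\lde(v)$ by at least one. Once $\lde(v)$ has been driven to $0$, the base-case calculation, applied entry-by-entry (together with the constraint $\sum_i v_i v_i^\dagger = 1$ viewed inside $\Z[\sqrt{2}]$), shows that $v = \omega^j e_\alpha$, which is cleared to $e_1$ by $\omega_{[\alpha]}^{-j}$ followed, if $\alpha > 1$, by $X_{[1,\alpha]}$.

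The main obstacle is ensuring that the iterated $H$- and $\omega$-applications actually force $\lde(v)$ to decrease: naive pairing of odd entries that disagree modulo $2$ can produce new odd residues in the resulting quotients and potentially stall the algorithm. The essential observation is that multiplication by $\omega$ fixes residues modulo $\sqrt{2}$ (since $\omega \equiv 1 \pmod{\sqrt{2}}$) but permutes residues modulo $2$ nontrivially (since $\omega \not\equiv 1 \pmod{2}$); this is precisely what allows the $\omega_{[\alpha]}$ generators to bring any two odd entries into agreement modulo $2$, after which $H_{[\alpha,\beta]}$ successfully pairs them. Carrying out a precise bookkeeping of residues modulo $2$ across the column, and verifying termination, is the most delicate part of the argument.
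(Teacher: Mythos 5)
Your overall strategy --- column-by-column reduction by induction on a denominator exponent, pairing ``odd'' entries, using $\omega$-corrections to align residues, and applying $H$ to reduce --- is exactly the approach the paper takes (following Giles--Selinger). However, there is a genuine arithmetic error running through the forward direction that prevents it from going through: $\sqrt{2}$ is not prime in $\Z[\omega]$. The prime over $2$ is $\delta = 1+\omega$, and $\sqrt{2}$ is $\delta^2$ times a unit (indeed $2 = \delta^4\omega^{-2}\lambda^{-2}$). Consequently $\Z[\omega]/(\sqrt{2}) \cong \Z[\omega]/(\delta^2)$ has \emph{four} elements, not two, and is not a field. This breaks several specific claims: (i) the idempotency $u_i u_i^\dagger \equiv u_i \pmod{\sqrt{2}}$ is false --- e.g.\ $u_i = \delta$ gives $\delta^\dagger\delta = 2 + \sqrt{2} \equiv 0$ while $\delta \not\equiv 0 \pmod{\sqrt{2}}$ --- so the derivation $\sum u_i \equiv u^\dagger u$ collapses; (ii) $\omega - 1 = \delta\cdot(\text{unit})$ is divisible by $\delta$ but not $\delta^2$, so $\omega \not\equiv 1 \pmod{\sqrt{2}}$, contrary to what you assert; (iii) most seriously, the $\omega$-correction step fails: $\omega$ has multiplicative order $4$ modulo $2 = \delta^4\cdot(\text{unit})$ (since $\omega^4 = -1 \equiv 1 \pmod{\delta^4}$), whereas $\Z[\omega]/(\delta^4)$ has $8$ units. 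Concretely, the powers of $\omega$ mod $\delta^4$ only realize the residues $\{1,\,1+\delta,\,1+\delta^2,\,1+\delta+\delta^2+\delta^3\}$, so two odd entries whose ratio is, say, $1+\delta^3 \pmod{\delta^4}$ can never be brought into agreement modulo $2$ by an $\omega_{[\alpha]}$-correction, and your decrement of $\lde$ stalls.

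The paper avoids all of this by working directly with the $\delta$-exponent rather than the $\sqrt{2}$-exponent, and by aligning only modulo $\delta^3$. The key fact it uses (Lemma~\ref{OmegaModLemma}) is that every element of $\Z[\omega]$ congruent to $1$ modulo $\delta$ is congruent to some $\omega^m$ modulo $\delta^3$ --- the powers of $\omega$ \emph{do} cover all four units modulo $\delta^3$. After that alignment, $H_{[\alpha,\beta]}\omega_{[\alpha]}^j$ sends the pair to entries divisible by $\delta$ (Lemma~\ref{RowReductLemma}), so the $\delta$-exponent of the column strictly decreases (Lemma~\ref{ColStepLemma}), and termination follows. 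Your argument can be repaired by replacing $\sqrt{2}$ with $\delta$ throughout, defining ``odd'' as $u_i \not\equiv 0 \pmod \delta$, and aligning mod $\delta^3$ rather than mod $2$; at that point it coincides with the paper's proof.
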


Henceforth we shall refer to $(*)$ as the \textbf{generating set for $U_{n}(\D[\omega])$}. 

Central to the proof of Theorem \ref{GSDecompThm} is an exact synthesis algorithm, which accomplishes the requisite decomposition. In principle, this is quite similar to the process of \textit{Gaussian elimination}, which serves to decompose an invertible matrix into \textit{elementary matrices} (corresponding to elementary row operations). In effect, our generators $(*)$ are the `elementary matrices' that are appropriate for unitary operators over $U_n(\D[\omega]).$ Rather than recapitulate the algorithm from \cite{GILES-SEL}, we will detail a (slightly simpler) modified version in the following section. Our algorithm differs only cosmetically from that of Giles and Selinger; we have modified it to simplify subsequent proofs of interest to the thrust of our current work. Consequently, we shall call our altered version the \textit{modified} Giles-Selinger algorithm. In the process, we will also re-prove Theorem~\ref{GSDecompThm}. Indeed, it
is a consequence of Theorem~\ref{AlgTermThm} below.

\section{The Modified Giles-Selinger Algorithm}

The current work will utilise the core of the Giles-Selinger algorithm to drive at our result. Preliminarily, it is of interest to expose some useful and important elements of $\Z[\omega],$ to increase the reader's familiarity with our calculations in following sections.

\begin{example}
\label{RingsExample} The following are elements of $\Z[\omega]$:
\[\begin{array}{ccc}
\sqrt{2}&=&\omega - \omega^3\\
i&=&\omega^2\\
\lambda &=& 1+\sqrt{2}\\
\delta &=& 1+\omega.
\end{array}
\]
Observe that $\omega,$ $i,$ and $\lambda$ are units in the ring, with $\omega^{-1} = \omega^7,$ $i^{-1} = -i,$ and $\lambda^{-1} = \sqrt{2} - 1.$
\end{example}

Recall that $\Z[\omega]$ is a Euclidean domain, and note that $2$ is not prime in this ring. In fact, $2$ factors into 4 (non-distinct) primes, as we have \[
2 = \delta^4\cdot\omega^{-2}\cdot\lambda^{-2}.
\]
Since $\Z[\omega]$ is a ring extension of $\Z$ of degree 4, it follows that $2$ can have at most four prime factors in $\Z[\omega]$; hence it follows that $\delta$ is prime.

The main difference between $\D[\omega]$ and $\Z[\omega]$ is that the former contains dyadic fractions; i.e., fractions with denominators comprising powers of 2. Therefore, every element $u$ of $\D[\omega]$ can be written in the form \[
u = \frac{u'}{2^\ell},
\]
where $u'\in\Z[\omega]$ and $\ell\geq 0$ is an integer.

However, since 2 is not prime, it is often more natural to consider denominators that are powers of $\delta,$ motivating the following

\begin{definition}
\label{LeastDen2Def}
Let $u\in\D[\omega].$ A non-negative integer $k$ is called a \textbf{$\delta$-exponent} for $u$ if $\delta^ku\in\Z[\omega].$ Note that such a $k$ always exists, because $u=\frac{u'}{2^\ell}$ for some $u'\in\Z[\omega]$, hence $\delta^{4\ell}u=u'\omega^{2\ell}\lambda^{2\ell}\in\Z[\omega]$. The smallest such $k$ is called the \textbf{least $\delta$-exponent.} We also write $\lde(u)$ for the least $\delta$-exponent of $u$. 

We also extend the definition of $\delta$-exponent and least $\delta$-exponent to vectors and matrices in the obvious way. In other words, $k$ is a $\delta$-exponent for a vector (or matrix) if it is a $\delta$-exponent for all of its entries, and the least such $k$ is called the least $\delta$-exponent of the vector (or matrix).
\end{definition}

The algorithm we are about to describe seeks to reduce a given matrix to the identity by cutting down its least $\delta$-exponent. First, we make the

\begin{definition}
\label{DeltaResidueDef}
Note that $\Z[\omega]/(\delta) = \{0,1\}$. Namely, since $\omega\equiv 1\pmod{\delta}$, every element of $\Z[\omega]$ is congruent to an integer modulo $\delta$. Moreover, since $2\equiv 0\pmod{\delta}$, every integer is congruent to $0$ or $1$. Now consider the projection map $\rho_\delta\:\Z[\omega]\rightarrow\Z[\omega]/(\delta).$ We call $\rho_\delta$ the \textbf{delta residue map,} and refer to $\rho_\delta(u)$ as the \textbf{delta residue} of $u.$ We can immediately extend $\rho_\delta$ to vectors componentwise: if \[\boldsymbol{v} = \begin{bmatrix}
v_1\\
\vdots\\
v_n
\end{bmatrix},
\mbox{ then }
\rho_\delta(\boldsymbol{v}) = \begin{bmatrix}
v'_1\\
\vdots\\
v'_n
\end{bmatrix},
\]
where $v'_\iota = \rho_{\delta}(v_\iota)\in\{0,1\}.$
\end{definition}

Now we come to a number of useful results which shall allow us to describe the modified Giles-Selinger Algorithm.

\begin{lemma}
\label{OmegaModLemma}
Let $u\in\Z[\omega]$ such that $u\equiv 1\pmod{\delta}.$ Then $u \equiv \omega^m\pmod{\delta^3},$ for some $m\in\{0,\ldots,3\}.$
\end{lemma}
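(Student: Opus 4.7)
The plan is to exploit the identity $\omega = \delta - 1$, together with the observation that $2 \equiv 0 \pmod{\delta^3}$, which follows from the factorization $2 = \delta^4 \omega^{-2}\lambda^{-2}$ recalled earlier (so $\delta^4 \divides 2$ and a fortiori $\delta^3 \divides 2$). The strategy splits into two steps: (i) enumerate $\omega^m \pmod{\delta^3}$ for $m \in \{0, 1, 2, 3\}$ explicitly, and (ii) show that every $u \in \Z[\omega]$ satisfying $u \equiv 1 \pmod{\delta}$ must have one of the four residues from step (i).

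For step (i), I would compute directly. From $\omega = \delta - 1 = (1+\delta) - 2 \equiv 1 + \delta \pmod{\delta^3}$, taking powers yields
\[
\omega^2 \equiv (1+\delta)^2 = 1 + 2\delta + \delta^2 \equiv 1 + \delta^2 \pmod{\delta^3},
\]
and similarly $\omega^3 \equiv (1+\delta)(1+\delta^2) \equiv 1 + \delta + \delta^2 \pmod{\delta^3}$, the integer multiples of positive powers of $\delta$ collapsing because $2 \equiv 0$. The resulting four residues $1,\ 1+\delta,\ 1+\delta^2,\ 1+\delta+\delta^2$ are pairwise distinct.

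For step (ii), I would iterate the delta residue map. Since $u \equiv 1 \pmod \delta$, write $u = 1 + \delta u_1$ for some $u_1 \in \Z[\omega]$. Then write $u_1 = r_0 + \delta u_2$ with $r_0 = \rho_\delta(u_1) \in \{0,1\}$, and $u_2 = r_1 + \delta u_3$ with $r_1 = \rho_\delta(u_2) \in \{0,1\}$. Substituting back,
\[
u \;=\; 1 + r_0 \delta + r_1 \delta^2 + \delta^3 u_3 \;\equiv\; 1 + r_0 \delta + r_1 \delta^2 \pmod{\delta^3}.
\]
This leaves exactly four possibilities, each of which by step (i) coincides with some $\omega^m$ (indeed $m = r_0 + 2 r_1$, reading off the list from step (i)).

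The only delicate point — not so much an obstacle as a bookkeeping item — is being careful that $2 \equiv 0 \pmod{\delta^3}$. Without it, neither the simplification $\omega \equiv 1 + \delta$ nor the vanishing of the cross terms in $(1+\delta)^2$ and $(1+\delta)(1+\delta^2)$ would go through, and the pigeonhole identification in step (ii) would collapse; with it, the argument is short and essentially mechanical.
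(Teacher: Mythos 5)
Your proof is correct and takes essentially the same approach as the paper: the paper simply asserts the list $\Z[\omega]/(\delta^3) = \{1,\omega,\omega^2,\omega^3,0,1+\omega,1+\omega^2,1+\omega^3\}$ and notes that the residues congruent to $1 \pmod{\delta}$ are precisely the $\omega^m$, whereas you derive the same classification from scratch via the $\delta$-adic expansion $u \equiv 1 + r_0\delta + r_1\delta^2 \pmod{\delta^3}$ and then match it against explicit computations of $\omega^m \equiv (1+\delta)^m \pmod{\delta^3}$. Your version is more self-contained and mechanical, but the underlying idea --- enumerate the quotient $\Z[\omega]/(\delta^3)$ and pick out the residues that are $1$ modulo $\delta$ --- is the same.
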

\begin{proof}
We can see this from the fact that \[\Z[\omega]/(\delta^3) = \{1,\omega,\omega^2,\omega^3,0,1+\omega,1+\omega^2,1+\omega^3\}.\]
 Notice that the first four entries are equivalent to $1$ modulo $\delta,$ while the other entries are equivalent to $0$ modulo $\delta.$
\end{proof}
\begin{lemma}
\label{RowReductLemma}
Let 
\[\boldsymbol{u} = \left[\begin{array}{c}
u_1\\
u_2\\
\end{array}\right],\] where $u_1,u_2\in\Z[\omega]$ and  $u_1,u_2\equiv 1\mbox{ (mod }\delta).$ Then there exists $j\in\{0,\ldots,3\}$ such that \[H_{[1,2]}\omega_{[1]}^j\left[\begin{array}{c}
u_1\\
u_2\\
\end{array}\right] = \left[\begin{array}{c}
u'_1\\
u'_2\\
\end{array}\right],\] where $u'_1,u'_2\in\Z[\omega]$ and $u
'_1,u'_2\equiv 0\mbox{ (mod }\delta).$
\end{lemma}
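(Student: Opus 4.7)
The plan is to compute $H_{[1,2]}\omega_{[1]}^{j}[u_{1},u_{2}]^{T}$ explicitly and recast the dual conclusions ``$u'_{\iota}\in\Z[\omega]$'' and ``$u'_{\iota}\equiv 0\pmod{\delta}$'' as a single divisibility condition. A direct calculation gives
\[
H_{[1,2]}\omega_{[1]}^{j}\begin{bmatrix}u_{1}\\ u_{2}\end{bmatrix}
=\frac{1}{\sqrt{2}}\begin{bmatrix}\omega^{j}u_{1}+u_{2}\\ \omega^{j}u_{1}-u_{2}\end{bmatrix}.
\]
Using the factorisation $2=\delta^{4}\omega^{-2}\lambda^{-2}$ displayed just above Definition~\ref{LeastDen2Def}, one checks $\sqrt{2}=\delta^{2}\omega^{-1}\lambda^{-1}$, so $\sqrt{2}$ is an associate of $\delta^{2}$ in $\Z[\omega]$. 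The entire lemma therefore collapses to producing a $j\in\{0,\ldots,3\}$ for which both $\omega^{j}u_{1}+u_{2}\equiv 0$ and $\omega^{j}u_{1}-u_{2}\equiv 0\pmod{\delta^{3}}$.

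The key observation I would exploit is that $\delta^{4}\divides 2$, hence $2\equiv 0\pmod{\delta^{3}}$, which has two useful consequences. First, the two target congruences differ by $2u_{2}\equiv 0$, so they are equivalent, and it suffices to arrange just one of them. Second, $\omega^{4}=-1\equiv 1\pmod{\delta^{3}}$, so $\omega$ has order dividing $4$ in $\Z[\omega]/(\delta^{3})$; comparing with the eight elements enumerated in Lemma~\ref{OmegaModLemma}, the four powers $1,\omega,\omega^{2},\omega^{3}$ are exactly the residues $\equiv 1\pmod{\delta}$, and they are pairwise distinct mod $\delta^{3}$.

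I would then invoke Lemma~\ref{OmegaModLemma} to write $u_{1}\equiv\omega^{m_{1}}$ and $u_{2}\equiv\omega^{m_{2}}\pmod{\delta^{3}}$ for unique $m_{1},m_{2}\in\{0,\ldots,3\}$, and pick the unique $j\in\{0,\ldots,3\}$ with $j\equiv m_{2}-m_{1}\pmod{4}$. Then $\omega^{j}u_{1}\equiv\omega^{m_{2}}\equiv u_{2}\pmod{\delta^{3}}$, so $\omega^{j}u_{1}-u_{2}\equiv 0\pmod{\delta^{3}}$, and automatically $\omega^{j}u_{1}+u_{2}\equiv 2u_{2}\equiv 0\pmod{\delta^{3}}$. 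Dividing through by $\sqrt{2}$, which is an associate of $\delta^{2}$, lands the two entries in $\delta\,\Z[\omega]$, as required.

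The step I expect to be the main obstacle is identifying the correct modulus: working naively modulo $\delta$ or $\delta^{2}$ is not enough, because the factor $\sqrt{2}$ in $H$ absorbs two powers of $\delta$, and the target conclusion $\delta\divides u'_{\iota}$ demands a third. Once one notices that $2$ already vanishes at this level, the sign asymmetry between the two rows of $H$ evaporates and the problem reduces to a single, routinely solvable congruence in the cyclic group $\langle\omega\rangle\subset\Z[\omega]/(\delta^{3})$.
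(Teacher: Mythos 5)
Your proof is correct and follows essentially the same route as the paper: both invoke Lemma~\ref{OmegaModLemma} to write $u_1\equiv\omega^{m_1}$ and $u_2\equiv\omega^{m_2}\pmod{\delta^3}$, choose $j\equiv m_2-m_1\pmod 4$, and use $\sqrt2=\delta^2\omega^{-1}\lambda^{-1}$ to extract the required factor of $\delta$ from $\frac{1}{\sqrt2}\bigl(\omega^j u_1\pm u_2\bigr)$. The one stylistic difference is that you first observe the two row conditions are congruent modulo $\delta^3$ (since $\delta^3\divides 2$) and thereby reduce to a single congruence, whereas the paper simply carries both entries through the explicit computation; the underlying facts used are identical.
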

\begin{proof}
Note that, by Lemma \ref{OmegaModLemma}, $u_1\equiv \omega^\ell\pmod{\delta^3}$ and $u_2\equiv\omega^m\pmod{\delta^3}$ for some $\ell,m.$ 
Let $j = m - \ell$. Since $m$ and $\ell$ in Lemma~\ref{OmegaModLemma} can be chosen modulo 4, we can assume without loss of generality that $j\in\{0,\ldots,3\}.$ Then \[\begin{array}{ccc}
H_{[1,2]}\omega_{[1]}^j\left[\begin{array}{c}
u_1\\
u_2\\
\end{array}\right]&=& 
H_{[1,2]}\omega_{[1]}^j\left[\begin{array}{c}
\omega^\ell + \delta^3a\\
\omega^m + \delta^3b\\
\end{array}\right] \\
&=& H_{[1,2]}\left[\begin{array}{c}
\omega^m + \delta^3a\omega^j\\
\omega^m + \delta^3b\\
\end{array}\right]\\
&=&\frac{1}{\sqrt{2}}\left[\begin{array}{c}
2\omega^m + \delta^3(a\omega^j+b)\\
0+\delta^3(a\omega^j-b)
\end{array}
\right]\\
&=&\left[\begin{array}{c}
\delta(\frac{\delta\omega^m}{\lambda\omega}+\lambda\omega(a\omega^j+b))\\
\delta(\lambda\omega(a\omega^j-b))
\end{array}
\right],
\end{array}
\]
as desired.
\end{proof}

\begin{lemma}
\label{ColStepLemma}
Let $\boldsymbol{v}$ be an $n$-dimensional unit vector with entries in $\D[\omega].$ (That is, $\boldsymbol{v}^{\dagger}\boldsymbol{v} = 1.)$ Let $k = \lde(\boldsymbol{v}).$ If $k > 0,$ then there exists a string of generators $G_1,\ldots,G_\ell$ from $(*)$ such that $\lde(G_\ell G_{\ell-1}\ldots G_1\boldsymbol{v}) < k.$
\end{lemma}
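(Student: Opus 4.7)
The plan is to work in $\Z[\omega]^n$ via the integer-valued vector $\boldsymbol{v}' := \delta^k \boldsymbol{v}$, and to produce generators $G_1,\ldots,G_\ell$ from $(*)$ so that every entry of $G_\ell\cdots G_1\boldsymbol{v}'$ is divisible by $\delta$. Once that is achieved, $G_\ell\cdots G_1\boldsymbol{v}' = \delta^k(G_\ell\cdots G_1\boldsymbol{v})$ will be entrywise divisible by $\delta$, so $\lde(G_\ell\cdots G_1\boldsymbol{v}) \leq k-1 < k$, as required. The whole argument is thus reduced to statements about the $\rho_\delta$-residues of the entries of $\boldsymbol{v}'$.

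The crucial structural fact I would establish first is a parity claim: the number of entries of $\boldsymbol{v}'$ whose $\rho_\delta$-value is $1$ is always even, and by minimality of $k$ it is at least $2$. For the parity, multiply $\boldsymbol{v}^\dagger\boldsymbol{v} = 1$ by $(\delta\delta^\dagger)^k$ to get $\sum_i (v'_i)^\dagger v'_i = (\delta\delta^\dagger)^k$. A direct calculation gives $\delta\delta^\dagger = 2+\sqrt 2$, and since both $2$ and $\sqrt 2 = \omega - \omega^3$ reduce to $0$ modulo $\delta$ (using $\omega\equiv 1\pmod\delta$), the right-hand side is $\equiv 0\pmod\delta$ whenever $k\geq 1$. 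For the left-hand side, I would observe that complex conjugation (sending $\omega\mapsto\omega^{-1}$) is trivial on $\Z[\omega]/(\delta)=\{0,1\}$, so $(v'_i)^\dagger v'_i \equiv \rho_\delta(v'_i)^2 = \rho_\delta(v'_i)\pmod\delta$. Summing gives $\sum_i\rho_\delta(v'_i)\equiv 0\pmod\delta$; since this sum is an integer and the composite $\Z\to\Z[\omega]\to\Z[\omega]/(\delta)$ coincides with reduction modulo $2$, the count of residue-$1$ entries is even. Minimality of $k$ rules out the count being $0$.

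With the parity claim in hand, I would pair off residue-$1$ entries two at a time. For a chosen pair at positions $\alpha<\beta$, use a product of transpositions $X_{[\cdot,\cdot]}$ to swap the pair into positions $1$ and $2$, apply $H_{[1,2]}\omega_{[1]}^j$ with $j$ supplied by Lemma \ref{RowReductLemma} so that both entries acquire residue $0$, then undo the swap. Each such round keeps the vector inside $\Z[\omega]^n$ (trivially for the $X$'s, and by Lemma \ref{RowReductLemma} for the middle step), touches only the two paired positions, and therefore decreases the residue-$1$ count by exactly $2$. Iterating exhausts all residue-$1$ entries in finitely many rounds, yielding the desired sequence $G_1,\ldots,G_\ell$.

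The main obstacle is the parity step itself: one must be careful about how complex conjugation interacts with the quotient $\Z[\omega]/(\delta)$, and about the fact that a $\Z$-integer lies in the ideal $(\delta)$ iff it is even. Once that algebra is settled, the rest of the proof is combinatorial bookkeeping built on the facts that the $X_{[\alpha,\beta]}$ generate the full symmetric group on the $n$ positions and that Lemma \ref{RowReductLemma} delivers the required reduction for any single pair.
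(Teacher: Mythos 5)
Your proof is correct and follows essentially the same route as the paper: multiply by $\delta^k$, use $\boldsymbol{v}^\dagger\boldsymbol{v}=1$ to show the count of residue-$1$ entries is even, and clear them in pairs via Lemma~\ref{RowReductLemma}. The only (harmless) extra step is swapping each pair into positions $1,2$ with $X$ generators before applying $H_{[1,2]}\omega_{[1]}^j$; this is unnecessary since $H_{[\alpha,\beta]}$ and $\omega_{[\alpha]}$ are themselves generators in $(*)$ for any $\alpha<\beta$, so the paper applies the two-level reduction in place.
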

\begin{proof}
Let $\boldsymbol{u} = \delta^k\boldsymbol{v}.$ Then the entries of $\boldsymbol{u}$ reside in $\Z[\omega].$ Let $\boldsymbol{u'} = \rho_\delta(\boldsymbol{u}).$ From $\boldsymbol{v}^\dagger\boldsymbol{v} = 1,$ we get $\boldsymbol{u}^\dagger\boldsymbol{u} = (\delta^\dagger\delta)^k\boldsymbol{v}^\dagger\boldsymbol{v} \equiv (\delta^\dagger\delta)^k\equiv 0\pmod{\delta}.$ Hence we see that $u_1'^\dagger u_1' + \ldots +u_n'^\dagger u_n' \equiv 0.$ Thus there are an even number of $\ay\in\{1,\ldots,n\}$ such that $u_\ay' \equiv 1.$

We pair off the nonzero entries in $\boldsymbol{u}$ thus: let $u_{\ay_1}',\ldots,u_{\ay_{2m}}'$ be the non-zero entries of $\boldsymbol{u},$ such that $\ay_1<\ay_2<\ldots<\ay_{2m},$ and we repeatedly apply Lemma \ref{RowReductLemma} to the pairs $u_{\ay_{2\bee-1}}',u_{\ay_{2\bee}}$ for $\bee\in\{1,\ldots,m\}.$ In this way, we find $G_1,\ldots,G_\ell$ such that \[G_\ell\ldots  G_1\left[\begin{array}{c}
u_1\\
\vdots\\
u_n\\
\end{array}\right] = \left[\begin{array}{c}
z_1\\
\vdots\\
z_n\\
\end{array}\right],\] where $z_1,\ldots,z_n\equiv 0\mbox{ (mod }\delta).$ 
Then $\lde(G_\ell\ldots G_1\boldsymbol{v}) < k.$
\end{proof}

\begin{lemma}
\label{SingleEntryLemma}
Let $\boldsymbol{v}$ be a vector with entries in $\Z[\omega]$ such that $\boldsymbol{v}^{\dagger}\boldsymbol{v} = 1.$ Then\[
\boldsymbol{v} = \left[\begin{array}{c}
0\\
\vdots\\
\omega^\ell\\
\vdots\\
0
\end{array} \right],
\]
where $\ell\in\{1,\ldots,8\}.$
\end{lemma}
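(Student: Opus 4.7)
The plan is to exploit the extra rigidity that comes from the entries living in $\Z[\omega]$ rather than $\D[\omega]$. The key observation is that for $v\in\Z[\omega]$, the quantity $|v|^2=v^\dagger v$ is a totally real element of $\Z[\omega]$, hence lives in $\Z[\omega]\cap\R=\Z[\sqrt{2}]$, and in fact its ``rational part'' (the coefficient of $1$ in the $\Z$-basis $\{1,\sqrt{2}\}$) is a non-negative integer that vanishes only when $v=0$. Combined with the constraint $\sum_i |v_i|^2=1$, this will force all but one coordinate to vanish.

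First I would write each entry in the canonical basis as $v_i=a_i+b_i\omega+c_i\omega^2+d_i\omega^3$ with $a_i,b_i,c_i,d_i\in\Z$, and compute $|v_i|^2$ using $\omega^\dagger=-\omega^3$, $(\omega^2)^\dagger=-\omega^2$, $(\omega^3)^\dagger=-\omega$. A direct expansion, together with $\omega-\omega^3=\sqrt{2}$ from Example~\ref{RingsExample}, gives
\[
|v_i|^2 \;=\; \bigl(a_i^2+b_i^2+c_i^2+d_i^2\bigr) \;+\; \sqrt{2}\,\bigl(a_ib_i-a_id_i+b_ic_i+c_id_i\bigr).
\]
Write this as $|v_i|^2=e_i+f_i\sqrt{2}$ with $e_i,f_i\in\Z$ and, crucially, $e_i=a_i^2+b_i^2+c_i^2+d_i^2\geq 0$, with equality iff $v_i=0$.

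Now I would apply $\boldsymbol{v}^\dagger\boldsymbol{v}=1$ and the fact that $\{1,\sqrt{2}\}$ is $\Z$-linearly independent to separate real and $\sqrt{2}$-parts, obtaining $\sum_i e_i=1$ and $\sum_i f_i=0$. Since each $e_i$ is a non-negative integer summing to $1$, exactly one of them (call it $e_\ell$) equals $1$ and the rest equal $0$. The vanishing of $e_j$ for $j\neq\ell$ forces $v_j=0$, while $e_\ell=a_\ell^2+b_\ell^2+c_\ell^2+d_\ell^2=1$ forces exactly one of $a_\ell,b_\ell,c_\ell,d_\ell$ to be $\pm 1$ and the rest to be $0$. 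Hence $v_\ell\in\{\pm 1,\pm\omega,\pm\omega^2,\pm\omega^3\}$, and since $\omega^4=-1$, each of these eight values is a power $\omega^\ell$ with $\ell\in\{1,\ldots,8\}$, completing the proof.

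The main obstacle, such as it is, is recognizing the right algebraic identity: one needs to see that $|v|^2$ splits as an integer part (a sum of four squares) plus a $\sqrt{2}$-part. Once that expansion is in hand, everything else follows from uniqueness of representation in $\Z[\sqrt{2}]$ and non-negativity of sums of squares, with no further number-theoretic input required.
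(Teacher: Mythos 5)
Your proposal is correct and follows essentially the same route as the paper's proof: expand each entry in the $\Z$-basis of $\Z[\omega]$, observe that $v_i^\dagger v_i$ has integer part equal to a sum of four squares, and use $\sum_i|v_i|^2=1$ together with the uniqueness of the representation in $\Z[\sqrt{2}]$ to force a single coordinate equal to a power of $\omega$. The only differences are cosmetic (ordering of the basis and explicitly invoking the $\Z$-linear independence of $\{1,\sqrt{2}\}$, which the paper leaves implicit).
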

\begin{proof}
Note that $\boldsymbol{v}^\dagger\boldsymbol{v} = 1 + 0\sqrt{2}.$ Observe that if $x = a\omega^3+b\omega^2+c\omega+d,$ then $x^\dagger x = (a^2+b^2+c^2+d^2)+(ab + bc + cd - ad)\sqrt{2}\in\Z[\sqrt{2}].$ This fact can immediately be generalised to sums of elements in $\Z[\sqrt{2}],$ whence we see that \[
\boldsymbol{v}^\dagger\boldsymbol{v} = v^\dagger_1v_1+\ldots+v^\dagger_nv_n = \sum^n_{j=1}(a_j^2+b_j^2+c_j^2 + d_j^2) + (a_jb_j+b_jc_j+c_jd_j-a_jd_j)\sqrt{2} = 1 + 0\sqrt{2}.
\]
Thus $\sum^n_{j=1}(a_j^2+b_j^2+c_j^2+d_j^2) = 1,$ which implies that precisely one of
\[a_1,\ldots, a_n,b_1,\ldots,b_n,c_1,\ldots,c_n,d_1,\ldots,d_n\] is equal to $\pm 1$, and all other terms are equal to zero. Hence \[
\boldsymbol{v} = \left[\begin{array}{c}
0\\
\vdots\\
\omega^\ell\\
\vdots\\
0
\end{array}\right],
\]for some $\ell\in\{1,\ldots,8\}.$
\end{proof}
\begin{lemma}
\label{ColumnLemma}
Let $\boldsymbol{v}$ be an $n$-dimensional unit vector with entries in $\D[\omega].$ Then there exists a sequence of operators $G_1,\ldots,G_q$ from $(*)$ such that $G_q G_{n-1}\ldots G_1\boldsymbol{v} = e_n$ (i.e., the unit vector with $1$ in the final row and $0$ everywhere else).
\end{lemma}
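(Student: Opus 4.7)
The plan is a straightforward induction on $k = \lde(\boldsymbol{v})$, with the two previous lemmas doing essentially all of the work.

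For the base case $k = 0$, the vector $\boldsymbol{v}$ lies in $\Z[\omega]^n$ and is still a unit vector, so Lemma~\ref{SingleEntryLemma} applies: $\boldsymbol{v}$ has exactly one nonzero entry, sitting in some position $\alpha$ and equal to $\omega^\ell$ for some $\ell\in\{1,\ldots,8\}$. I would first apply $X_{[\alpha,n]}$ (if $\alpha<n$) to move this entry into the final row, so that the vector becomes $\omega^\ell e_n$, and then apply $\omega_{[n]}^{8-\ell}$ to turn $\omega^\ell$ into $1$. This uses only generators from $(*)$ and produces $e_n$.

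For the inductive step $k>0$, I would invoke Lemma~\ref{ColStepLemma} to obtain generators $G_1,\ldots,G_\ell$ from $(*)$ with $\lde(G_\ell\cdots G_1\boldsymbol{v}) < k$. The vector $\boldsymbol{v}' = G_\ell\cdots G_1\boldsymbol{v}$ is again a unit vector (since the generators in $(*)$ are unitary), so the induction hypothesis yields further generators $G_{\ell+1},\ldots,G_q$ with $G_q\cdots G_{\ell+1}\boldsymbol{v}' = e_n$. Concatenating gives $G_q\cdots G_1\boldsymbol{v} = e_n$, as required.

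The only potentially delicate points are bookkeeping: one needs to verify that the generators produced by Lemma~\ref{ColStepLemma} are indeed members of $(*)$ (which they are, as that lemma's proof constructs them only from $H_{[\alpha,\beta]}$ and $\omega_{[\alpha]}$ via Lemma~\ref{RowReductLemma}), and that unitarity of $\boldsymbol{v}$ is preserved at each stage so that both Lemmas~\ref{ColStepLemma} and~\ref{SingleEntryLemma} remain applicable. Neither point is a real obstacle; the induction on $\lde$ terminates in finitely many steps precisely because each application of Lemma~\ref{ColStepLemma} strictly decreases a nonnegative integer invariant.
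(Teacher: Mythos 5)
Your proof is correct and follows essentially the same approach as the paper's: induction on $k=\lde(\boldsymbol{v})$, using Lemma~\ref{ColStepLemma} for the inductive step and Lemma~\ref{SingleEntryLemma} together with an $X$ swap and an $\omega$ rotation for the base case. The bookkeeping remarks you add about membership in $(*)$ and preservation of unitarity are reasonable and not in conflict with anything in the paper.
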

\begin{proof}
We proceed by induction on $k = \lde(\boldsymbol{v})$. Suppose $k > 0.$ Then by Lemma \ref{ColStepLemma}, there exist generators $G_1,\ldots,G_\ell$ such that $\lde(G_\ell\ldots G_1\boldsymbol{v}) < k,$ whence by the induction hypothesis there exist generators $G_{\ell+1},\ldots,G_q$ such that \[G_q\ldots G_{\ell+1}G_\ell\ldots G_1\boldsymbol{v} = e_n.\]
Now suppose $k = 0.$ By Lemma \ref{SingleEntryLemma}, we see that 
\[ \boldsymbol{v} = \left[\begin{array}{c}
0\\
\vdots\\
\omega^\ell\\
\vdots\\
0
\end{array}\right]
\]
for some $\ell\in\{1,\ldots ,8\}.$  Let $s$ be the index of the non-zero entry. Thus, if $s < n,$ $\omega^{8-\ell}_{[n]}X_{[s,n]}\boldsymbol{v} = e_n.$ If $s = n,$ then $\omega^{8-\ell}_{[n]}\boldsymbol{v} = e_n.$
\end{proof}
The preceding lemmas induce an algorithm which allows us to take an arbitrary element in $U_n(\D[\omega])$ and transform it to the identity operator in a finite number of steps. This is done one column at a time for a given matrix, beginning with the rightmost column and proceeding until it reduces its input to the identity. 

We now give an explicit presentation of the modified Giles-Selinger algorithm.

\begin{algorithm}
\label{GSAlg}
INPUT: A unitary $n\times n$ matrix $U$ with entries in $\D[\omega].$

OUTPUT: A sequence $G_\ell,\ldots ,G_1$ from $(*)$ such that $G_\ell\ldots G_1U = I.$

\begin{enumerate}\setcounter{enumi}{-1}
\item Let $M\leftarrow U$.
\item If $M = I$, stop.
\item Let $\jay$ be the greatest integer such that $Me_{\jay}\neq e_{\jay},$ and let $\boldsymbol{v} = Me_{\jay}.$ Let $k = \lde(\boldsymbol{v})$ and $\boldsymbol{u} = \delta^k\boldsymbol{v}.$
\begin{enumerate}[(a)]
\item CASE $k = 0.$ Then, by Lemma \ref{SingleEntryLemma}, we know that $\boldsymbol{v} = \omega^me_{\ell}$ for some $m\in\{1,\ldots ,8\}$ and $\ell\in\{1,\ldots ,n\}.$ Moreover, $\ell\leq\jay,$ since $M$ is unitary and $Me_{\iota} = e_{\iota}$ for all $\iota>\jay,$ so each entry of $\boldsymbol{v}$ beyond $\jay$ must be equal to zero.
\begin{itemize}
\item If $\ell = \jay,$ let $\vec{G} = \omega^{8-m}_{[\jay]}$.

\item If $\ell\neq\jay,$ let $\vec{G} = \omega^{8-m}_{[\jay]}X_{[\ell,\jay]}.$
\end{itemize}
\item CASE $k > 0:$ Let $\iota,\ell$ be the first two indices such that $u_\iota,u_\ell\equiv 1\pmod{\delta}$ and $\iota < \ell.$ Then, by Lemma \ref{OmegaModLemma}, $u_\iota\equiv\omega^m\pmod{\delta^3}$ and $u_\ell\equiv\omega^q\pmod{\delta^q}$ for some $m,q\in\{0,\ldots ,3\}.$ Let $z = (q - m)\pmod{4},$ and let $\vec{G} = H_{[\iota,\ell]}\omega^{z}_{[\iota]}.$
\end{enumerate}
\item Output $\vec{G}.$ Let $M\leftarrow\vec{G}M.$ Go to step 1.
\end{enumerate}
\end{algorithm}

\begin{definition}
\label{SyllableDef}
Note that step 3 of Algorithm \ref{GSAlg} repeatedly outputs a short sequence of generators of the form $\omega^m_{[\jay]}, \omega^m_{[\jay]}X_{[\ell,\jay]},$ or $H_{[\iota,\jay]}\omega^m_{[\iota]}.$ We refer to these sequences as \textbf{syllables}.
\end{definition}

\begin{theorem}
\label{AlgTermThm}
Algorithm \ref{GSAlg} outputs a finite sequence of syllables $G_1,\ldots ,G_\ell$ such that $G_\ell\ldots G_1 U = I;$ or, equivalently, $G_1^{-1}\ldots G_\ell^{-1} = U.$
\end{theorem}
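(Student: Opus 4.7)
The plan is to argue termination and correctness simultaneously, treating the algorithm's main loop as a progression of two nested reductions. Correctness is largely bookkeeping: step 3 maintains the invariant that after $\ell$ passes through step 3 outputting $G_1,\dots,G_\ell$, the current matrix equals $G_\ell\cdots G_1 U$. When the algorithm halts via step 1 at $M=I$, this gives $G_\ell\cdots G_1 U=I$, which is the desired conclusion. Each syllable output in step 3 matches the shapes listed in Definition~\ref{SyllableDef} by direct inspection of cases (a) and (b), so the output is indeed a sequence of syllables.

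Before tackling termination, I would establish a loop invariant: at the start of every iteration, columns $j{+}1,\dots,n$ of $M$ are equal to $e_{j+1},\dots,e_n$, where $j$ is the index chosen in step 2. This holds initially (vacuously, for $j=n$) and is preserved because every syllable is a composition of operators of the form $\omega^m_{[\iota]}$, $X_{[\iota,\ell]}$, $H_{[\iota,\ell]}$ with all indices $\leq j$; such operators act as the identity on rows greater than $j$ and fix the zero column-entries in rows $\leq j$ of $e_{j+1},\dots,e_n$. A consequence (which I would also record) is that $Me_j$ has zeros in positions $>j$, since columns $j{+}1,\dots,n$ together with unitarity force the trailing $(n-j)\times (n-j)$ block of $M$ to be the identity and the rest of rows $j{+}1,\dots,n$ to be zero.

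For termination I would use the well-founded lexicographic measure $\mu(M)=(j(M),k(M),m(M))$, where $j(M)$ is the index chosen in step 2 (with $j(I)=0$), $k(M)=\lde(Me_{j(M)})$, and $m(M)$ is the number of entries of $\delta^{k(M)}Me_{j(M)}$ that are $\equiv 1\pmod\delta$ (taken to be $0$ when $j(M)=0$). I would then check case by case that $\mu$ strictly decreases:
\begin{itemize}
\item In case (a), a direct computation using $\omega^8=1$ shows that $\vec G\,\boldsymbol{v}=e_j$, so after updating $M$, column $j$ is fixed and the next iteration's $j$ is strictly smaller (or the algorithm halts).
\item In case (b), Lemma~\ref{RowReductLemma} guarantees that the two modified entries of $\boldsymbol{u}$ at positions $\iota,\ell$ become $\equiv 0\pmod\delta$, while all other entries are unchanged. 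Hence $j$ is unchanged, $k$ is unchanged, and $m$ drops by exactly $2$; if $m$ was already $2$, then every entry of the new $\boldsymbol{u}$ is $\equiv 0\pmod\delta$, so $\lde$ drops by at least one on the next iteration and $k$ strictly decreases.
\end{itemize}
Well-definedness of case (b) is the only subtle point: I would justify that whenever $k>0$ there exist at least two indices with $u_\iota\equiv 1\pmod\delta$. This follows by the calculation in the proof of Lemma~\ref{ColStepLemma}, using $\boldsymbol{v}^\dagger\boldsymbol{v}=1$ to deduce that the count of $\equiv 1\pmod\delta$ entries of $\boldsymbol{u}$ is even, and using minimality of $k=\lde(\boldsymbol{v})$ to exclude the possibility that all entries are $\equiv 0\pmod\delta$.

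The main obstacle is not any individual calculation but rather finding the right three-level measure. Simpler measures fail: $(j,k)$ alone is not strictly decreasing in case (b) because a single syllable only kills one pair of $\equiv 1\pmod\delta$ entries, and Lemma~\ref{RowReductLemma} is local to two rows at a time. Once the secondary counter $m$ is introduced as a tiebreaker, all three cases slot into a clean lexicographic descent, and termination together with the correctness bookkeeping yields $G_\ell\cdots G_1 U=I$, equivalently $U=G_1^{-1}\cdots G_\ell^{-1}$, as desired.
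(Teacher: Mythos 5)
Your proposal is correct and uses essentially the same approach as the paper: the lexicographic triple $\mu(M)=(j,k,m)$ you define is exactly the paper's notion of $\level(M)$ from Definition~\ref{LevelDef} (your ``entries $\equiv 1\pmod\delta$ after scaling by $\delta^k$'' is equivalent to the paper's ``entries with least $\delta$-exponent equal to $k$''), and the paper likewise proves termination by showing the level strictly decreases and correctness by the invariant $M=\vec G\cdot U$. You have merely spelled out the ``easy case distinction'' that the paper leaves to the reader.
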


We shall prove the theorem by induction, which necessitates the following 

\begin{definition}
\label{LevelDef}
Let $M\in U_n(\D[\omega])$. Then the \textbf{level} of $M$ (written $\level(M)$) is a triple $(\jay,\kay,m),$ where:

\begin{itemize}
\item $\jay$ is the greatest index of $\{1,\ldots ,n\}$ such that $Me_{\jay}\neq e_{\jay}.$ If no such index exists, we set $\jay=0;$
\item $\kay$ is the least $\delta$-exponent of $\boldsymbol{v} = Me_{\jay},$ or $0$ if $\jay = 0;$
\item $m$ is the number of entries in $\boldsymbol{v} = Me_{\jay}$ which have least $\delta$-exponent $\kay,$ or $0$ if $\kay = 0.$
\end{itemize}

We say $(\jay,\kay,m) < (\jay',\kay',m')$ if $\jay'<\jay,$ or if $\jay = \jay'$ and $\kay < \kay'$ or if $\jay=\jay'$ and $\kay = \kay'$ and $m < m'$ (i.e., levels have lexicographic order).
\end{definition}

\begin{proof}[Proof of Theorem \ref{AlgTermThm}]
By easy case distinction using Lemmas \ref{ColStepLemma} and \ref{ColumnLemma}, the level of $M$ strictly decreases with each application of steps 2--3. This proves that the algorithm terminates. 

Moreover, after each iteration of the algorithm, we have $M=\vec{G}\cdot U,$ where $\vec{G}$ is the total sequence of syllables output up to that point. Therefore, when the algorithm terminates in step 1, we have $I = \vec{G}\cdot U,$ as claimed.
\end{proof}

\chapter{A Complete Set of Equations for \texorpdfstring{$\boldsymbol{U_4(\D[\omega])}$}{U^^e2^^82^^84(^^e2^^85^^85[^^cf^^89])}}

In the previous chapter, we have shown that every element of $U_n(\D[\omega])$ can be written as a product of generators of the form $(*)$ (see Definition \ref{StarDef}). However, this representation is far from unique; for example, $X_{[1,2]}X_{[2,3]}$ and $X_{[2,3]}X_{[1,3]}$ represent the same operator. The purpose of the current chapter is to give a complete set of relations for these generators for the case $n=4;$ i.e., for the group $U_4(\D[\omega])$.

\section{The Main Result}

As usual, a \textit{word} is a sequence of generators, and two words are \textit{equivalent} if they define the same operator. Of special consideration is the \textit{empty word}, i.e., the word of zero length, denoted $\epsilon.$

\begin{definition}
\label{Equiv1Def}
Let $\vec{G} = G_q\ldots G_1$ be a word (i.e., a sequence of generators from $(*)$). We write $\sem{\vec{G}} = G_q\cdot\ldots \cdot G_1$ for the operator associated with $\vec{G}.$

We say that $\overrightarrow{G}\sim\overrightarrow{H}$ (read: $\overrightarrow{G}$ is \textbf{equivalent} to $\overrightarrow{H}$) if $\sem{\vec{G}} = \sem{\vec{H}}.$
\end{definition}

\begin{definition}
\label{Equiv2Def}
Let $\approx$ be the smallest congruence relation on words satisfying the equations shown in
Table~\ref{tab-equations}.
Here, as usual, a congruence relation is an equivalence relation satisfying \[
\mbox{(cong) }\vec{G}\approx\vec{G'},\vec{H}\approx\vec{H'}\Rightarrow\vec{G}\vec{H}\approx\vec{G'}\vec{H'}.
\]
We say that $\overrightarrow{G}$ is \textbf{equationally equivalent} to $\overrightarrow{H}$ if $\overrightarrow{G}\approx\overrightarrow{H}.$
Note that we do not claim that the equations in Table~\ref{tab-equations} are minimal; it is possible that some of them are consequences of the others.
\end{definition}

\let\mylabel\label

\begin{table}
\[\begin{array}{cccc@{}c}
\myitem\mylabel{ax-a}&\omega^8_{[\jay]}&\approx&\epsilon\\
\myitem\mylabel{ax-g}&H^2_{[\jay,\kay]}&\approx&\epsilon&(\jay<\kay)\\
\myitem\mylabel{ax-i}&X^2_{[\jay,\kay]}&\approx&\epsilon&(\jay<\kay)\\
[2ex]

\myitem\mylabel{ax-w}&\omega_{[\jay]}\omega_{[\kay]}&\approx&\omega_{[\kay]}\omega_{[\jay]}&(\jay\neq\kay)\\
\myitem\mylabel{ax-b}&\omega_{[\ell]}H_{[\jay,\kay]}&\approx&H_{[\jay,\kay]}\omega_{[\ell]} & (\jay<\kay, \ell\neq\jay,\kay)\\
\myitem\mylabel{ax-e}&\omega_{[\ell]}X_{[\jay,\kay]}&\approx&X_{[\jay,\kay]}\omega_{[\ell]}&(\jay<\kay,\ell\neq\jay,\kay)\\
\myitem\mylabel{ax-u}&H_{[\jay,\kay]}H_{[\ell,\tee]}&\approx&H_{[\ell,\tee]}H_{[\jay,\kay]}&(\jay<\kay, \ell<\tee,\{\ell,\tee\}\cap\{\jay,\kay\}=\emptyset)\\
\myitem\mylabel{ax-n}&H_{[\jay,\kay]}X_{[\ell,\tee]}&\approx&X_{[\ell,\tee]}H_{[\jay,\kay]}&(\jay<\kay, \ell<\tee,\{\ell,\tee\}\cap\{\jay,\kay\}=\emptyset)\\
\myitem\mylabel{ax-v}&X_{[\jay,\kay]}X_{[\ell,\tee]}&\approx&X_{[\ell,\tee]}X_{[\jay,\kay]}&(\jay<\kay, \ell<\tee,\{\ell,\tee\}\cap\{\jay,\kay\}=\emptyset)\\
[2ex]

\myitem\mylabel{ax-c}&X_{[\jay,\kay]}\omega_{[\kay]}&\approx&\omega_{[\jay]}X_{[\jay,\kay]}&(\jay<\kay)\\
\myitem\mylabel{ax-d}&X_{[\jay,\kay]}\omega_{[\jay]}&\approx&\omega_{[\kay]}X_{[\jay,\kay]}&(\jay<\kay)\\
\myitem\mylabel{ax-j}&X_{[\jay,\kay]}X_{[\jay,\ell]}&\approx&X_{[\kay,\ell]}X_{[\jay,\kay]}&(\jay<\kay<\ell)\\
\myitem\mylabel{ax-k}&X_{[\jay,\kay]}X_{[\ell,\jay]}&\approx&X_{[\ell,\kay]}X_{[\jay,\kay]}&(\ell<\jay<\kay)\\
\myitem\mylabel{ax-l}&X_{[\jay,\kay]}H_{[\jay,\ell]}&\approx&H_{[\kay,\ell]}X_{[\jay,\kay]}&(\jay<\kay<\ell)\\
\myitem\mylabel{ax-o'}&X_{[\jay,\kay]}H_{[\ell,\jay]}&\approx&H_{[\ell,\kay]}X_{[\jay,\kay]}&(\ell<\jay<\kay)\\
[2ex]

\myitem\mylabel{ax-e'}&\omega_{[\jay]}\omega_{[\kay]}X_{[\jay,\kay]}&\approx&X_{[\jay,\kay]}\omega_{[\jay]}\omega_{[\kay]}&(\jay<\kay)\\
\myitem\mylabel{ax-f'}&\omega_{[\jay]}\omega_{[\kay]}H_{[\jay,\kay]}&\approx&H_{[\jay,\kay]}\omega_{[\jay]}\omega_{[\kay]}&(\jay<\kay)\\
[2ex]

\myitem\mylabel{ax-h}&H_{[\jay,\kay]}X_{[\jay,\kay]}&\approx&\omega^4_{[k]}H_{[\jay,\kay]}&(\jay<\kay)\\
\myitem\mylabel{ax-z4}&H_{[\jay,\kay]}\omega^2_{[\jay]}H_{[\jay,\kay]}&\approx& \omega^6_{[\jay]}H_{[\jay,\kay]} \omega^3_{[\jay]}\omega^5_{[\kay]}&(\jay<\kay)\\
\myitem\mylabel{ax-z5}& H_{[\jay,\kay]}H_{[\ell,\tee]} H_{[\jay,\ell]}H_{[\kay,\tee]}&\approx & H_{[\jay,\ell]}H_{[\kay,\tee]}H_{[\jay,\kay]}H_{[\ell,\tee]} &(\jay<\kay<\ell<\tee)\\
\end{array}
\]
\caption[Equations for the group.]{Equations for $U_n(\D[\omega]),$ where $\jay,\kay,\ell,\tee\in\{1,\ldots ,n\}.$ 
  The equations in the first group give the order of each generator. The
  equations in the second group state that non-overlapping generators commute. 
  The equations in the third group state that $X$ can be used to permute
  indices.
  The equations in the fourth group capture the fact that 
  the matrix
  $\scriptsize\protect\begin{bmatrix}\omega~0\protect\\0~\omega\protect\end{bmatrix}$
  is a scalar matrix, hence commutes with every other $2\times 2$ matrix. Finally, the
  equations in the last group say something non-trivial about the generators.
}
\label{tab-equations}
\end{table}

This brings us to the statement of the most important theorem of this work, which the rest of the thesis is devoted to establishing.

\begin{theorem}
\label{MainThm}
Let $\vec{G}$ and $\vec{H}$ be words over the generating set $(*)$ for $U_4(\D[\omega])$. Then $\vec{G}\approx\vec{H}$ if, and only if, $\vec{G}\sim\vec{H}.$
\end{theorem}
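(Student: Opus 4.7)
The soundness direction, $\vec{G}\approx\vec{H}\Rightarrow \vec{G}\sim\vec{H}$, is routine: I would interpret each of the 22 equations in Table~\ref{tab-equations} as an identity of $n\times n$ matrices over $\D[\omega]$ and verify it by direct computation using the block structure of one- and two-level operators from Definitions~\ref{2LvlDef} and~\ref{1LvlDef}. Because matrix multiplication respects equality, $\sim$ is automatically a congruence on words, so once the axioms themselves are verified we have $\approx\,\subseteq\,\sim$.

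For the harder completeness direction, $\vec{G}\sim\vec{H}\Rightarrow\vec{G}\approx\vec{H}$, the plan is to exhibit a canonical form. By Theorem~\ref{AlgTermThm}, Algorithm~\ref{GSAlg} applied to any $U\in U_4(\D[\omega])$ deterministically outputs a sequence of syllables $\vec{S}_1,\ldots,\vec{S}_q$ with $U=\vec{S}_1^{-1}\cdots\vec{S}_q^{-1}$; writing each syllable inverse as an explicit sequence of generators (e.g.\ $(H_{[\iota,\jay]}\omega_{[\iota]}^z)^{-1}$ becomes $\omega_{[\iota]}^{8-z}H_{[\iota,\jay]}$) gives a distinguished word $N(U)$ over $(*)$. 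Completeness then reduces to the single claim
\begin{center}
(C) \quad $\vec{W}\approx N(\sem{\vec{W}})$ for every word $\vec{W}$ over $(*)$,
\end{center}
because then $\vec{G}\sim\vec{H}$ forces $\vec{G}\approx N(\sem{\vec{G}}) = N(\sem{\vec{H}})\approx\vec{H}$.

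I would prove (C) by induction on the length of $\vec{W}$. The empty word is trivial since $N(I)=\epsilon$. For the inductive step $\vec{W}=G\cdot\vec{W'}$ with $G$ a single generator, the induction hypothesis gives $\vec{W'}\approx N(U')$ where $U'=\sem{\vec{W'}}$, so the whole burden falls on the key lemma
\[
G\cdot N(U')\;\approx\;N(G\cdot U')\qquad\text{for every generator }G\text{ of }(*)\text{ and every }U'\in U_4(\D[\omega]).
\]
This is in turn proved by secondary induction on the level of $U'$ (Definition~\ref{LevelDef}). Inspecting the leftmost syllable $\vec{S}^{-1}$ of $N(U')$, the tactic is either (i) to push $G$ past $\vec{S}^{-1}$ using the commutation axioms \ref{ax-w}--\ref{ax-v} together with the index-shifting axioms \ref{ax-c}--\ref{ax-o'}, thereby reducing to $G\cdot N(U'')$ for the strictly lower-level $U''=\vec{S}\cdot U'$, or (ii) to rewrite $G\cdot\vec{S}^{-1}$ into $\vec{T}^{-1}\cdot G''$, where $\vec{T}$ is the first syllable that Algorithm~\ref{GSAlg} outputs on $G\cdot U'$ and $G''$ is a short residue which is handled by the inductive hypothesis on level. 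The collision-style rewrites in (ii) are discharged by the order axioms \ref{ax-a}--\ref{ax-i}, the scalar-commutation axioms \ref{ax-e'}--\ref{ax-f'}, and the three nontrivial identities \ref{ax-h}, \ref{ax-z4}, and \ref{ax-z5}.

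The main obstacle will be the key lemma. Although $n=4$ keeps the number of index configurations finite, a syllable already uses up to three of the four available indices and the prepended generator contributes up to two more, so the case analysis on index overlap is extensive, and it branches further on the shapes of $G$ (type $\omega$, $X$, or $H$) and of the leftmost syllable ($\omega^m_{[\jay]}$, $\omega^m_{[\jay]}X_{[\ell,\jay]}$, or $H_{[\iota,\jay]}\omega^m_{[\iota]}$). The genuine algebraic content of the theorem---the reason that precisely these axioms suffice---will be concentrated in the few cases where $G$ and $\vec{S}$ share indices but do not merely collapse: I expect axiom \ref{ax-z4} to be needed to renormalize a same-pair $H\omega^2 H$ collision, and the four-index axiom \ref{ax-z5} to be needed when the algorithm's choice of index-pairing on $U'$ and on $GU'$ diverges and two $H$-syllables on crossed pairs must be swapped. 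The remaining cases should reduce to bookkeeping with the commutation and order axioms.
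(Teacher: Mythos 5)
Your architecture coincides with the paper's: both define a normal form via the modified Giles-Selinger algorithm (your $N(U)$ is the inverse word of the paper's normal sequence for $U^{-1}$), reduce completeness to a one-generator normalisation lemma, and prove that lemma by a secondary induction on level, diagonalising a prepended generator against the first syllable of the normal form. This matches the paper's Lemma~\ref{NormalCommuteLemma} and, behind it, the commuting-square Main Lemma~\ref{TechLemma}. The paper additionally first passes to basic generators via Lemma~\ref{BasicGenLemma}, which shrinks the case split on $G$ from all of $(*)$ to five letters and is well worth doing; but that is an organisational point, not a mathematical one.

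The gap is in your assessment of the key lemma. You expect all nontrivial collisions to be discharged by axioms~\ref{ax-h}, \ref{ax-z4}, \ref{ax-z5} plus bookkeeping, and you attribute the $n=4$ restriction merely to the finiteness of the index configurations; but every fixed $n$ has finitely many configurations, so that cannot be the reason. In the paper's proof of Lemma~\ref{TechLemma}, the case $G=X_{[2,3]}$ with pivot column $\equiv(1,1,1,1)\pmod{\delta}$ and residue branch $a+b+c+d\equiv1\pmod{\delta}$ (Subcase~\ref{bad-case}) is not a collision that the axioms untangle: the square simply does not close, and no rewrite works. The proof instead shows that this residue pattern can never arise, via a number-theoretic impossibility result (Lemma~\ref{NotExist4DimLemma}), which in turn rests on Lemmas~\ref{AddCongLemma}, \ref{ModDeltaSqLemma}, \ref{RootTwoModTwoLemma}, and~\ref{HandWaveLemma}, the last of which is imported from outside the thesis. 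This is the genuine $n=4$-specific content, and a plan that treats the remaining cases as bookkeeping will stall precisely here. You should also state and verify, as the paper does with the clause $\level(\vec{G'})<\level(s)$ in Lemma~\ref{TechLemma}, that every intermediate state encountered when you rewrite $G\vec{S}^{-1}$ into $\vec{T}^{-1}G''$ has level strictly below that of $U'$; without that, the secondary induction does not close when the residue $G''$ consists of more than one generator.
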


\section{Soundness}
The left-to-right implication of Theorem \ref{MainThm} is called the \textit{soundness} of the equations. It is quite simple, encapsulated by the following

\begin{lemma}[Soundness]
\label{Equiv1Equiv2Lemma1}
Let $\vec{G}$ and $\vec{H}$ be words. Then $\vec{G}\approx\vec{H}$ implies $\vec{G}\sim\vec{H}.$
\end{lemma}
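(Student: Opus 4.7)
The strategy is the standard soundness argument for equational theories: show that $\sim$ is itself a congruence relation on words that validates every one of the axioms in Table~\ref{tab-equations}; since $\approx$ is defined to be the \emph{smallest} such congruence, the inclusion $\approx\,\subseteq\,\sim$ is then immediate. So the plan has two separable parts.

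First I would verify that $\sim$ is a congruence. Reflexivity, symmetry, and transitivity follow at once from the corresponding properties of equality of matrices, since $\vec{G}\sim\vec{H}$ is defined by $\sem{\vec{G}}=\sem{\vec{H}}$. For the congruence property (cong), if $\sem{\vec{G}}=\sem{\vec{G'}}$ and $\sem{\vec{H}}=\sem{\vec{H'}}$, then $\sem{\vec{G}\vec{H}}=\sem{\vec{G}}\cdot\sem{\vec{H}}=\sem{\vec{G'}}\cdot\sem{\vec{H'}}=\sem{\vec{G'}\vec{H'}}$ by Definition~\ref{Equiv1Def}.

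Second I would check that each of the equations \ref{ax-a}--\ref{ax-z5} holds when read as a matrix identity. The equations in the first three groups (orders, disjoint-support commutations, $X$-conjugations, and the scalar-commutation pair \ref{ax-e'}, \ref{ax-f'}) are routine: either the two sides act as the identity on disjoint pairs of rows/columns so they commute trivially, or they reduce to a direct $2\times 2$ (respectively $3\times 3$) computation after noting that $U_{[\alpha,\beta]}$ only touches rows $\alpha,\beta$. For the $X$-conjugation axioms \ref{ax-c}, \ref{ax-d}, \ref{ax-j}--\ref{ax-o'}, I would observe once and for all that conjugation by $X_{[\jay,\kay]}$ implements the transposition of indices $\jay$ and $\kay$, and then each axiom is immediate from tracking which indices get swapped.

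The only axioms that require genuine calculation are those in the last group, and this is where I expect the main obstacle to lie. Axiom \ref{ax-h} is a one-line $2\times 2$ check: $HX=\begin{bmatrix}1&-1\\1&1\end{bmatrix}/\sqrt{2}=\omega^4\cdot\frac{1}{\sqrt2}\begin{bmatrix}1&-1\\1&1\end{bmatrix}\cdot(\text{fix})$ — I would just multiply out $H\cdot X$ and $\omega^4 I\cdot H$ on rows $\jay,\kay$ and compare. Axiom \ref{ax-z4} is similar but more intricate: both sides are $2\times 2$ matrices in the $(\jay,\kay)$ block (after noting the $\omega_{[\kay]}^5$ factor) and can be computed directly; the key identity in the background is that $H\,\mathrm{diag}(i,1)\,H$ equals $\mathrm{diag}(1,i)$ up to the appropriate scalar $\omega$-phases, and one just has to distribute the phases carefully. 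Axiom \ref{ax-z5} is the most elaborate because it involves four indices and thus acts non-trivially on a $4\times 4$ block; I would handle it by picking coordinates so $(\jay,\kay,\ell,\tee)=(1,2,3,4)$, computing the two $4\times 4$ products explicitly (both sides are scaled by $1/4$), and checking entrywise equality. Once every axiom is so verified, the theorem follows by induction on the derivation of $\vec{G}\approx\vec{H}$.
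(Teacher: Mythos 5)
Your proposal is correct and follows essentially the same route as the paper: since $\sim$ is a congruence relation (which is immediate from Definition~\ref{Equiv1Def}) and $\approx$ is the smallest congruence satisfying the equations of Table~\ref{tab-equations}, soundness reduces to verifying each equation as a matrix identity, which the paper likewise dispatches by direct calculation. Your expanded sketch of those verifications (disjoint-support commutation, $X$-conjugation as index transposition, explicit $2\times 2$ and $4\times 4$ checks for the last group) is just a more detailed account of the same one-line argument, modulo the small sign slip in your displayed $HX$ matrix, which your stated procedure would catch anyway.
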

\begin{proof}
Since $\sim$ is a congruence relation, it suffices to show that it satisfies each of the equations in Table~\ref{tab-equations}. This can be verified by direct calculation of each operation.
\end{proof}

\section{Completeness}

The right-to-left implication of Theorem \ref{MainThm} is called \textit{completeness}. Our general approach to proving completeness is as follows: we will define a \textit{normal form}; i.e., a choice of a unique representative of each equivalence class of words. Then we shall exhibit completeness by showing that each word can be converted to normal form using only the given equations. In fact, we will define the normal form of an operator to be precisely the sequence of syllables output by the modified Giles-Selinger algorithm.

Many parts of subsequent proofs work, \textit{mutatis mutandis}, for arbitrary $n;$ there is only one case in the proof of Lemma \ref{TechLemma} which requires specialisation to $n = 4.$ To enable future generalisation, we retain full generality in our methods of proof whenever possible, and specialise only where required. 

We start with a lemma providing some additional equations that will be
useful in the proof below, but are consequences of the equations from
Table~\ref{tab-equations}.

\begin{table}
\[\begin{array}{cccc@{}c}
\myitem\mylabel{ax-p}&H_{[\jay,\ell]}X_{[\jay,\kay]}&\approx&X_{[\jay,\kay]}H_{[\kay,\ell]}&(\jay<\kay<\ell)\\
\myitem\mylabel{ax-o}&H_{[\ell,\jay]}X_{[\jay,\kay]}&\approx&X_{[\jay,\kay]}H_{[\ell,\kay]}&(\ell<\jay<\kay)\\
\myitem\mylabel{ax-m}&X_{[\jay,\kay]}H_{[\jay,\kay]}&\approx&H_{[\jay,\kay]}\omega^4_{[k]}&(\jay<\kay)\\
[2ex]
\myitem\mylabel{ax-f}&\omega^2_{[\jay]}H_{[\jay,\kay]} &\approx& \omega^6_{[\kay]}H_{[\jay,\kay]}\omega^2_{[\kay]}\omega^2_{[\jay]}&(\jay<\kay)\\
\myitem\mylabel{ax-q}&X_{[\jay,\ell]}H_{[\kay,\ell]}&\approx&H_{[\jay,\kay]}\omega^4_{[k]}X_{[\jay,\kay]}X_{[\jay,\ell]}&(\jay<\kay<\ell)\\
\myitem\mylabel{ax-r}&H_{[\jay,\kay]}\omega^2_{[\jay]}H_{[\jay,\kay]}&\approx& X_{[\jay,\kay]} \omega^7_{[\kay]}\omega_{[\jay]}H_{[\jay,\kay]} \omega^2_{[\jay]}&(\jay<\kay)\\
\myitem\mylabel{ax-x}&H_{[\jay,\kay]}\omega_{[\jay]}X_{[\jay,\kay]}&\approx&\omega_{[\jay]}\omega^5_{[\kay]}H_{[\jay,\kay]}\omega^7_{[\jay]}&(\jay <\kay)\\
\myitem\mylabel{ax-z_1}&H_{[\jay,\kay]}\omega^3_{[\jay]}X_{[\jay,\kay]}&\approx&\omega^7_{[\jay]}\omega^3_{[\kay]}X_{[\jay,\kay]}H_{[\jay,\kay]}\omega_{[\jay]}&(\jay <\kay)\\
\myitem\mylabel{ax-z_2}&H_{[\jay,\kay]}\omega^2_{[\jay]}X_{[\jay,\kay]}&\approx&\omega^6_{[\jay]}\omega^2_{[\kay]}X_{[\jay,\kay]}H_{[\jay,\kay]}\omega^2_{[\jay]}&(\jay <\kay)\\
\myitem\mylabel{ax-z_3}&H_{[\jay,\kay]}\omega_{[\jay]}X_{[\jay,\kay]}&\approx&\omega^5_{[\jay]}\omega_{[\kay]}X_{[\jay,\kay]}H_{[\jay,\kay]}\omega^3_{[\jay]}&(\jay <\kay)\\
[2ex]
\myitem\mylabel{ax-s}& H_{[3,4]} H_{[1,2]}X_{[2,3]}H_{[1,2]}H_{[3,4]}&\approx & H_{[1,3]}H_{[2,4]}X_{[2,3]} H_{[2,4]} H_{[1,3]}\\
\myitem\mylabel{ax-t}&H_{[3,4]}H_{[1,2]}X_{[2,3]}H_{[1,2]}H_{[3,4]}&\approx&H_{[2,3]}H_{[1,4]}X_{[2,4]}H_{[1,4]}H_{[2,3]}\\
\end{array}
\]
\caption[Derived equations for the group.]{Derived equations for $U_n(\D[\omega]),$ where $\jay,\kay,\ell,\tee\in\{1,\ldots ,n\}.$}
\label{tab-equations2}
\end{table}

\subsection{Some Useful Equations}
\begin{lemma}\label{lem-derived}
  The equations shown in Table~\ref{tab-equations2} are consequences
  of those in Table~\ref{tab-equations}.
\end{lemma}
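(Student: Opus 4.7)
The plan is to derive each equation of Table~\ref{tab-equations2} as a consequence of the axioms of Table~\ref{tab-equations}, organised by structural type. The first three equations (\ref{ax-p}), (\ref{ax-o}), and (\ref{ax-m}) are obtained from (\ref{ax-l}), (\ref{ax-o'}), and (\ref{ax-h}) respectively by a single conjugation step: for (\ref{ax-p}), left-multiplying $X_{[\jay,\kay]}H_{[\jay,\ell]}\approx H_{[\kay,\ell]}X_{[\jay,\kay]}$ by $X_{[\jay,\kay]}$ and cancelling $X_{[\jay,\kay]}^2$ via (\ref{ax-i}) yields $H_{[\jay,\ell]}\approx X_{[\jay,\kay]}H_{[\kay,\ell]}X_{[\jay,\kay]}$, whereupon right-multiplying again by $X_{[\jay,\kay]}$ gives the claim. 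Analogous one-line arguments using (\ref{ax-g}) or (\ref{ax-i}) produce (\ref{ax-m}) from (\ref{ax-h}) and (\ref{ax-o}) from (\ref{ax-o'}).

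The omega-centric equations (\ref{ax-f}), (\ref{ax-r}), (\ref{ax-x}), and (\ref{ax-z_1})--(\ref{ax-z_3}) all rest on three basic moves: (\ref{ax-f'}) passes $\omega_{[\jay]}\omega_{[\kay]}$ through $H_{[\jay,\kay]}$; (\ref{ax-c}) and (\ref{ax-d}) swap the index of a single $\omega$ across $X_{[\jay,\kay]}$; and (\ref{ax-h}) together with the already-derived (\ref{ax-m}) interchange $HX$ and $XH$ up to a factor of $\omega^4_{[\kay]}$. For (\ref{ax-f}), I would left-multiply (\ref{ax-z4}) by $H_{[\jay,\kay]}$, cancel $H^2$ via (\ref{ax-g}), and redistribute the resulting $\omega$'s across the remaining $H_{[\jay,\kay]}$ using (\ref{ax-f'}) until the stated right-hand side emerges. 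Equation (\ref{ax-r}) is a parallel manipulation of (\ref{ax-z4}) that additionally uses (\ref{ax-m}) to introduce the outer $X_{[\jay,\kay]}$. Each of (\ref{ax-x}), (\ref{ax-z_1}), (\ref{ax-z_2}), (\ref{ax-z_3}) has the schematic form $H_{[\jay,\kay]}\omega^m_{[\jay]}X_{[\jay,\kay]}\approx\cdots$ for varying $m$, and I would derive them all by the same three-step recipe: push $\omega^m_{[\jay]}$ past $X_{[\jay,\kay]}$ via (\ref{ax-c}) to re-index it as $\omega^m_{[\kay]}$, apply (\ref{ax-h}) to rewrite $H_{[\jay,\kay]}X_{[\jay,\kay]}$ as $\omega^4_{[\kay]}H_{[\jay,\kay]}$, and finally commute the collected powers of $\omega$ through $H_{[\jay,\kay]}$ via (\ref{ax-f'}) until the right-hand side is reached. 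These derivations are entirely mechanical once the pattern is identified. Equation (\ref{ax-q}) is a short chain of the same kind, combining (\ref{ax-j}), (\ref{ax-h}), and the now-derived (\ref{ax-p}).

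The main technical obstacle is the final pair (\ref{ax-s}) and (\ref{ax-t}), the two genuine four-index identities and the only equations of Table~\ref{tab-equations2} that require the full index set $\{1,2,3,4\}$. Both sides of each have the shape $HHXHH$ but differ in how the outer Hadamards are paired. My plan is to start from (\ref{ax-z5}) instantiated with $(\jay,\kay,\ell,\tee)=(1,2,3,4)$, which already exchanges $H_{[1,2]}H_{[3,4]}H_{[1,3]}H_{[2,4]}$ with $H_{[1,3]}H_{[2,4]}H_{[1,2]}H_{[3,4]}$, and then to propagate the central $X_{[2,3]}$ (respectively $X_{[2,4]}$) inward through the Hadamard stack using the derived (\ref{ax-p}) and (\ref{ax-o}) in combination with the commutation axioms (\ref{ax-u}) and (\ref{ax-n}). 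The intermediate words are long and the bookkeeping of indices becomes intricate; I expect this to be the only step in the lemma that requires substantial care, and it is precisely here that the specialisation to $n=4$ first enters the argument, since only then do the four index pairs in (\ref{ax-z5}) exhaust the ambient index set.
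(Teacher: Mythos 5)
Your overall architecture coincides with the paper's: \eqref{ax-p}--\eqref{ax-m} by inverting \eqref{ax-l}, \eqref{ax-o'}, \eqref{ax-h}; the $\omega$-equations by pushing scalar pairs through $H$ and $X$ with \eqref{ax-f'}, \eqref{ax-e'}, \eqref{ax-c}, \eqref{ax-d}; and \eqref{ax-s} by pushing $X_{[2,3]}$ out of the Hadamard stack so that \eqref{ax-z5} can act. But two of your derivations fail as described, and one key idea is missing. First, \eqref{ax-f}: left-multiplying \eqref{ax-z4} by $H_{[\jay,\kay]}$ and cancelling $H^2$ leaves $\omega^2_{[\jay]}H_{[\jay,\kay]}\approx H_{[\jay,\kay]}\omega^6_{[\jay]}H_{[\jay,\kay]}\omega^3_{[\jay]}\omega^5_{[\kay]}$, whose right-hand side still contains \emph{two} Hadamards; redistributing $\omega$'s with \eqref{ax-f'} never changes the number of $H$'s, so you cannot reach the one-$H$ right-hand side of \eqref{ax-f} this way (and re-applying \eqref{ax-z4} just cycles back). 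The correct derivation does not use \eqref{ax-z4} at all: insert $\omega^8_{[\kay]}$ by \eqref{ax-a} and move the pair $\omega^2_{[\jay]}\omega^2_{[\kay]}$ through $H_{[\jay,\kay]}$ by \eqref{ax-f'}. Second, your uniform recipe for \eqref{ax-z_1}--\eqref{ax-z_3} (apply \eqref{ax-c}, then \eqref{ax-h}, then \eqref{ax-f'}) terminates in an $X$-free word of the form $\omega^a_{[\jay]}\omega^b_{[\kay]}H_{[\jay,\kay]}\omega^c_{[\jay]}$, whereas the table's right-hand sides contain an $X_{[\jay,\kay]}$; you must reintroduce it by inserting $X^2$ via \eqref{ax-i} and converting one copy via \eqref{ax-m}, collecting scalars with \eqref{ax-e'}, which is exactly what the paper does. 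That fix is mechanical, but your recipe as stated does not produce the stated equations.

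The substantive gap is \eqref{ax-t}. Treating it symmetrically to \eqref{ax-s} does not obviously close: pushing $X_{[2,4]}$ out of $H_{[2,3]}H_{[1,4]}X_{[2,4]}H_{[1,4]}H_{[2,3]}$ requires commuting $X_{[2,4]}$ past $H_{[2,3]}$, an overlap pattern (shared index smaller than both remaining ones, with the $H$-index between the $X$-indices) that is an instance of none of \eqref{ax-l}, \eqref{ax-o'}, \eqref{ax-p}, \eqref{ax-o}, so it would itself have to be derived; and even granting it, the two sides of \eqref{ax-t} reduce to words ending in \emph{different} generators ($X_{[2,3]}$ versus $X_{[2,4]}$), which \eqref{ax-z5} and the commutation axioms cannot reconcile. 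The missing idea is to obtain \eqref{ax-t} from \eqref{ax-s} by conjugation with $X_{[3,4]}$: insert $X_{[3,4]}X_{[3,4]}$ by \eqref{ax-i}, convert one copy using \eqref{ax-h}/\eqref{ax-m} while moving the resulting $\omega^4$ with \eqref{ax-b} and \eqref{ax-e}, apply \eqref{ax-s}, and then push the two $X_{[3,4]}$'s inward with \eqref{ax-o}, \eqref{ax-o'} and \eqref{ax-k}, which re-pairs the Hadamards as $H_{[1,4]},H_{[2,3]}$ and turns $X_{[2,3]}$ into $X_{[2,4]}$. Finally, a small conceptual correction: nothing in this lemma is special to $n=4$ beyond the mere presence of the indices $1,\dots,4$; the derivations of \eqref{ax-s} and \eqref{ax-t} work verbatim in $U_n(\D[\omega])$ for every $n\geq 4$. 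The genuine specialisation to $n=4$ enters only in Subcase~\ref{bad-case} of the proof of the Main Lemma, not here.
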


\begin{proof}\relax
  Equations \eqref{ax-p}--\eqref{ax-m} are just the inverses of
  \eqref{ax-l}, \eqref{ax-o'}, and \eqref{ax-h}, respectively. Note that
$X$, $H$, and $\omega^4$ are self-inverse by \eqref{ax-a}--\eqref{ax-i}.
  To prove \eqref{ax-f}, we have 
  \[\omega^2_{[\jay]}H_{[\jay,\kay]} 
   \stackrel{\eqref{ax-a}}{\approx}
   \omega^8_{[\kay]}\omega^2_{[\jay]}H_{[\jay,\kay]}
   \stackrel{\eqref{ax-f'}}{\approx}
   \omega^6_{[\kay]}H_{[\jay,\kay]}\omega^2_{[\kay]}\omega^2_{[\jay]}.
  \]
  
  To prove \eqref{ax-q}, we have
  \[\begin{array}{ccc}
    X_{[\jay,\ell]}H_{[\kay,\ell]}
    \stackrel{\eqref{ax-i}}{\approx}
    X_{[\jay,\kay]}X_{[\jay,\kay]}X_{[\jay,\ell]}H_{[\kay,\ell]}
    \stackrel{\eqref{ax-j}}{\approx}
    X_{[\jay,\kay]}X_{[\kay,\ell]}X_{[\jay,\kay]}H_{[\kay,\ell]}
    &\stackrel{\eqref{ax-p}}{\approx}&
    X_{[\jay,\kay]}X_{[\kay,\ell]}H_{[\jay,\ell]}X_{[\jay,\kay]}\\
    &\stackrel{\eqref{ax-o}}{\approx}&
    X_{[\jay,\kay]}H_{[\jay,\kay]}X_{[\kay,\ell]}X_{[\jay,\kay]}\\
    &\stackrel{\eqref{ax-j}}{\approx}&
    X_{[\jay,\kay]}H_{[\jay,\kay]}X_{[\jay,\kay]}X_{[\jay,\ell]}\\
    &\stackrel{\eqref{ax-m}}{\approx}&
    X_{[\jay,\kay]}\omega^4_{[\kay]}H_{[\jay,\kay]}X_{[\jay,\ell]}.\\
  \end{array}
  \]
  
  To prove \eqref{ax-r}, we have \[
\begin{array}{ccc}
 H_{[\jay,\kay]}\omega^2_{[\jay]}H_{[\jay,\kay]}\stackrel{\eqref{ax-z4}}{\approx}\omega^6_{[\jay]}H_{[\jay,\kay]}\omega^3_{[\jay]}\omega^5_{[\kay]}\stackrel{\eqref{ax-f'}\eqref{ax-w}}{\approx}\omega^7_{[\jay]}\omega_{[\kay]}H_{[\jay,\kay]}\omega^4_{[\kay]}\omega^2_{[\jay]}&\stackrel{\eqref{ax-m}}{\approx}&\omega^7_{[\jay]}\omega_{[\kay]}X_{[\jay,\kay]}H_{[\jay,\kay]}\omega^2_{[\jay]}\\
 &\stackrel{\eqref{ax-c}\eqref{ax-d}}{\approx}&X_{[\jay,\kay]} \omega^7_{[\kay]}\omega_{[\jay]}H_{[\jay,\kay]} \omega^2_{[\jay]}.\\
  \end{array}  
\]
  
  To prove \eqref{ax-x}, we have \[H_{[\jay,\kay]}\omega_{[\jay]}X_{[\jay,\kay]}\stackrel{\eqref{ax-c}}{\approx}H_{[\jay,\kay]}X_{[\jay,\kay]}\omega_{[\kay]}\stackrel{\eqref{ax-h}}{\approx}\omega^4_{[\kay]}H_{[\jay,\kay]}\omega_{[\kay]}\stackrel{\eqref{ax-a}\eqref{ax-f'}}{\approx}\omega_{[\jay]}\omega^5_{[\kay]}H_{[\jay,\kay]}\omega^7_{[\jay]}.\]
  
  To prove \eqref{ax-z_1}, we have \[
\begin{array}{ccc}
  H_{[\jay,\kay]}\omega^3_{[\jay]}
  X_{[\jay,\kay]}\stackrel{\eqref{ax-h}
  \eqref{ax-c}}{\approx}\omega^4_{[k]}
  H_{[\jay,\kay]}\omega^3_{[\kay]}
  &\stackrel{\eqref{ax-i}}{\approx}&\omega^4_{[k]}X_{[\jay,\kay]}X_{[\jay,\kay]}
  H_{[\jay,\kay]}\omega^3_{[\kay]}\\
  &\stackrel{\eqref{ax-m}}
  {\approx}&\omega^4_{[\kay]}X_{[\jay,\kay]}
  H_{[\jay,\kay]}\omega^7_{[\kay]}\\
  &\stackrel{\eqref{ax-a}}{\approx}&\omega^4_{[\kay]}X_{[\jay,\kay]}
  H_{[\jay,\kay]}\omega^8_{[\jay]}\omega^7_{[\kay]}\\&\stackrel{\eqref{ax-e'}\eqref{ax-f'}\eqref{ax-a}}{\approx}&
  \omega^7_{[\jay]}\omega^3_{[\kay]}
  X_{[\jay,\kay]}H_{[\jay,\kay]}
  \omega_{[\jay]}.\\  
  \end{array}  
\]
  
  To prove \eqref{ax-z_2}, we have \[
\begin{array}{ccc}
  
  H_{[\jay,\kay]}\omega^2_{[\jay]}X_{[\jay,\kay]}\stackrel{\eqref{ax-c}\eqref{ax-h}}{\approx}\omega^4_{[\kay]}H_{[\jay,\kay]}\omega^2_{[\kay]}&\stackrel{\eqref{ax-i}}{\approx}&\omega^4_{[\kay]}X_{[\jay,\kay]}X_{[\jay,\kay]}H_{[\jay,\kay]}\omega^2_{[\kay]}\\&\stackrel{\eqref{ax-m}}{\approx}&\omega^4_{[\kay]}X_{[\jay,\kay]}H_{[\jay,\kay]}\omega^6_{[\kay]}\\
  &\stackrel{\eqref{ax-a}}{\approx}&\omega^4_{[\kay]}X_{[\jay,\kay]}H_{[\jay,\kay]}\omega^8_{[\jay]}\omega^6_{[\kay]}\\&\stackrel{\eqref{ax-e'}\eqref{ax-f'}\eqref{ax-a}}{\approx}&\omega^6_{[\jay]}\omega^2_{[\kay]}X_{[\jay,\kay]}H_{[\jay,\kay]}\omega^2_{[\jay]}.\\
\end{array}  
\]
  
To prove \eqref{ax-z_3}, we have \[
\begin{array}{ccc}

H_{[\jay,\kay]}\omega_{[\jay]}X_{[\jay,\kay]}\stackrel{\eqref{ax-c}\eqref{ax-h}}{\approx}\omega^4_{[\kay]}H_{[\jay,\kay]}\omega_{[\kay]}&\stackrel{\eqref{ax-i}}{\approx}&\omega^4_{[\kay]}X_{[\jay,\kay]}X_{[\jay,\kay]}H_{[\jay,\kay]}\omega_{[\kay]}\\&\stackrel{\eqref{ax-m}}{\approx}&\omega^4_{[\kay]}X_{[\jay,\kay]}H_{[\jay,\kay]}\omega^5_{[\kay]}\\&\stackrel{\eqref{ax-a}}{\approx}&\omega^4_{[\kay]}X_{[\jay,\kay]}H_{[\jay,\kay]}\omega^8_{[\jay]}\omega^5_{[\kay]}\\&\stackrel{\eqref{ax-e'}\eqref{ax-f'}\eqref{ax-a}}{\approx}&\omega^5_{[\jay]}\omega_{[\kay]}X_{[\jay,\kay]}H_{[\jay,\kay]}\omega^3_{[\jay]}.\\
\end{array}
\]
      
  To prove \eqref{ax-s}, we have
  \[
\begin{array}{ccc}  
  H_{[3,4]} H_{[1,2]}X_{[2,3]}H_{[1,2]}H_{[3,4]}
    &\stackrel{\eqref{ax-o'}\eqref{ax-p}}{\approx}&
    H_{[3,4]} H_{[1,2]}H_{[1,3]}H_{[2,4]}X_{[2,3]}\\
    &\stackrel{\eqref{ax-u}}{\approx}&
    H_{[1,2]} H_{[3,4]}H_{[1,3]}H_{[2,4]}X_{[2,3]}\\
    &\stackrel{\eqref{ax-z5}}{\approx}&
    H_{[1,3]}H_{[2,4]}H_{[1,2]} H_{[3,4]}X_{[2,3]}\\
    &\stackrel{\eqref{ax-o}\eqref{ax-l}}{\approx}&H_{[1,3]}H_{[2,4]}X_{[2,3]}H_{[1,3]} H_{[2,4]}\\
    &\stackrel{\eqref{ax-u}}{\approx}&
    H_{[1,3]}H_{[2,4]}X_{[2,3]}H_{[1,3]} H_{[2,4]}.\\
\end{array}    
    \]
  
  To prove \eqref{ax-t}, we have
  \[
\begin{array}{ccc}
  H_{[3,4]} H_{[1,2]}X_{[2,3]}H_{[1,2]}H_{[3,4]}
    &\stackrel{\eqref{ax-i}}{\approx}&
    H_{[3,4]} H_{[1,2]}X_{[2,3]}H_{[1,2]}H_{[3,4]}X_{[3,4]}X_{[3,4]}\\
    &\stackrel{\eqref{ax-h}}{\approx}&
    H_{[3,4]} H_{[1,2]}X_{[2,3]}H_{[1,2]}\omega^4_{[4]}H_{[3,4]}X_{[3,4]}\\
    &\stackrel{\eqref{ax-b},\eqref{ax-e}}{\approx}&
    H_{[3,4]} \omega^4_{[4]}H_{[1,2]}X_{[2,3]}H_{[1,2]}H_{[3,4]}X_{[3,4]}\\
    &\stackrel{\eqref{ax-m}}{\approx}&
    X_{[3,4]}H_{[3,4]}H_{[1,2]}X_{[2,3]}H_{[1,2]}H_{[3,4]}X_{[3,4]}\\
    &\stackrel{\eqref{ax-s}}{\approx}&
    X_{[3,4]}H_{[1,3]}H_{[2,4]}X_{[2,3]} H_{[2,4]} H_{[1,3]}X_{[3,4]}\\
    &\stackrel{\eqref{ax-u}}{\approx}&
    X_{[3,4]}H_{[2,4]}H_{[1,3]}X_{[2,3]} H_{[1,3]} H_{[2,4]}X_{[3,4]}\\
    &\stackrel{\eqref{ax-o}\eqref{ax-o'}\eqref{ax-k}}{\approx}&
    H_{[2,3]}H_{[1,4]}X_{[2,4]} H_{[1,4]} H_{[2,3]}X_{[3,4]}X_{[3,4]}\\
    &\stackrel{\eqref{ax-i}}{\approx}&
    H_{[2,3]}H_{[1,4]}X_{[2,4]} H_{[1,4]} H_{[2,3]}.\\    
    \end{array}
\]
\end{proof}

\subsection{The State Graph}

It will be useful to visualise operators as edges in a certain graph, which we now define.

\begin{definition}
\label{StateGraphDef}
Fix $n\geq 1.$ The \textbf{state graph} consists of vertices and two different kinds of directed edges.
\begin{itemize}
\item A vertex of the state graph is an element of $U_n(\D[\omega]);$ i.e., it is a unitary $n\times n$ matrix with entries in $\D[\omega].$ We also refer to a vertex as a \textbf{state}.

\item A \textbf{simple edge} is a triple $\langle M', G, M\rangle,$ where $M$ is a state, $G$ is a generator of the form $(*)$ (see Definition \ref{StarDef}), and $M' = \sem{G}M.$ We write $M\edge{G}M'$ for a simple edge.

\item A \textbf{normal edge} is a triple $\langle M', N, M \rangle,$ where $M$ is a state, $N$ is the unique first syllable (see Definition \ref{SyllableDef}) output by Algorithm \ref{GSAlg} on input $M$, and $M' = \sem{N}M.$ We write $M\nedge{N}M'$ for a normal edge.
\end{itemize}
\end{definition}

A few comments on the relationship between states, normal edges, and the overall state graph are warranted.

\begin{remark}
\label{NormUniqueRemark}
For every state $M\neq I,$ there exists a unique normal edge originating at $M.$ Moreover, if $M\nedge{N}M'$ is normal, then $\level(M')<\level(M)$ (as noted in the proof of Lemma \ref{AlgTermThm}). Consequently, there exists a unique sequence $M\nedge{N_1}M_1\nedge{N_2}M_2\nedge{N_3}\ldots \nedge{N_q}I$ of normal edges from $M$ to $I.$ In other words, the subgraph of the state graph consisting of normal edges forms a tree rooted at $I.$ Note that, contrary to the usual convention in graph theory, the normal edges are pointing toward the root, and not away from it.
\end{remark}

Now we establish a convention and a few more definitions which will give us further traction. Note that, for the remainder of this work, we often write states as $s,t,r$ rather than $M,N$ in order to distinguish them from edges. We begin with

\begin{convention}
\label{EdgeConv}
  If $s_0\edge{G_1} s_1\edge{G_2} \ldots \edge{G_n} s_n$ is a sequence of simple edges in the state graph and $\vec{G} = G_n\ldots G_1$, we also write $s_0\edge{\vec{G}}s_n$. (Note that, following the usual convention for matrix multiplication, words are written right-to-left, whereas edges can be written in any direction, but are typically presented left-to-right.) We write $s\edge{\epsilon}s$ for the empty sequence of edges at $s$, or sometimes just $\epsilon$ when $s$ is clear from the context.
\end{convention}

\begin{definition}
\label{SequenceLevelDef}
  Consider a sequence of simple edges $\vec{G} = s_0\edge{G_1}
  s_1\edge{G_2} \ldots \edge{G_n} s_n$. We say that the {\em level}
  of the sequence is the maximum of the levels of the states
  $s_0,\ldots,s_n$; or, in symbols,
  \[ \level(\vec G) = \max \s{\level(s_0),\ldots,\level(s_n)}.
  \]
\end{definition}

\subsection{Basic Edges}

It will be useful to consider a subset of the simple edges, called the \textit{basic} edges, which we define immediately.

\begin{definition}
\label{BasicDef}
Consider the following subset of $(*)$: \[
\{X_{[\alpha,\alpha+1]},H_{[1,2]},\omega_{[1]}\mid 1\leq\alpha < n\}.
\]
We call elements from this set \textbf{basic generators}. Furthermore, an edge $s\edge{G}t$ is \textbf{basic} if $G$ is a basic generator.
\end{definition}

The basic generators generate the same group as the simple ones, as we will show in Lemma~\ref{BasicGenLemma}. First, we need a definition.

\begin{definition}
\label{ExtentDef}
Let $G$ be any generator of type $(*).$ The \textbf{extent} of $G$ is the largest subscript index appearing in $G;$ i.e., $\extent(X_{[\alpha,\beta]})=\beta$, $\extent(H_{[\alpha,\beta]})=\beta,$ and $\extent(\omega_{[\alpha]})=\alpha$ (where it is understood that, where appropriate, $\alpha < \beta.)$ The \textbf{extent of a sequence} $\vec{G} = G_n\ldots G_1$ is just max$\{\extent(G_i)\:1\leq i\leq n\}.$
\end{definition}

\begin{lemma}
\label{BasicGenLemma}
For any simple edge $G$, there exists a sequence $\vec{G}$ of basic edges such that
\begin{enumerate}[(a)]
\item $\vec{G}\approx G.$
\item $\extent(\vec{G})=\extent(G).$
\item $\level(\vec{G})=\level(G).$
\end{enumerate}
\end{lemma}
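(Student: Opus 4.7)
The plan is to proceed by case analysis on the form of the simple generator $G$, handling the three cases $G = \omega_{[\alpha]}$, $G = X_{[\alpha,\beta]}$, and $G = H_{[\alpha,\beta]}$ in turn. For each, I will exhibit an explicit decomposition into basic generators and then verify the three properties (a)--(c).

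The decompositions rest on conjugation by adjacent transpositions. For $G = X_{[\alpha,\beta]}$ with $\beta > \alpha + 1$, I would induct on the gap $\beta - \alpha$: the identity $X_{[\alpha,\beta]} \approx X_{[\alpha,\alpha+1]}\,X_{[\alpha+1,\beta]}\,X_{[\alpha,\alpha+1]}$ follows from \eqref{ax-j} together with \eqref{ax-i}, and $X_{[\alpha+1,\beta]}$ has a smaller gap, so induction applies. For $G = \omega_{[\alpha]}$ with $\alpha > 1$, I would conjugate $\omega_{[1]}$ by $X_{[1,\alpha]}$ using \eqref{ax-c} and \eqref{ax-i}, and then apply the preceding case to decompose $X_{[1,\alpha]}$. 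For $G = H_{[\alpha,\beta]}$, I would conjugate $H_{[1,2]}$ by a suitable product of $X$'s (e.g.\ $X_{[1,\alpha]}\,X_{[2,\beta]}$ when $1 < \alpha$ and $2 < \beta$), using the derived equations \eqref{ax-p} and \eqref{ax-o} to move the $X$'s through the $H$, and then decompose each $X$ as above. Edge cases in which the subscripts of $G$ already include 1 or 2 are handled separately and more easily.

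Properties (a) and (b) are routine consequences of these decompositions: (a) by the equations cited, and (b) since every index appearing in any basic generator used is at most $\extent(G)$. The main obstacle is property (c). The key observation is that for any simple edge $s \edge{G} t$, one has $\extent(G) \leq \max(\jay_s, \jay_t)$, where $\jay_s$ is the first coordinate of $\level(s)$: if $\gamma$ is a subscript of $G$ with $\gamma > \jay_s$, then the $\gamma$-th column of $s$ is $e_\gamma$, and applying $G$ alters it, so the $\gamma$-th column of $t$ becomes non-standard and hence $\jay_t \geq \gamma$. Since every basic generator in the decomposition has extent at most $\extent(G) \leq \max(\jay_s, \jay_t)$, the intermediate states have first level coordinate bounded by $\max(\jay_s, \jay_t)$. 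When the first coordinate equals this maximum, one must further verify that the second coordinate ($\kay$, the least $\delta$-exponent) and third coordinate ($m$, the count of entries attaining it) do not exceed the corresponding coordinates at the endpoints; this requires case-by-case inspection of how the $X$'s, $H_{[1,2]}$, and $\omega_{[1]}$ act on the relevant column, and is the most laborious step of the proof.
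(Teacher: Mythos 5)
Your decompositions, and your treatment of (a), (b), and of the bound $\extent(G)\le \jay$ (the first coordinate of $\level(G)$), are essentially the paper's own argument: the paper uses the same conjugation-by-transposition identities (recursing on the upper index via $X_{[\jay,\kay]}\approx X_{[\kay-1,\kay]}X_{[\jay,\kay-1]}X_{[\kay-1,\kay]}$ rather than on the lower index, and writing $H_{[\jay,\kay]}\approx X_{[1,\jay]}X_{[2,\kay]}H_{[1,2]}X_{[2,\kay]}X_{[1,\jay]}$, which is exactly your choice), and the same "otherwise column $\gamma$ of both $s$ and $t$ would be $e_\gamma$" argument for the extent. The gap is in how you propose to finish (c). The paper closes it with a structural observation your plan does not contain: in each decomposition the non-$X$ basic generator ($\omega_{[1]}$ or $H_{[1,2]}$) occurs \emph{exactly once}, flanked on either side by words in the $X$ generators; the $X$ generators are permutation matrices with indices at most $\extent(G)\le\jay$, so they do not change the least $\delta$-exponent of any column. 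Hence every intermediate state is a row permutation of $s$ (before the lone non-$X$ letter) or of $t$ (after it), and its level is at most $\level(s)$ or at most $\level(t)$; no separate analysis of the $\kay$ and $m$ coordinates is needed.

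By contrast, the "case-by-case inspection of how the $X$'s, $H_{[1,2]}$, and $\omega_{[1]}$ act on the relevant column" that you defer to cannot succeed as a purely local, per-generator argument. The operator $H_{[1,2]}$ can strictly \emph{increase} the least $\delta$-exponent of the column it acts on: if the two entries involved have distinct residues modulo $\delta$, then their sum and difference are units modulo $\delta$, and the factor $1/\sqrt2=\lambda\omega/\delta^2$ raises the exponent by $2$. So inspecting the action of $H_{[1,2]}$ on a column gives no bound on the intermediate $\kay$; what bounds the level of the state reached immediately after that letter is its \emph{position} in the word, namely that this state is a row permutation of $t$. Without this observation the verification you describe does not close, so as written part (c) is unproved; the repair is exactly the paper's remark that all but one of the letters in the decomposition are of type $X$, and these do not change $\delta$-exponents.
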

\begin{proof}
First, observe that, for $\jay < \kay - 1,$ \[X_{[\jay,\kay]} \approx X_{[\kay-1,\kay]}X_{[\jay,\kay-1]}X_{[\kay-1,\kay]},\] which gives a recursive decomposition of $X_{[\jay,\kay]}$ into basic edges.

Observe also that $\omega_{[\iota]} \approx X_{[1,\iota]}\omega_{[1]}X_{[1,\iota]}$ for $\iota>1.$ Note that, for $\kay > 2,$ $H_{[1,\kay]} \approx X_{[2,\kay]}H_{[1,2]}X_{[2,\kay]}.$ Also, for $1 < \jay < \kay,$ we have $H_{[\jay,\kay]}\approx X_{[1,\jay]}H_{[1,\kay]}X_{[1,\jay]}.$ Taken together, these equations provide decompositions satisfying $(a)$ and $(b).$ To prove $(c)$, consider $s\edge{G}t$ and the corresponding $\vec{G} = s\edge{G_1}s_1\edge{G_2}s_2\ldots s_{n-1}\edge{G_n}t$ as defined above.

First, note that $\level(G)\leq\level(\vec{G});$ this is clear, since $\level(\vec{G})$ is the maximum of the levels of $s,s_1,\ldots ,s_{n-1},t,$ which includes $s$ and $t.$ Let $\alpha$ be the extent of $G$, and let $\level(G) = (\jay,\kay,m).$ Observe that $\alpha\leq\jay;$ otherwise, the $\alpha^{th}$ column of both $s$ and $t$ would be $e_\alpha.$ But since $G$ is a generator of extent $\alpha,$ $\sem{G}e_\alpha\neq\alpha.$ This directly contradicts $\sem{G}s=t.$

From $(b),$ it follows that the edges $G_1,\ldots ,G_n$ all have extent at most$\alpha\leq j,$ and therefore can only act on the first $\jay$ columns of their states. It follows that the $`\jay$ parts' of the levels of $s,s_1,\ldots ,s_{n-1},t$ are all $\leq\jay.$

Also, all but one of $G_1,\ldots ,G_n$ are generators of type $X_{[\alpha,\beta]},$ which do not change $\delta$-exponents. It therefore follows that for all $\iota = 1,\ldots ,n-1,$ either $\level(s_\iota)\leq\level(s)$ or $\level(s_\iota)\leq\level(t).$ Hence $\level(\vec{G})\leq\level(G),$ as was to be shown.
\end{proof}

\subsection{The Main Lemma}

The following is our main technical lemma. Because it is very tedious, the proof shall be delayed until the final section of this chapter.

\begin{lemma}[Main Lemma]
\label{TechLemma}
  Let $s,t,r$ be states of dimension $n\leq 4$, $s\nedge{N} t$ a normal edge, and $s\edge{G} r$ a basic edge. Then there exists a state $q$, a sequence of normal edges $r\nedge{\vec{N'}}q$, and a sequence of simple edges $t\edge{\vec{G'}}q$, such that 
  \begin{itemize}
  \item $\vec{N'}G\approx\vec{G'}N$, and 
  \item $\level{\vec{G'}} < \level{s}$.
  \end{itemize}
  In pictures:
    \[ \xymatrix{
    s \ar@{=>}[d]_{N} \ar[r]^{G} &
    r \ar@{:>}[d]^{\vec{N'}} \\
    t \ar@{.>}[r]^{\vec{G'}} &
    q.
  }
  \]
\end{lemma}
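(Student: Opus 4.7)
The plan is to prove the lemma by systematic case analysis on the form of the normal syllable $N$ and the basic generator $G$. By Definition \ref{SyllableDef}, $N$ has one of three shapes: $\omega^m_{[\jay]}$, $\omega^m_{[\jay]}X_{[\ell,\jay]}$ with $\ell<\jay$, or $H_{[\iota,\jay]}\omega^z_{[\iota]}$ with $\iota<\jay$. By Definition \ref{BasicDef}, $G$ is one of $X_{[\alpha,\alpha+1]}$, $H_{[1,2]}$, or $\omega_{[1]}$. Within each of the nine top-level combinations I would further subdivide based on how the indices in $G$ overlap with $\jay$, $\ell$, or $\iota$, exhibit explicit candidates for the closing words $\vec{G'}$ and $\vec{N'}$, verify $\vec{N'}G\approx\vec{G'}N$ using the axioms of Tables \ref{tab-equations} and \ref{tab-equations2}, and finally confirm the level bound $\level(\vec{G'})<\level(s)$. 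Note that $\vec{N'}$ is not free to choose: by the uniqueness of normal edges (Remark \ref{NormUniqueRemark}) it must coincide with an initial segment of the forced normal path from $r$ to $I$, so the choice of $q$ really amounts to deciding how far along that path to travel.

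In the routine subcases where the indices of $G$ and $N$ are disjoint, the commutativity axioms \eqref{ax-w}, \eqref{ax-b}, \eqref{ax-e}, \eqref{ax-u}, \eqref{ax-n}, \eqref{ax-v} yield $\vec{G'}=G$ and $\vec{N'}=N$, possibly with $N$ reindexed via \eqref{ax-j}--\eqref{ax-o'} when $G=X_{[\alpha,\alpha+1]}$ shares one endpoint with $N$. In certain cases $G$ temporarily raises the level --- e.g., $G=X_{[\jay,\jay+1]}$ can push a non-identity column past index $\jay$ --- but then the first syllable of the algorithm on $r$ simply undoes $G$, so we take $q=t$, $\vec{G'}=\epsilon$, and $\vec{N'}=G\cdot N$ modulo \eqref{ax-i} or a similar involutive axiom. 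The overlapping cases in which $N$ contains the Hadamard syllable $H_{[\iota,\jay]}\omega^z_{[\iota]}$ and $G$ touches $\iota$ or $\jay$ are more involved; the principal tools are axiom \eqref{ax-h}, the four absorption identities \eqref{ax-x} and \eqref{ax-z_1}--\eqref{ax-z_3} (rewriting $H\omega^j X$ as $XH$ up to phases), and the conjugation identities \eqref{ax-l}, \eqref{ax-o'}, \eqref{ax-p}, \eqref{ax-o}. In all these subcases $\vec{G'}$ is a short chain of basic edges of extent at most $\jay$, so the level bound follows from the extent argument used in Lemma \ref{BasicGenLemma}(c): no generator in $\vec{G'}$ can affect a column beyond $\jay$, and any intermediate $\delta$-exponent is bounded by $\kay=\lde(se_\jay)$.

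The single subcase requiring the hypothesis $n\leq 4$, and the main obstacle of the proof, is the one in which $N=H_{[\iota,\jay]}\omega^z_{[\iota]}$ and $G$ is an $H$ or $X$ whose indices together with $\iota,\jay$ exhaust four distinct positions, forcing a Hadamard braid move. This is handled via the derived identities \eqref{ax-s} and \eqref{ax-t}, which are asserted only for the fixed index pattern on $\{1,2,3,4\}$; no fifth index is available to serve as the `spectator' of the braid, which is precisely why the argument cannot be extended beyond $n=4$ without additional relations. The delicate bookkeeping in this case --- matching the configuration at hand to the correct braid form, assembling the compensating phases from the earlier-group axioms, and verifying that the resulting path from $t$ to $q$ never revisits a state of level $\geq\level(s)$ --- is what I expect to consume the bulk of the proof's length.
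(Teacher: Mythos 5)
Your high-level organization matches the paper's: the proof is a case analysis indexed by $(G,\text{form of the pivot column of }s)$, and you correctly identify the routine commuting/reindexing cases, the ``retrograde'' cases where the first normal step out of $r$ undoes $G$ (so $q=t$, $\vec{G'}=\epsilon$), the role of \eqref{ax-h}, \eqref{ax-x}, \eqref{ax-z_1}--\eqref{ax-z_3} for overlapping $H$/$\omega$ interactions, and the fact that the braid identities \eqref{ax-s}, \eqref{ax-t} are what make $n=4$ special. The use of Lemma~\ref{BasicGenLemma}(c) to control levels is also the right instinct.

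However, there is a genuine gap in how you plan to close the hard subcase. When $G=X_{[2,3]}$ and all four residues of the pivot column are $\equiv 1\pmod{\delta}$, the paper does not discharge the case by braid moves alone. After applying $H_{[1,2]}$ and $H_{[3,4]}$ on both sides, the diagram is completed by a further split on $a+b+c+d\pmod{\delta^2}$ (where $1+a\delta^3,\dots,1+d\delta^3$ are the entries): the value $0$ is handled by $H_{[2,4]}H_{[1,3]}$ and \eqref{ax-s}, the value $\delta$ by $H_{[1,4]}H_{[2,3]}$ and \eqref{ax-t}, but the value $1\pmod\delta$ corresponds to no equational rewriting at all. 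That subcase is instead shown to be \emph{vacuous}: it would force the (scaled) column to have the four distinct residues $0,1,\delta,\delta+1\pmod{\delta^2}$, which is incompatible with being a unit vector. Proving this requires the entire number-theoretic apparatus of Lemmas~\ref{AddCongLemma}, \ref{ModDeltaSqLemma}, \ref{RootTwoModTwoLemma}, \ref{HandWaveLemma}, and \ref{NotExist4DimLemma}. Your plan, which expects the difficulty to be combinatorial ``bookkeeping\ldots matching the configuration to the correct braid form,'' would hit this subcase and find no braid form to match; without anticipating that the resolution is an impossibility argument about unit vectors mod~$\delta^2$, the plan does not close. This nonexistence argument, not merely the scarcity of spectator indices, is also the deeper reason the paper's restriction to $n=4$ is genuine.
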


\subsection{Proof of Completeness}

This subsection is devoted to proving that Lemma \ref{TechLemma} implies completeness. We start with a slight strengthening of the lemma, by observing that the sequence of edges $\vec{G'}$ can in fact be chosen to be basic. This follows directly from Lemma \ref{BasicGenLemma}. 

Next, we show that a sequence of normal edges, extended by a single basic edge, can be normalised.

\begin{lemma}
\label{NormalCommuteLemma}
Let $s\edge{G}r$ be a basic edge, and let $s\nedge{\vec{N}}I$ be the unique sequence of normal edges from $s$ to $I.$ Let $r\nedge{\vec{M}}I$ be the unique sequence of normal edges from $r$ to $I.$ Then $\vec{M}G\approx\vec{N}.$
\end{lemma}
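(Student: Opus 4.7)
The plan is to proceed by induction on $\level(s)$, using the Main Lemma (Lemma \ref{TechLemma}) as the main tool at the inductive step. The guiding picture is that the Main Lemma allows one to push a single basic edge through one normal edge at the cost of introducing a short sequence of strictly-lower-level basic edges on the other side, which can then be absorbed by the induction hypothesis.

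For the inductive step, assume $s\neq I$ and let $s\nedge{N_1}t$ be the first edge of $\vec{N}$, so $\vec{N}=\vec{N''}N_1$, where $\vec{N''}\colon t\to I$ is the unique normal sequence guaranteed by Remark~\ref{NormUniqueRemark}. Apply the Main Lemma (together with Lemma~\ref{BasicGenLemma}, which lets us take $\vec{G'}$ to consist of basic edges) to the configuration $t\overset{N_1}{\Longleftarrow}s\edge{G}r$. This yields a state $q$, a normal sequence $r\nedge{\vec{N'}}q$, and a basic sequence $t\edge{\vec{G'}}q$ with $\vec{N'}G\approx\vec{G'}N_1$ and $\level(\vec{G'})<\level(s)$. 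Let $\vec{P}\colon q\to I$ be the unique normal sequence; by uniqueness of normal paths to $I$, one has $\vec{M}=\vec{P}\vec{N'}$. Writing $\vec{G'}=G'_k\cdots G'_1$ with intermediate states $t=t_0\edge{G'_1}t_1\edge{G'_2}\cdots\edge{G'_k}t_k=q$, every $t_{i-1}$ has level strictly less than $\level(s)$, so the induction hypothesis applied to each basic edge $t_{i-1}\edge{G'_i}t_i$ gives $\vec{P}_iG'_i\approx\vec{P}_{i-1}$, where $\vec{P}_i$ denotes the unique normal sequence from $t_i$ to $I$ (with $\vec{P}_0=\vec{N''}$ and $\vec{P}_k=\vec{P}$). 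Chaining these equivalences yields $\vec{P}\vec{G'}\approx\vec{N''}$, and then
\[ \vec{M}G \;=\; \vec{P}\vec{N'}G \;\approx\; \vec{P}\vec{G'}N_1 \;\approx\; \vec{N''}N_1 \;=\; \vec{N}, \]
as desired.

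For the base case $s=I$, the Main Lemma does not apply (there is no normal edge out of $I$) and $\vec{N}=\epsilon$, so one must directly verify $\vec{M}G\approx\epsilon$ in the three cases $G=\omega_{[1]}$, $G=X_{[\alpha,\alpha+1]}$, and $G=H_{[1,2]}$. In each case $r=\sem{G}$ has a small level and one can read off $\vec{M}$ from Algorithm~\ref{GSAlg}; for instance, when $G=\omega_{[1]}$ the algorithm outputs $\vec{M}=\omega^7_{[1]}$, and $\omega^7_{[1]}\omega_{[1]}\approx\omega^8_{[1]}\approx\epsilon$ by~\eqref{ax-a}, while the $X$ case is similarly handled by \eqref{ax-i} and the $H$ case by \eqref{ax-g} (after using commutation equations from Table~\ref{tab-equations} to collect the scalar phases produced by the algorithm). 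The main obstacle I would expect is not the inductive step itself---which, granted the Main Lemma, is essentially bookkeeping about how normal sequences concatenate---but rather making sure that the strengthened form of the Main Lemma (with $\vec{G'}$ basic) is correctly invoked, and that the $H$-subcase of the base case is executed cleanly using only the equations listed. Everything else reduces to uniqueness of normal paths (Remark~\ref{NormUniqueRemark}) and repeated invocations of the induction hypothesis.
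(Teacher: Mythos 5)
Your proof is correct and takes essentially the same route as the paper: induction on $\level(s)$, with the inductive step peeling off the first normal edge $N_1$, applying the Main Lemma (in its strengthened form with $\vec{G'}$ basic), chaining the induction hypothesis along the intermediate states of $\vec{G'}$, and then concluding by uniqueness of normal paths to $I$. Your base-case treatment is a bit more explicit than the paper's brief remark, and your worry about ``collecting scalar phases'' in the $H_{[1,2]}$ case is unnecessary (on input $H_{[1,2]}$ the algorithm emits the syllable $H_{[1,2]}$ with $z=0$, so \eqref{ax-g} alone suffices), but this is a harmless over-caution rather than a gap.
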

\begin{proof}
By induction on the level of $s.$ If $\level(s) = (0,0,0),$ then $s = I.$ An easy case distinction shows that the claim holds for all basic generators $G.$

Now suppose $\level(s)\neq(0,0,0),$ and suppose the hypothesis holds for all states of smaller level. In reading the following, it may be helpful to refer to Figure \ref{NormalCommuteFigure}. By assumption, $s\neq I,$ so $\vec{N}$ has at least one term; i.e., $\vec{N} = \vec{N'}N$ (where $\vec{N'}$ may equal $\epsilon$).  By Lemma \ref{TechLemma}, we have\[ \xymatrix{
    s \ar@{=>}[d]_{N} \ar[r]^{G} &
    r \ar@{:>}[d]^{\vec{M'}} \\
    t \ar@{.>}[r]^{\vec{G'}} &
    q,
  }
  \]
  where $\vec{M'}G\approx\vec{G'}N$ and $\level(\vec{G'})<\level(s).$ Also, by Lemma \ref{BasicGenLemma}, we may assume that $\vec{G'}$ consists of basic edges.
  
  Suppose $\vec{G'} = t_0\edge{G_1}t_1\edge{G_2}t_2\ldots t_{\kay-1}\edge{G_{\kay}}t_{\kay},$ where $t_0 = t$ and $t_{\kay} = q.$ For $\iota\in\{0,\ldots ,\kay\},$ let $t_\iota\nedge{\vec{N_\iota}}I$ be the unique sequence of normal edges from $t_\iota$ to $I.$ (Note that $\vec{N_0} = \vec{N'}.)$
  
  Since $\level(t_\iota)<\level(s)$ for all $\iota\in\{0,\ldots ,\kay\},$ by our induction hypothesis, $\vec{N_\iota}G_\iota\approx\vec{N_{\iota-1}}$ for $\iota\in\{1,\ldots ,\kay\}.$ 
  
By congruence, $\vec{N_{\kay}}\vec{G} = \vec{N_{\kay}}G_{\kay}G_{\kay-1}\ldots G_1\approx\vec{N_{\kay-1}}G_{\kay-1}\ldots G_1\approx\ldots \approx\vec{N_0} = \vec{N'}.$ Thus $\vec{N}\approx\vec{N'}N\approx\vec{N_{\kay}}\vec{G'}N\approx\vec{N_{\kay}}\vec{M'}G.$ The last thing to note is that $\vec{N_{\kay}}\vec{M'}\approx\vec{M},$ by uniqueness of normal sequences. 
\end{proof}

\begin{figure}
\begin{center}
\begin{tikzpicture}[scale=2, double distance=3pt] 
  \path (-1,0) node [outer sep=5pt] (state0) {$I$}
        (3,1) node [outer sep=5pt] (state1) {$s$}
        (1.5,1) node [outer sep=5pt] (state3) {$t_0$}
        (1.5,-1) node [outer sep=5pt] (state5) {$t_k$}
	(3,-1) node [outer sep=5pt] (state6) {$r$}
        (1.5,0.33) node [outer sep=5pt] (state7) {$t_1$}
        (1.5,-0.33) node [outer sep=5pt] (state8) {$t_{k-1}$}
  ;
  \draw [double,thick,-implies] (state3) to[bend right=20] node[above=1ex] {$\vec{N_0}=\vec {N'}$\quad} (state0);
  \draw [double,thick,-implies] (state1) to node[above=1ex] {$N$} (state3);
  \draw [double,thick,-implies] (state6) to node[above=1ex] {$\vec{M'}$} (state5);
  \draw [double,thick,-implies] (state5) to[bend left=20] node[above=0.2ex] {$\vec{N_k}$} (state0);
  \draw [double,thick,-implies] (state7) to[bend right=8] node[above=0.2ex] {$\vec{N_1}$} (state0); 
  \draw [double,thick,-implies] (state8) to[bend left=8] node[above=0.2ex] {$\vec{N_{k-1}}$} (state0);
  \draw (0.65,0) node {$\vdots$};
  \draw (1,0) node {IH};
  \draw (1,0.65) node {IH};
  \draw (1,-0.65) node {IH};
  \draw [->,thick] (state1) to[bend left] node[right] {$G$} (state6);
  \draw [->,thick] (state3) to node[right] {$G_1$} (state7);
  \draw (2.4,0) node {Lemma \ref{TechLemma}};
  \draw [->,thick,dotted] (state7) to (state8);
  \draw [->,thick] (state8) to node[right] {$G_k$} (state5);

\end{tikzpicture}
\end{center}
\caption{An illustration of the proof of Lemma \ref{NormalCommuteLemma}.}
\label{NormalCommuteFigure}
\end{figure}
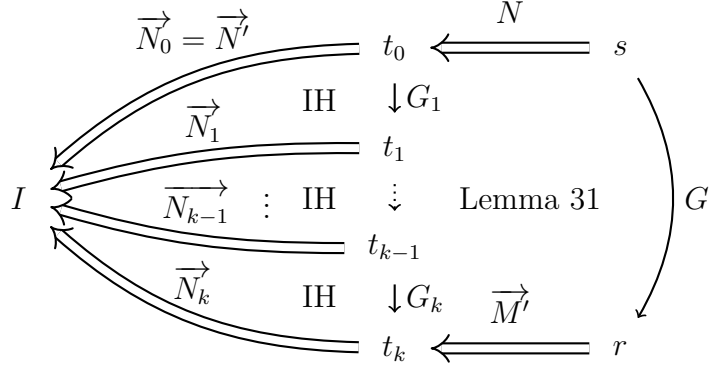

Repeated application of Lemma \ref{NormalCommuteLemma} gives us the following result, which takes us one step closer to completeness:

\begin{lemma}
\label{BasicNormalEquivThm}
Let $s\edge{\vec{G}}I$ be any sequence of basic edges with final state $I,$ and let $s\nedge{\vec{N}}I$ be the unique sequence of normal edges from $s$ to $I.$ Then $\vec{G}\approx\vec{N}.$
\end{lemma}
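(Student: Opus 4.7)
The plan is to prove this by straightforward induction on the length of the sequence $\vec{G}$, with Lemma \ref{NormalCommuteLemma} doing essentially all the work at the inductive step. This is the ``bookkeeping'' step that packages the single-edge result into a whole-sequence result.

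For the base case, if $\vec{G} = \epsilon$ then $s = I$, and by uniqueness of normal sequences (Remark \ref{NormUniqueRemark}) we also have $\vec{N} = \epsilon$, so $\vec{G} \approx \vec{N}$ trivially. For the inductive step, write $\vec{G} = \vec{G'}G_1$, so that the sequence of basic edges decomposes as $s \edge{G_1} s_1 \edge{\vec{G'}} I$. Let $s_1 \nedge{\vec{N_1}} I$ denote the unique normal sequence from $s_1$ to $I$. Since $\vec{G'}$ is a shorter sequence of basic edges ending at $I$, the induction hypothesis gives $\vec{G'} \approx \vec{N_1}$.

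Now apply Lemma \ref{NormalCommuteLemma} to the single basic edge $s \edge{G_1} s_1$, with $\vec{N}$ and $\vec{N_1}$ being the unique normal sequences from $s$ and $s_1$ to $I$ respectively; this yields $\vec{N_1} G_1 \approx \vec{N}$. Combining via the congruence rule,
\[
\vec{G} \;=\; \vec{G'}\,G_1 \;\approx\; \vec{N_1}\,G_1 \;\approx\; \vec{N},
\]
completing the induction.

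The main obstacle for this particular lemma is essentially nonexistent: all of the difficulty has already been absorbed into Lemma \ref{NormalCommuteLemma} (and, through it, into the Main Lemma \ref{TechLemma}). The only conceptual point to check is that the appeal to the induction hypothesis is legitimate, which it is because $\vec{G'}$ is a sequence of basic edges whose final state is still $I$ and whose length is strictly smaller than that of $\vec{G}$. No new axioms from Tables~\ref{tab-equations} or \ref{tab-equations2} are needed beyond those hidden inside Lemma \ref{NormalCommuteLemma}; only the congruence rule is invoked here.
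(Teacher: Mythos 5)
Your proof is correct and follows essentially the same route as the paper: induction on the length of $\vec{G}$, peeling off the first basic edge, applying the induction hypothesis to the remaining sequence, and then invoking Lemma~\ref{NormalCommuteLemma} together with the congruence rule to glue the pieces back together. The only cosmetic difference is notation ($s_1,\vec{N_1}$ versus the paper's $r,\vec{M}$).
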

\begin{proof} (See Figure \ref{BasicNormalEquivFigure} for an illustration of this proof.) By induction on the length of $\vec{G}.$ If $\vec{G}=\epsilon,$ then $s = I$ and we can take $\vec{N}=\epsilon.$ Now let $\vec{G} = s\edge{G}r\edge{\vec{G'}}I.$ Let $r\nedge{\vec{M}}I$ be the unique sequence of normal edges from $r$ to $I,$ and let $s\nedge{\vec{N}}I$ be similar. Then, by our induction hypothesis, $\vec{G'}\approx\vec{M}.$ By Lemma \ref{NormalCommuteLemma}, $\vec{M}G\approx\vec{N}.$ Thus $\vec{G'}\vec{G}\approx\vec{N}.$ 
\end{proof}

\begin{figure}
\begin{center}
\begin{tikzpicture}[scale=2, double distance=3pt]
  \path (-1.5,0) node [outer sep=5pt] (state0) {$I$}
        (5,1) node [outer sep=5pt] (state1) {$s$}
        (1.5,-1) node [outer sep=5pt] (state2) {$\bullet$}
        (3,1) node [outer sep=5pt] (state3) {$\bullet$}
        (1.5,1) node [outer sep=5pt] (state4) {$\bullet$}
        (3,-1) node [outer sep=5pt] (state5) {$\bullet$}
	(4.7,-0.7) node [outer sep=5pt] (state6) {$\bullet$}
	(3,0.3) node [outer sep=5pt] (state7) {$\bullet$}
	(1.5,0.3) node [outer sep=5pt] (state8) {$\bullet$}
	(1.5,-0.1) node [outer sep=5pt] (state9) {$\bullet$}
  ;
  \draw [double,thick,-implies] (state3) to (state4);
  \draw [double,thick,-implies] (state1) to (state3);
  \draw [double,thick,-implies] (state4) to[bend right=18] (state0);
  \draw [double,thick,-implies] (state6) to[bend right=20] (state7);
  \draw [double,thick,-implies] (state7) to (state8);
  \draw [double,thick,-implies] (state8) to[bend right=8] (state0);
  \draw [double,thick,-implies] (state5) to[bend right=8] (state9);
  \draw [double,thick,-implies] (state9) to[bend right=8] (state0);
  \draw [->,thick] (state6) to[bend left=8] (state5);
  \draw [->,thick] (state5) to (state2);
  \draw [->,thick] (state2) to[bend left=20] (state0);
  \draw [->,thick] (state1) to[bend left] (state6);
  \draw [double,thick,-implies] (state2) to[bend right=15] (state0);
  \draw (4.2,0.5) node {Lemma~\ref{NormalCommuteLemma}};
  \draw (2.9,-0.1) node {Lemma~\ref{NormalCommuteLemma}};
  \draw (1.3,-0.4) node {Lemma~\ref{NormalCommuteLemma}};
  \draw (0,-0.6) node {Lemma~\ref{NormalCommuteLemma}};
\end{tikzpicture}
\caption{An illustration of the proof of Lemma \ref{BasicNormalEquivThm}.}
\label{BasicNormalEquivFigure}
\end{center}
\end{figure}

With that in place, we are finally able to prove completeness.

\begin{theorem}
\label{Equiv2Equiv1Thm}
Let $\vec{G},\vec{H}$ be words such that $\vec{G}\sim\vec{H}.$ Then $\vec{G}\approx\vec{H}.$
\end{theorem}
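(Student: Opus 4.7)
The plan is to show that every word $\vec{G}$ with $\sem{\vec{G}} = M$ is equationally equivalent to a canonical representative depending only on $M$. Two such equivalent words will then share this canonical representative and therefore be equationally equivalent to each other. The canonical representative will be the unique normal sequence from $M^{-1}$ to $I$ provided by the modified Giles-Selinger algorithm.

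First I would apply Lemma \ref{BasicGenLemma}(a) letter by letter to $\vec{G}$: for each generator $G_i$ appearing in $\vec{G}$, Lemma \ref{BasicGenLemma}(a) produces a sequence $\vec{G}_i^b$ of basic generators with $\vec{G}_i^b \approx G_i$. Since $\approx$ is a congruence relation, concatenating these substitutions yields a word $\vec{G}_b$ consisting entirely of basic generators such that $\vec{G} \approx \vec{G}_b$. By soundness (Lemma \ref{Equiv1Equiv2Lemma1}) we still have $\sem{\vec{G}_b} = \sem{\vec{G}} = M$. Doing the same to $\vec{H}$ produces $\vec{H}_b$ with $\vec{H} \approx \vec{H}_b$ and $\sem{\vec{H}_b} = M$.

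Next, since $U_n(\D[\omega])$ is a group, the inverse $s = M^{-1}$ is a legitimate state. The word $\vec{G}_b$ then represents a sequence of basic edges $s \edge{\vec{G}_b} I$, because $\sem{\vec{G}_b}\cdot s = M\cdot M^{-1} = I$; and likewise $\vec{H}_b$ represents $s \edge{\vec{H}_b} I$. By Remark \ref{NormUniqueRemark}, there is a unique sequence $\vec{N}$ of normal edges from $s$ to $I$. Applying Lemma \ref{BasicNormalEquivThm} to each of the two basic sequences gives $\vec{G}_b \approx \vec{N}$ and $\vec{H}_b \approx \vec{N}$, and hence
\[
  \vec{G} \;\approx\; \vec{G}_b \;\approx\; \vec{N} \;\approx\; \vec{H}_b \;\approx\; \vec{H},
\]
as required.

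Given the heavy lifting already done in Lemma \ref{TechLemma}, Lemma \ref{NormalCommuteLemma}, and Lemma \ref{BasicNormalEquivThm}, the completeness theorem itself presents essentially no obstacle: it is a routine assembly of previously established results. The only small points of care are the observation that Lemma \ref{BasicGenLemma}(a) can be applied inside a larger word thanks to the congruence property of $\approx$, and the choice of initial state $s = M^{-1}$, which makes both basic sequences terminate at $I$ so that Lemma \ref{BasicNormalEquivThm} applies to each of them with the same target normal sequence.
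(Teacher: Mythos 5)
Your proposal is correct and follows essentially the same route as the paper's own proof: convert both words to basic words via Lemma \ref{BasicGenLemma} and the congruence property, take the common state $s = \sem{\vec{G}}^{-1}$ (justified by soundness), and apply Lemma \ref{BasicNormalEquivThm} to identify each basic word with the unique normal sequence $s\nedge{\vec{N}}I$, concluding by transitivity. No substantive differences.
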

\begin{proof}
By Lemma \ref{BasicGenLemma}, there exist basic words $\vec{G'}$ and $\vec{H'}$ such that $\vec{G}\approx\vec{G'}$ and $\vec{H}\approx\vec{H'}.$ Let $s = \sem{\vec{G'}}^{-1} = \sem{\vec{H'}}^{-1}.$ The second equality is justified since, by Lemma \ref{Equiv1Equiv2Lemma1}, $\vec{G'}\sim\vec{G}\sim\vec{H}\sim\vec{H'}.$ Then $s\edge{\vec{G'}}I$ and $s\edge{\vec{H'}}I.$ Let $s\nedge{\vec{N}}I$ be the unique sequence of normal edges taking $s$ to $I.$  Then, by Lemma \ref{BasicNormalEquivThm}, $\vec{G'}\approx\vec{N}\approx\vec{H'}.$ The result obtains by transitivity.
\end{proof}

Therefore Table \ref{tab-equations} describes a complete set of equations relating generators of the form $(*)$ for the group $U_4(\D[\omega]).$

\subsection{Some Technical Lemmas}

As noted above, we were only able to prove Lemma \ref{TechLemma} in case $n = 4.$ The problematic case arises in Subcase~\ref{bad-case} of the final section of this chapter. The following lemmas are instrumental in dealing with the problematic subcase.

\begin{lemma}
\label{AddCongLemma}
For all $x\in\Z[\omega]$, $\sqrt{2}$ divides $x + x^\dagger$. 
\end{lemma}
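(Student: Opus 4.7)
The plan is to prove this by a direct computation in the basis $\{1,\omega,\omega^2,\omega^3\}$ of $\Z[\omega]$ over $\Z$. Any $x\in\Z[\omega]$ can be written uniquely as $x = d + c\omega + b\omega^2 + a\omega^3$ with $a,b,c,d\in\Z$, so by linearity of complex conjugation it suffices to understand how $\dagger$ acts on the basis elements.

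Since $\omega = e^{i\pi/4}$ is an $8$th root of unity, $\omega^\dagger = \omega^{-1} = \omega^7$, and because $\omega^4 = -1$ we get $\omega^7 = -\omega^3$, $\omega^6 = -\omega^2$, $\omega^5 = -\omega$. Therefore $1^\dagger = 1$, $\omega^\dagger = -\omega^3$, $(\omega^2)^\dagger = -\omega^2$, $(\omega^3)^\dagger = -\omega$. Adding these to the original basis elements, I expect the $\omega^2$ contributions to cancel exactly, the $1$ contribution to double to $2d$, and the $\omega$ and $\omega^3$ contributions to combine into a multiple of $\omega - \omega^3 = \sqrt2$.

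Concretely, I would compute
\[
x + x^\dagger \;=\; (d + c\omega + b\omega^2 + a\omega^3) + (d - c\omega^3 - b\omega^2 - a\omega) \;=\; 2d + (c-a)(\omega - \omega^3).
\]
Using $\sqrt2 = \omega - \omega^3$ (from Example~\ref{RingsExample}) and $2 = \sqrt2\cdot\sqrt2$, this becomes
\[
x + x^\dagger \;=\; \sqrt2\bigl(\sqrt2\,d + (c-a)\bigr).
\]
The factor $\sqrt2\,d + (c-a) = d(\omega - \omega^3) + (c - a)$ is manifestly in $\Z[\omega]$, so $\sqrt2 \divides (x + x^\dagger)$.

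There is really no obstacle here; the only thing to be careful of is keeping the signs right when expressing $\omega^{-j}$ in the preferred basis, which is why I would pin down $\omega^\dagger,(\omega^2)^\dagger,(\omega^3)^\dagger$ explicitly before the computation rather than work with $\omega^{-j}$ directly.
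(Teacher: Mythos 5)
Your proof is correct and follows essentially the same approach as the paper: expand $x$ in the integral basis $\{1,\omega,\omega^2,\omega^3\}$, use linearity of conjugation, and verify that each basis contribution to $x+x^\dagger$ is a multiple of $\sqrt2$. The only cosmetic difference is that you compute $\omega^\dagger=-\omega^3$, $(\omega^2)^\dagger=-\omega^2$, $(\omega^3)^\dagger=-\omega$ purely from $\omega^8=1$ and $\omega^4=-1$ and then collect everything into the closed form $\sqrt2\bigl(\sqrt2\,d+(c-a)\bigr)$, whereas the paper evaluates $\omega^j+(\omega^j)^\dagger$ for $j=1,2,3$ individually via $\omega=(1+i)/\sqrt2$.
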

\begin{proof}
Let $u = a\omega^3 + b\omega^2 + c\omega + d.$ Then
$u^\dagger = a(\omega^3)^\dagger + b(\omega^2)^\dagger + c \omega^\dagger + d,$
so $u + u^\dagger = a(\omega^3 + (\omega^3)^\dag) + b (\omega^2 + (\omega^2)^\dag) + c (\omega + \omega^\dag) + 2d$. Thus it suffices to demonstrate the result for $\omega+\omega^\dag,$ $\omega^2+(\omega^2)^\dagger,$ and $\omega^3+(\omega^3)^\dag.$

\[
\omega+\omega^\dag = \frac{(1+i)+(1-i)}{\sqrt{2}} = \frac{2}{\sqrt{2}} = \sqrt{2}\mbox{, while }\omega^2+(\omega^2)^\dag = i + (-i) = 0.
\]
Lastly, note that \[
\omega^3+(\omega^3)^\dag = \frac{(i-1)+(-i-1)}{\sqrt{2}} = \frac{-2}{\sqrt{2}} = -\sqrt{2}.
\]
Therefore $x+x^\dagger$ is divisible by $\sqrt{2}.$
\end{proof}

\begin{lemma}
\label{ModDeltaSqLemma}
If $x\equiv y \pmod{\delta^2},$ then $x^\dagger x\equiv y^\dagger y \pmod{2}$.
\end{lemma}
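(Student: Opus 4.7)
The plan is to write $x = y + \delta^2 z$ for some $z\in\Z[\omega]$, expand $x^\dagger x - y^\dagger y$, and show that each of the resulting three terms is divisible by $2$.

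Expanding gives
\[
x^\dagger x - y^\dagger y \;=\; \delta^2\, y^\dagger z \;+\; (\delta^\dagger)^2 z^\dagger y \;+\; (\delta^\dagger\delta)^2\, z^\dagger z.
\]
For the last term, I would compute $\delta^\dagger\delta = (1+\omega^\dagger)(1+\omega) = 2 + \sqrt{2}$, so $(\delta^\dagger\delta)^2 = 6 + 4\sqrt{2} = 2(3+2\sqrt{2})$, which is manifestly $\equiv 0 \pmod 2$. This disposes of the pure $z^\dagger z$ contribution.

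The cross-term $\delta^2 y^\dagger z + (\delta^\dagger)^2 z^\dagger y$ is the main obstacle, since individually neither $\delta^2$ nor $(\delta^\dagger)^2$ is divisible by $2$. The trick is to first observe that $\delta^2 \equiv (\delta^\dagger)^2 \pmod 2$: indeed,
\[
\delta^2 - (\delta^\dagger)^2 = (\delta-\delta^\dagger)(\delta+\delta^\dagger) = (\omega-\omega^7)(2+\sqrt{2}) = 2i(1+\sqrt{2}),
\]
which is divisible by $2$. Hence modulo $2$ the cross-term reduces to $\delta^2(y^\dagger z + z^\dagger y)$.

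Setting $w = y^\dagger z \in \Z[\omega]$, Lemma~\ref{AddCongLemma} gives $\sqrt{2} \mid w + w^\dagger$, so write $w + w^\dagger = \sqrt{2}\, u$ for some $u\in\Z[\omega]$. A direct calculation shows
\[
\delta^2\sqrt{2} = (1+2\omega+\omega^2)(\omega-\omega^3) = 2(1+\omega+\omega^2),
\]
so $\delta^2(y^\dagger z + z^\dagger y) = 2(1+\omega+\omega^2)u \equiv 0 \pmod 2$. Combining the three contributions gives $x^\dagger x \equiv y^\dagger y \pmod 2$, as required. The main subtlety is the identity $\delta^2\sqrt{2} \in 2\Z[\omega]$; once this is spotted, the rest is routine.
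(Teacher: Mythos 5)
Your proof is correct and is essentially the paper's argument: expand the square of the difference and kill the cross term with Lemma~\ref{AddCongLemma}. The only real difference is bookkeeping: the paper first notes that $\delta^2$ and $\sqrt{2}$ are associates (so one may write $x = y + a\sqrt{2}$ directly), which makes the three divisibility checks you carry out by hand --- $(\delta^\dagger\delta)^2\equiv 0$, $\delta^2\equiv(\delta^\dagger)^2\pmod 2$, and $\delta^2\sqrt{2}\in 2\Z[\omega]$ --- immediate; indeed your identity $\delta^2\sqrt{2}=2(1+\omega+\omega^2)=2\lambda\omega$ is precisely the associate relation $\delta^2=\lambda\omega\sqrt{2}$ in disguise.
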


\begin{proof}
 If $x\equiv y\pmod{\delta^2},$ then $x\equiv y\pmod{\sqrt{2}},$ since the moduli only differ by a unit in $\Z[\omega].$ So let $x = y + a\sqrt{2}$. Then $x^\dag x = (y + a\sqrt{2})^\dag(y + a\sqrt{2}) = y^\dag y + (a^\dag y + y^\dag a)\sqrt{2} + 2a^\dag a$. But by Lemma \ref{AddCongLemma}, $\sqrt{2}\divides(a^\dag y + y^\dag a),$ therefore
$2 \divides (a^\dag y + y^\dag a)\sqrt{2},$ and so $ x^\dag x\equiv y^\dag y \pmod{2}$, as claimed.
\end{proof}

\begin{lemma}
\label{RootTwoModTwoLemma}
Let $x\equiv 0$, $y\equiv 1$, $z\equiv \delta$, and $w\equiv \delta+1$ (all modulo $\delta^2$). Then $x^\dagger x + y^\dagger y + z^\dagger z +
w^\dagger w \equiv \sqrt{2}\pmod{2}$. 
\end{lemma}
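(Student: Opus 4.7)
My plan is to apply Lemma~\ref{ModDeltaSqLemma} to each of the four terms and then evaluate the resulting sum modulo $2$, which should collapse to something directly identifiable with $\sqrt{2}$.

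First I would use Lemma~\ref{ModDeltaSqLemma} termwise: since $x\equiv 0$, $y\equiv 1$, $z\equiv\delta$, and $w\equiv\delta+1$ modulo $\delta^2$, we get
\[
x^\dagger x + y^\dagger y + z^\dagger z + w^\dagger w
\equiv 0 + 1 + \delta^\dagger\delta + (\delta+1)^\dagger(\delta+1) \pmod{2}.
\]
Expanding the last term gives $\delta^\dagger\delta + \delta + \delta^\dagger + 1$, so the entire sum is
\[
2 + 2\delta^\dagger\delta + \delta + \delta^\dagger \;\equiv\; \delta+\delta^\dagger \pmod{2}.
\]

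Next I would compute $\delta+\delta^\dagger$ explicitly. Since $\delta = 1+\omega$ and $\omega^\dagger = \omega^{-1} = \omega^7 = (1-i)/\sqrt{2}$, we have $\delta+\delta^\dagger = 2 + \omega + \omega^7$. The identity
\[
\omega + \omega^7 \;=\; \frac{(1+i)+(1-i)}{\sqrt{2}} \;=\; \sqrt{2}
\]
(essentially the same calculation already used in the proof of Lemma~\ref{AddCongLemma}) yields $\delta+\delta^\dagger = 2+\sqrt{2} \equiv \sqrt{2}\pmod{2}$, which completes the proof.

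There is no real obstacle here: Lemma~\ref{ModDeltaSqLemma} does all of the heavy lifting, reducing the problem to an arithmetic identity about $\delta+\delta^\dagger$ in $\Z[\omega]/(2)$. The only thing to be careful about is bookkeeping of conjugates in the expansion of $(\delta+1)^\dagger(\delta+1)$ and recognizing that the $2\delta^\dagger\delta$ cross term vanishes modulo $2$.
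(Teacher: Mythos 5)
Your proposal is correct and follows essentially the same route as the paper's proof: apply Lemma~\ref{ModDeltaSqLemma} termwise, expand $(\delta+1)^\dagger(\delta+1)$, cancel everything divisible by $2$, and evaluate $\delta+\delta^\dagger = 2+\sqrt{2}\equiv\sqrt{2}\pmod{2}$ using $\omega+\omega^\dagger=\sqrt{2}$. No issues.
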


\begin{proof}
By Lemma \ref{ModDeltaSqLemma}, we have \[\begin{array}{ccc}
x^\dagger x + y^\dagger y + z^\dagger z + w^\dagger w
&\equiv& 0^\dag 0 + 1^\dag 1 + \delta^\dag\delta + (\delta+1)^\dag(\delta+1)\pmod{2}\\
&=& 0 + 1 + \delta^\dag \delta + \delta^\dag\delta + \delta^\dag + \delta + 1\\
&\equiv& \delta^\dag + \delta\pmod{2}\\
&=& 1 + \omega^\dag + 1 + \omega\\
& =& 2 + \sqrt{2}\\
&\equiv& \sqrt{2}\pmod{2}.
\end{array}
\]
\end{proof}

\begin{lemma}
\label{HandWaveLemma}
Let $\boldsymbol{u}$ be a unit vector with entries in $\D[\omega],$ and suppose that $\lde(\boldsymbol{u}) = k.$ Then $k\neq 1.$
\end{lemma}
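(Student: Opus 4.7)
The plan is to argue by contradiction. Suppose $\lde(\boldsymbol{u}) = 1$, and set $\boldsymbol{v} := \delta\boldsymbol{u}$. By the definition of $\lde$, the vector $\boldsymbol{v}$ lies in $\Z[\omega]^n$, but at least one of its entries is not divisible by $\delta$ (otherwise $\boldsymbol{u}$ itself would lie in $\Z[\omega]^n$, forcing $\lde(\boldsymbol{u}) \leq 0$). From $\boldsymbol{u}^\dagger\boldsymbol{u} = 1$ we compute $\boldsymbol{v}^\dagger\boldsymbol{v} = \delta^\dagger\delta = (1+\omega^{-1})(1+\omega) = 2+\omega+\omega^{-1} = 2+\sqrt{2}$.

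The next step is to convert this single equation in $\Z[\sqrt{2}]$ into two integer equations. Writing $v_i = a_i\omega^3 + b_i\omega^2 + c_i\omega + d_i$ and reusing the formula from the proof of Lemma~\ref{SingleEntryLemma}, we have $v_i^\dagger v_i = A_i + B_i\sqrt{2}$ where $A_i = a_i^2 + b_i^2 + c_i^2 + d_i^2 \geq 0$ and $B_i = a_ib_i + b_ic_i + c_id_i - a_id_i$. Matching coefficients in the $\Z$-basis $\{1,\sqrt{2}\}$ of $\Z[\sqrt{2}]$ gives
\[
  \sum_i A_i \;=\; 2, \qquad \sum_i B_i \;=\; 1.
\]
Because each $A_i$ is itself a sum of four integer squares, this is extremely restrictive: either exactly two of the $A_i$ equal $1$ (others zero), or exactly one $A_i$ equals $2$ (others zero).

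The core of the argument is a case analysis that collapses both branches into contradictions. In the two-$1$s case, each nonzero $v_i$ has only a single nonzero coordinate equal to $\pm 1$, so $v_i = \pm\omega^\ell$; then every product in $B_i$ vanishes, hence $\sum_i B_i = 0 \ne 1$. In the single-$2$ case, the unique nonzero $v_i$ has two nonzero $\pm 1$ coordinates and so $v_i = \pm\omega^j \pm \omega^k$ for some $0 \leq j < k \leq 3$. Using $\omega \equiv -1 \pmod\delta$ (since $\delta = 1+\omega$) together with $\delta \divides 2$, we see that $\pm\omega^j \pm \omega^k \equiv \pm(-1)^j \pm(-1)^k \in \{0,\pm 2\} \equiv 0 \pmod\delta$. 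Hence $\delta \divides v_i$; combined with all other entries of $\boldsymbol{v}$ being zero, this yields $\delta \divides \boldsymbol{v}$, contradicting our non-divisibility hypothesis.

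The main obstacle, though modest, is the bookkeeping in the case analysis: one must be careful that the only solutions of $\sum A_i = 2$ over non-negative integers of the prescribed shape are the two listed above, and that every two-monomial expression $\pm\omega^j \pm \omega^k$ really is divisible by $\delta$. The three preceding technical lemmas (\ref{AddCongLemma}--\ref{RootTwoModTwoLemma}) suggest an alternative route, namely to reduce $\boldsymbol{v}^\dagger\boldsymbol{v} \equiv \sqrt{2}\pmod{2}$ and count entries by their classes in $\Z[\omega]/(\delta^2)$; this yields the constraints that the count in class $\delta$ is odd and the combined count in classes $1$ and $\delta+1$ is even, which can be combined with a norm estimate to derive the same contradiction. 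However, I find the direct coefficient-matching approach above cleaner, since it avoids any mod-$2$ accounting.
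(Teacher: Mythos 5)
Your proof is correct, and it is worth noting that the paper does not actually prove this lemma at all: the text of Lemma~\ref{HandWaveLemma} in the thesis simply defers to Case~1 of Lemma~8.4 in the cited reference, explicitly stating that the techniques are beyond the scope of the work. You have instead supplied a genuine, self-contained, elementary argument, which is arguably a real improvement given the thesis's stated goal of self-containment. Your route is the natural one: from $\boldsymbol{v}^\dagger\boldsymbol{v} = \delta^\dagger\delta = 2 + \sqrt{2}$, coefficient-matching in the $\Z$-basis $\{1,\sqrt{2}\}$ of $\Z[\sqrt{2}]$ forces $\sum_i A_i = 2$ and $\sum_i B_i = 1$, and the constraint $\sum A_i = 2$ over sums of four integer squares leaves only the two shapes you list. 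The first (two entries of the form $\pm\omega^\ell$) kills $\sum B_i = 1$ outright since each $B_i$ is a sum of products of distinct coordinates and a monomial has only one nonzero coordinate; the second (a single entry $\pm\omega^j \pm \omega^k$, all others zero) is ruled out not by the $B_i$-sum --- which can in fact equal $1$ here, e.g.\ for $\omega^3+\omega^2$ --- but by the $\delta$-divisibility observation $\omega \equiv -1 \pmod{\delta}$ together with $\delta \divides 2$, contradicting minimality of the exponent. The case split and both contradictions check out. One small presentational point: you might make explicit that the second case needs the divisibility argument precisely because $\sum B_i = 1$ alone does not preclude it; as written the reader could wonder why you abandoned the $B_i$ bookkeeping that disposed of the first case.
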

\begin{proof}
This lemma tests our promise to keep our work self-contained; it is proved by techniques beyond the scope of this thesis. See Case~1 in the proof of Lemma~8.4 of~\cite{ROSS-SEL}.
\end{proof}

\begin{lemma}
\label{NotExist4DimLemma}
There does not exist a 4-dimensional unit vector with least
$\delta$-exponent $k$ and $k$-residues $(0, 1, \delta, \delta+1)\pmod{\delta^2}$.
\end{lemma}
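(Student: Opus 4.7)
The plan is to proceed by contradiction, using the norm of $\boldsymbol{u} = \delta^k\boldsymbol{v}$ in two different ways and then comparing via Lemma~\ref{HandWaveLemma}. So assume such a $\boldsymbol{v}$ exists, and let $\boldsymbol{u} = \delta^k\boldsymbol{v}\in\Z[\omega]^4$, whose entries are congruent, in some order, to $0, 1, \delta, \delta+1$ modulo $\delta^2$.

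First I would compute $\boldsymbol{u}^\dagger\boldsymbol{u}$ directly from $\boldsymbol{v}^\dagger\boldsymbol{v}=1$. Since $\delta^\dagger\delta = (1+\omega^{-1})(1+\omega) = 2+\sqrt{2}$, this gives $\boldsymbol{u}^\dagger\boldsymbol{u} = (2+\sqrt{2})^k$. Next, the hypothesis on the residues lets me apply Lemma~\ref{RootTwoModTwoLemma} (the sum $u_1^\dagger u_1 + u_2^\dagger u_2 + u_3^\dagger u_3 + u_4^\dagger u_4$ is symmetric in the four entries, so the ordering does not matter), obtaining $\boldsymbol{u}^\dagger\boldsymbol{u}\equiv\sqrt{2}\pmod{2}$. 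Combining the two computations,
\[
(2+\sqrt{2})^k \equiv \sqrt{2}\pmod{2}.
\]

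All that remains is a case analysis on $k$. For $k=0$ the left side is $1$, and since $(1-\sqrt 2)/2 = \tfrac12 - \tfrac{\omega-\omega^3}{2}$ does not lie in $\Z[\omega]$, we have $1\not\equiv\sqrt 2\pmod 2$. For $k\geq 2$, noting that $(2+\sqrt{2})^2 = 6+4\sqrt{2} = 2(3+2\sqrt{2})$ is already divisible by $2$, the left side is $\equiv 0\pmod 2$, which is not $\sqrt 2$ (same reason). The only remaining value is $k=1$, where the congruence does hold; but this case is precisely the one excluded by Lemma~\ref{HandWaveLemma}. So no $k$ is possible, and the claimed vector does not exist.

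The potential obstacle here is not really a mathematical one --- it is simply making sure the ring-theoretic congruences $1\not\equiv\sqrt{2}\pmod 2$ and $0\not\equiv\sqrt{2}\pmod 2$ in $\Z[\omega]$ are justified. Both are immediate from the fact that $\sqrt{2}/2 = 1/\sqrt{2}\notin\Z[\omega]$, so the argument reduces to routine bookkeeping once the two expressions for $\boldsymbol{u}^\dagger\boldsymbol{u}$ are set against each other and Lemmas~\ref{RootTwoModTwoLemma} and~\ref{HandWaveLemma} are invoked.
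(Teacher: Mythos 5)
Your proof is correct and takes essentially the same approach as the paper's: a three-way case split on $k$, invoking Lemma~\ref{RootTwoModTwoLemma} to pin down $\boldsymbol{u}^\dagger\boldsymbol{u}\pmod 2$ and Lemma~\ref{HandWaveLemma} to exclude $k=1$. The one small difference is in the case $k=0$, which you settle by the same norm comparison $(2+\sqrt2)^0=1\not\equiv\sqrt2\pmod2$, whereas the paper instead cites Lemma~\ref{SingleEntryLemma} to show the residues $\{0,1,\delta,\delta+1\}$ cannot even occur when $k=0$; both are valid, and your variant unifies $k=0$ with $k\geq 2$ slightly more cleanly.
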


\begin{proof}
Suppose \[\boldsymbol{v} = \frac{1}{\delta^k}\begin{bmatrix}
x\\
y\\
z\\
w
\end{bmatrix}\]
 is such a vector.
 \begin{description}
 \item[{\bf Case 1.}]
  $k = 0.$ By Lemma \ref{SingleEntryLemma}, $\{x,y,z,w\} = \{\omega^n,0,0,0\}$ (as multisets). Thus the residues cannot be as in the hypothesis.
  
  \item[{\bf Case 2.}] $k = 1.$ This residue cannot occur, as noted in Lemma \ref{HandWaveLemma}.

\item[{\bf Case 3.}] $k\geq 2.$ Note that $\boldsymbol{v}$ is a unit vector, and so $x^\dag x + y^\dag y + z^\dag z + w^\dag w = (\delta^\dag\delta)^k.$ Since $k\geq 2$, we have $(\delta^\dag\delta)^k\equiv 0\pmod{2}$. Yet, by Lemma \ref{RootTwoModTwoLemma}, $x^\dag x + y^\dag y + z^\dag z + w^\dag w \equiv \sqrt{2}\pmod{2}$,
a contradiction.\qedhere
  
\end{description}

\end{proof}

\subsection{Proof of the Main Lemma}
For convenience, we restate Lemma \ref{TechLemma}: Let $s,t,r$ be states, $s\nedge{N} t$ a normal edge, and $s\edge{G} r$ a basic edge. Then there exists a state $q$, a sequence of normal edges $r\nedge{\vec{N'}}q$, and a sequence of simple edges $t\edge{\vec{G'}}q$, such that 
  \begin{itemize}
  \item $\vec{N'}G\approx\vec{G'}N$, and 
  \item $\level{\vec{G'}} < \level{s}$.
  \end{itemize}
  In pictures:
    \[ \xymatrix{
    s \ar@{=>}[d]_{N} \ar[r]^{G} &
    r \ar@{:>}[d]^{\vec{N'}} \\
    t \ar@{.>}[r]^{\vec{G'}} &
    q.
  }
  \]

   Note that here we make no restrictions on the dimension of the states involved; this is because, for almost all cases, our proof works in full generality for arbitrary states. We are forced to specialise to states of dimension $4$ only in the troublesome Subcase~\ref{bad-case}, for which the preceding section was developed.
   
   The proof proceeds by case distinction. Note that $t$ and $N$ are uniquely determined by $s$, and $r$ is uniquely determined by $G$. Therefore, it suffices to distinguish cases based on the pair $(s,G)$. Some of the cases will be simplified by the following definition:
  
\begin{definition}
\label{RetroEdgeDef}
Let $s$ and $r$ be states. A basic edge $s\edge{G}r$ is \textbf{retrograde} if there is a normal edge $r\nedge{N}s$ such that $NG\approx\epsilon.$
\end{definition}
    
  Observe that if $s\edge{G}r$ is retrograde, then the result easily follows; the modified Giles-Selinger algorithm produces $r\nedge{\vec{N'}}s\nedge{N}t,$ where $\vec{N'}G\approx\epsilon.$ Thus, in these instances, taking $q = t$ and $\vec{G'}\approx\epsilon$ will satisfy the statement of the theorem. In pictures: \[ \xymatrix{
    s \ar@{=>}[dd]_{N}
     \ar[r]^{G} &
    r \ar@{=>}[d]^{\vec{N'}} \\
    &s\ar@{=>}[d]^{N} \\
    t \ar@{.>}[r]^{\epsilon} &
    t.
  }
  \]
 We shall make frequent use of this fact in many of the subcases to come. Before we proceed, it will be convenient to establish a few more conventions, to help the proof's production and readability.

\begin{convention}
\label{ModConv}
  We schematically write $x\,(\delta^n)$ to denote an arbitrary
  element $y\in\Z[\omega]$ such that $y\equiv x\pmod{\delta^n}$, or in
  other words, an arbitrary number of the form $y=x+a\delta^n$, where
  $a\in\Z[\omega]$. For example, we write
  \[ v = \frac{1}{\delta^k}\begin{bmatrix}
    1\,(\delta^3)\\
    1\,(\delta^3)\\
    0\,(\delta)\\
    0\,(\delta)\\
  \end{bmatrix}
  \]
  \vspace{-1.5em}
  \par\noindent
  to denote any vector of the form
  \vspace{-0.5em}
  \[ v = \frac{1}{\delta^k}\begin{bmatrix}
    1 + a\delta^3\\
    1 + b\delta^3\\
    c\delta\\
    d\delta\\
  \end{bmatrix},
  \]
  where $a,b,c,d\in\Z[\omega]$. This convention is useful in
  organising some of the case distinctions that follow.
\end{convention}

\begin{convention}
\label{OmegaConv}
  We globally assume the relation $\omega^8_{[j]}\approx\epsilon$; i.e., we treat powers of $\omega$ as if they were taken modulo 8 without further mention.
\end{convention}

In what follows, we say that the {\em pivot column} of a unitary $n\times n$ matrix $s$ is the rightmost column where $s$ differs from the identity
  matrix.

  \begin{description}
  \case\label{case-1} $G=H_{[1,2]}$.

    \begin{description}
    \subcase The pivot column of $s$ is of the form
      \[ \boldsymbol{v} = \frac{1}{\delta^k}\begin{bmatrix}  
        \omega^{\iota}\,(\delta^3)\\ \omega^{\iota}\,(\delta^3)\\ 
        x\\
        \vdots\\
        z
      \end{bmatrix}.
      \]
      In this case, the modified Giles-Selinger algorithm specifies
      the syllable $N=H_{[1,2]}$, hence $t=r$ and we can choose $\vec{N'}=\epsilon$ and $\vec{G'}=\epsilon$:
      \[ \xymatrix{
        s \ar@{=>}[d]_{H_{[1,2]}} \ar[r]^{H_{[1,2]}} &
        t \ar@{=>}[d]^{\epsilon} \\
        t \ar@{->}[r]^{\epsilon} &
        t
      }
      \]
      Since $H_{[1,2]}$ is a syllable, $\level(t) < \level(s)$.

    \subcase\label{subcase-1.2} The pivot column of $s$ is of the form
      \[ \boldsymbol{v} = \frac{1}{\delta^k}\begin{bmatrix} 
        \omega^\iota\,(\delta^3)\\ \omega^{\iota+1}\,(\delta^3)\\ x\\\vdots\\z
      \end{bmatrix}.
      \]
      In this case, the modified Giles-Selinger algorithm specifies the syllable $s\nedge{H_{[1,2]}\omega_{[1]}}t$. The state
      $r=H_{[1,2]}s$ has pivot column of the form
      \[ H_{[1,2]}\boldsymbol{v} 
      = H_{[1,2]}\frac{1}{\delta^k}\begin{bmatrix} \omega^\iota + a\delta^3\\
      \omega^{\iota+1} +b\delta^3\\
        \vdots
      \end{bmatrix}
      = \frac{1}{\delta^{k}\sqrt{2}}\begin{bmatrix}
        \omega^\iota+\omega^{\iota+1}+(a+b)\delta^3 \\ \omega^\iota-\omega^{\iota+1}+(a-b)\delta^3\\
        \vdots
      \end{bmatrix}
      \]
      \[
      = \frac{1}{\delta^{k+2}}\begin{bmatrix}
        \lambda(\omega^{\iota+1}+\omega^{\iota+2})+\lambda\omega(a+b)\delta^3 \\ \lambda(\omega^{\iota+1}-\omega^{\iota+2})+\lambda\omega(a-b)\delta^)\\
        \vdots
      \end{bmatrix}
      = \frac{1}{\delta^{k+1}}\begin{bmatrix} 
        \omega^m\,(\delta^3)\\
        \omega^m\,(\delta^3)\\
        \vdots
      \end{bmatrix}.
      \]
      To justify the preceding calculation, we note that $\sqrt{2} = \delta^2\lambda^{-1}\omega^{-1}$ and
      $\omega^{\iota+1}\pm\omega^{\iota+2}\equiv\delta\mmod{\delta^3}$. 
            
      The modified Giles-Selinger algorithm specifies the syllable
      $r\nedge{H_{[1,2]}}s$, and hence $s\edge{H_{[1,2]}}s$ is retrograde.
      
\subcase
      The pivot column of $s$ is of the form
       \[ \boldsymbol{v} = \frac{1}{\delta^k}\begin{bmatrix} 
        \omega^{\iota}\,(\delta^3)\\ \omega^{\iota+3}\,(\delta^3)\\ x\\ \vdots\\z
      \end{bmatrix}. 
      \]
	This case is very similar to Subcase~\ref{subcase-1.2}.
	      
    \subcase The pivot column of $s$ is of the form
      \[ \boldsymbol{v} = \frac{1}{\delta^k}\begin{bmatrix} 
        \omega^\iota\,(\delta^3)\\ \omega^{\iota+2}\,(\delta^3)\\ x\\\vdots\\z
      \end{bmatrix}.
      \]
      In this case, the modified Giles-Selinger algorithm specifies
      the syllable
      $s\nedge{H_{[1,2]}\omega^2_{[1]}}t.$ The state
      $r=H_{[1,2]}s$ has pivot column of the form
      \[ H_{[1,2]}\boldsymbol{v} 
      = H_{[1,2]}\frac{\omega^\iota}{\delta^k}\begin{bmatrix} 1 + a\delta^3\\
        \omega^2+b\delta^3\\         
        \vdots
      \end{bmatrix}
      = \frac{\omega^\iota}{\delta^{k}\sqrt{2}}\begin{bmatrix}
        1+\omega^2+(a+b)\delta^3 \\ 1-\omega^2+(a-b)\delta^3\\
\vdots
\end{bmatrix}
      \]
      \[
      = \frac{\lambda\omega^{\iota+1}}{\delta^{k+2}}\begin{bmatrix}
        \delta^2+(a+b)\delta^3-2\omega \\ 
        \delta^2+(a+b)\delta^3-2\omega-2\omega^2-2b\delta^3\\
\vdots
      \end{bmatrix}
      \]
      \[
      = \frac{\lambda\omega^{\iota+1}}{\delta^{k}}\begin{bmatrix}
        1+(a+b)\delta-2\omega/\delta^2 \\ 
        1+(a+b)\delta-2\omega/\delta^2-2\omega^2/\delta^2-2b\delta\\
        \vdots
      \end{bmatrix}
      \]
      \[
      = \frac{1}{\delta^{k+1}}\begin{bmatrix} 
        \omega^m\,(\delta^3)\\
        \omega^m+\delta^2\,(\delta^3)\\\vdots
      \end{bmatrix}
      = \frac{1}{\delta^{k+1}}\begin{bmatrix} 
        \omega^m\,(\delta^3)\\
        \omega^{m+2}\,(\delta^3)\\ \vdots\end{bmatrix}.
      \]
      To justify the preceding calculation, we note that
      $\lambda\omega^{\iota+1}(1+(a+b)\delta-2\omega/\delta^2)$ is an element of
      $\Z[\omega]$ that is not divisible by $\delta$, and hence is
      congruent to $\omega^m\mmod{\delta^3}$, for some $m$. Moreover,
      $-2\omega^2/\delta^2\equiv \delta^2\mmod{\delta^3}$. Finally, it
      is easy to check that for all $m$, we have
      $\omega^m+\delta^2\equiv \omega^{m+2}\mmod{\delta^3}$.

      The modified Giles-Selinger algorithm specifies the syllable
      $r\nedge{H_{[1,2]}\omega^2_{[1]}}q,$ and so we have $\vec{N'}=H_{[1,2]}\omega^2_{[1]}$. Thus, by equation \eqref{ax-r} from Table \ref{tab-equations2}, we can choose $\vec{G'}=X_{[1,2]} \omega^7_{[2]}\omega_{[1]}$:
      \[ \xymatrix{
        s \ar@{=>}[d]_{H_{[1,2]}\omega^2_{[1]}} \ar[rr]^{H_{[1,2]}} &&
        r \ar@{=>}[d]^{H_{[1,2]}\omega^2_{[1]}} \\
        t \ar[rr]^{X_{[1,2]} \omega^7_{[2]}\omega_{[1]}} &&
        q
      }
      \]
Since $H_{[1,2]}\omega^2_{[1]}$ is a syllable, we have $\level(t)<\level(s).$ Moreover, neither $X_{[1,2]}$ nor the $\omega$ gates alter a state's least $\delta$-exponent, and so we must have that $\level(\vec{G')}<\level(s)$.

    \subcase The pivot column of $s$ is of the form
      \[ \boldsymbol{v} = \frac{1}{\delta^k}\begin{bmatrix}  
        \omega^\iota\,(\delta^3)\\
         0\,(\delta)\\
          x\\
          \vdots\\
          z
      \end{bmatrix}.
      \]
      In this case, the state $r=H_{[1,2]}s$ has pivot
      column 
      \[ H_{[1,2]}\boldsymbol{v} = \frac{1}{\delta^{k+2}}\begin{bmatrix}
        \omega^{m}\,(\delta^3)\\ \omega^{m}\,(\delta^3)\\
        \vdots\\
      \end{bmatrix},
      \]
      and the modified Giles-Selinger algorithm specifies the syllable
      $r\nedge{H_{[1,2]}}s.$ By the equation $H_{[1,2]}^2 \approx \epsilon$ (equation $\eqref{ax-g}$ in Table \ref{tab-equations}), we see that $s\edge{H_{[1,2]}}r$ is retrograde, and the result obtains.
      (Note: the case 
      \[ \boldsymbol{v} = \frac{1}{\delta^k}\begin{bmatrix} 
        0\,(\delta)\\\omega^{\iota}\,(\delta^3)\\ x\\ \vdots\\ z
      \end{bmatrix}
      \] is also retrograde for precisely the same reason.)
      
    \subcase The pivot column of $s$ is of the form
      \[ \boldsymbol{v} = \frac{1}{\delta^k}\begin{bmatrix}  
        0\,(\delta)\\ 0\,(\delta)\\
        \vdots\\
        \omega^{\iota}\,(\delta^3)\\ 
		\vdots\\
		0\,(\delta)\\
        \vdots\\        
        \omega^{h}\,(\delta^3)\\
      	x\\
      	\vdots\\
      	z
      \end{bmatrix}.
      \]
    Let $\alpha$ and $\beta$ be the first two indices of $\delta^k\boldsymbol{v}$ which are not equal to $0\pmod{\delta}.$ The modified Giles-Selinger algorithm specifies the syllable $N = s\nedge{H_{[\alpha,\beta]}\omega^{h-\iota}_{[\alpha]}}t.$ The state $r = H_{[1,2]}s$ has pivot column of the form
    
    \[ H_{[1,2]}\boldsymbol{v} = \frac{1}{\delta^k}
    \begin{bmatrix}
    (a+b)\frac{\delta}{\sqrt{2}}\\
    (a-b)\frac{\delta}{\sqrt{2}}\\
    \vdots\\
    \omega^\iota\,(\delta^3)\\
    \vdots\\
    \omega^h\,(\delta^3)\\
    \vdots
    \end{bmatrix}
    ,
    \]
    where $a,b\in\Z[\omega],$ and the least $\delta$-exponent and the modified Giles-Selinger algorithm's action depend upon the values of $a\pm b.$ Thus we must here refine our case distinction.
    
    \begin{description}
    \subsubcase $(a\pm b)\equiv0\pmod{\delta^2}.$ Note $\delta^3\divides (a\pm b)\delta,$ and so $(a\pm b)\frac{\delta}{\sqrt{2}}\equiv0\pmod{\delta}$. Thus the modified Giles-Selinger algorithm prescribes the syllable $r\nedge{H_{[\alpha,\beta]}\omega^{h-\iota}_{[\alpha]}}q,$ so we can take $\vec{N'} = H_{[\alpha,\beta]}\omega^{h-\iota}_{[\alpha]}$ and $\vec{G'} = H_{[1,2]}:$ 
    \[ \xymatrix{
        s \ar@{=>}[d]_{H_{[\alpha,\beta]\omega^{h-\iota}_{[\alpha]}}} \ar[r]^{H_{[1,2]}} &
        r \ar@{=>}[d]^{H_{[\alpha,\beta]\omega^{h-\iota}_{[\alpha]}}} \\
        t \ar@{->}[r]^{H_{[1,2]}} &
        q
      }
      \]
    By direct calculation, we see that $H_{[1,2]}$ does not increase levels, so $\level(s) = \level(r).$ Also, since $H_{[\alpha,\beta]}\omega^{h-\iota}_{[\alpha]}$ is a syllable, $\level(t)<\level(s)$ and $\level(q)<\level(r).$ Hence $\level(t\edge{H_{[1,2]}}q) < \level(s).$ The equation utilised is $H_{[1,2]}H_{[\alpha,\beta]}\omega_{[\alpha]} \approx H_{[\alpha,\beta]}\omega_{[\alpha]}H_{[1,2]}$ (when $2 < \alpha<\beta,$ as is the case here), which follows from equations $\eqref{ax-b}$ and $\eqref{ax-k}$ in Table \ref{tab-equations}.
    
    \subsubcase $a\pm b\equiv\delta\pmod{\delta^2}.$ 
    
    Then $(a+b)\frac{\delta}{\sqrt{2}}\equiv 1\pmod{\delta}$ and $(a+b)\frac{\delta}{\sqrt{2}} \equiv (a - b)\frac{\delta}{\sqrt{2}}\pmod{\delta^3},$ whence 
    \[
	H_{[1,2]}\boldsymbol{v} = \frac{1}{\delta^k}\begin{bmatrix}
	\omega^\ell\,(\delta^3)\\
	\omega^\ell\,(\delta^3)\\
	\vdots\\
	\omega^\iota\,(\delta^3)\\
	\vdots\\
	\omega^h\,(\delta^3)\\
	\vdots
\end{bmatrix}.	    
    \]
    The modified Giles-Selinger algorithm specifies the syllable $r\nedge{H_{[1,2]}}s$, whence $s\edge{H_{[1,2]}}r$ is retrograde.       
      \subsubcase $(a\pm b) \equiv 1\pmod{\delta}.$ Then \[
		H_{[1,2]}\boldsymbol{v} = \frac{1}{\delta^{k+1}}      \begin{bmatrix}
	\omega^\ell\,(\delta^3)\\
	\omega^\ell\,(\delta^3)\\
	\vdots\\
	0\,(\delta)\\
	\vdots\\
	0\,(\delta)
\end{bmatrix}.
      \]
      Thus the modified Giles-Selinger algorithm prescribes the syllable $r\nedge{H_{[1,2]}}s.$ Thus $s\edge{H_{[1,2]}}r$ is (again) retrograde, and so the result holds for this case.       
    \end{description}
  
  \subcase\label{subcase-1.7} The pivot column of $s$ is of the form \[ \boldsymbol{v} = \begin{bmatrix}  
		\omega^{\iota}\\ 
		0\\
		\vdots\\
		0
      \end{bmatrix}.
      \]
      There are two subcases to consider.
      \begin{description}
      \subsubcase $j = 1.$ Then the \textit{second} column of $s$ is \[
	\boldsymbol{v'} = 
	\begin{bmatrix}
	0\\
	1\\
	0\\
	\vdots
\end{bmatrix},	      
      \]
      and the second column (necessarily the pivot column) of $r$ is \[
	H_{[1,2]}\boldsymbol{v'} = \begin{bmatrix}
	1/\sqrt{2}\\
	-1/\sqrt{2}\\
	0\\
	\vdots
\end{bmatrix}.	      
      \]
      Thus the modified Giles-Selinger algorithm specifies the syllable $r\nedge{H_{[1,2]}}s,$ whence $s\edge{H_{[1,2]}}r$ is retrograde.
      
      \subsubcase $j\geq 2.$ Then the pivot column of $r = H_{[1,2]}s$ is \[
H_{[1,2]}\boldsymbol{v} = \begin{bmatrix}  
		\sqrt{2}\omega^{\iota}\\ 
		\sqrt{2}\omega^{\iota}\\ 
		0\\
		\vdots\\
		0
      \end{bmatrix} = \begin{bmatrix}
      \omega^{\iota-1}\delta^2/\lambda\\
      \omega^{\iota-1}\delta^2/\lambda\\
      0\\
      \vdots\\
      0
      \end{bmatrix} = \frac{1}{\delta^2}\begin{bmatrix}
      \omega^m\\
      \omega^m\\
      0\\
      \vdots\\
      0
      \end{bmatrix}.
      \]
      Thus the modified Giles-Selinger algorithm specifies the syllable $r\nedge{H_{[1,2]}}s,$ and therefore $s\edge{H_{[1,2]}}r$ is retrograde (though for a different reason to the above). 
      \end{description}
      
\subcase The pivot column of $s$ is of the form \[\boldsymbol{v} = \begin{bmatrix}  
		0\\
		\omega^{\iota}\\ 
		0\\
		\vdots\\
		0
      \end{bmatrix}.\]
     This is virtually identical to Subcase~\ref{subcase-1.7}.      

      \subcase The pivot column of $s$ is of the form \[
      \boldsymbol{v} = \begin{bmatrix}  
		0\\
		0\\
		\vdots\\
		\omega^{\iota}\\
		\vdots\\
		0
		\end{bmatrix}.
      \]
      Let $\alpha$ be the index of the non-zero row of $\boldsymbol{v},$ and note that $\alpha\leq\jay$ (the index of the pivot column), since the state is unitary. Then the modified Giles-Selinger algorithm prescribes the syllable $s\nedge{\omega^{8-\iota}_{[\jay]}X_{[\alpha,\jay]}}t.$  Observe that the pivot column of $r = H_{[1,2]}s$ is simply $\boldsymbol{v}$ once again. Thus the modified Giles-Selinger algorithm supplies $\vec{N'} = \vec{N} = \omega^{8-\iota}_{[\jay]}X_{[\alpha,\jay]}.$ (Here we make the convention that $X_{[\jay,\jay]}=\epsilon,$ to cover the case $\alpha=\jay$.) Then, following equations $\eqref{ax-b}$ and $\eqref{ax-n}$ from Table \ref{tab-equations} (which affirm that operators with non-intersecting indices commute), we can set $\vec{G'} = H_{[1,2]}:$\[ \xymatrix{
        s \ar@{=>}[d]_{\omega^{8-\iota}_{[\jay]}X_{[\alpha,\jay]}} \ar[r]^{H_{[1,2]}} &
        r \ar@{=>}[d]^{\omega^{8-\iota}_{[\jay]}X_{[\alpha,\jay]}} \\
        t \ar@{->}[r]^{H_{[1,2]}} &
        q
      }
      \]
      With this we have completely described Case~\ref{case-1}.     
  \end{description}  
\vspace{-3ex}
  
\case\label{case-2} $G = \omega_{[1]}.$
	\begin{description}
          \vspace{-1ex}
  		\subcase The pivot column of $s$ is of the form \[ \boldsymbol{v} = \frac{1}{\delta^k}\begin{bmatrix}  
         \omega^{\iota}\,(\delta^3)\\
         0\,(\delta)\\\vdots\\ \omega^{h}\,(\delta^3)\\x\\\vdots\\z
      \end{bmatrix},
      \]
    where 1 and $\alpha$ are the indices of the first two non-zero rows modulo $\delta,$ and $\alpha\geq 2.$ The modified Giles-Selinger algorithm instructs us to take the syllable $N = H_{[1,\alpha]}\omega^{h-\iota}_{[1]}.$ The pivot column of $r = \omega_{[1]}s$ is \[
	\omega_{[1]}\boldsymbol{v} = \frac{1}{\delta^k}\begin{bmatrix}  
         \omega^{\iota+1}\,(\delta^3)\\
         0\,(\delta)\\
         \vdots\\
          \omega^{h}\,(\delta^3)\\\vdots
      \end{bmatrix}.  	
  	\]	
  	The modified Giles-Selinger algorithm gives reduction step $N' = r\nedge{H_{[1,\alpha]}\omega^{h-\iota-1}_{[1]}}t,$ or $r\nedge{H_{[1,\alpha]}\omega^{h-\iota+3}_{[1]}}t,$ depending on whether $h-\iota\equiv 0\pmod{4}$ or not. 

In case $N' = H_{[1,\alpha]}\omega^{h-\iota-1}_{[1]},$ $s\nedge{\omega_{[1]}}r$ is `almost' retrograde. More precisely, we have:\[
\xymatrix{
s\ar@{=>}[d]_{H_{[1,\alpha]}\omega^{h-\iota}}
\ar@{->}[r]^{\omega_{[1]}} & 
r \ar@{=>}[d]^{H_{[1,\alpha]}\omega^{h-\iota-1}_{[1]}}\\
t\ar@{->}[r]^{\epsilon} & 
t}  	
  	\]
  	Thus the statement of the theorem holds for the same reasons as in retrograde cases.
  	
  	In case $N' = H_{[1,\alpha]}\omega^{h-\iota+3}_{[1]},$ we use the equation ($\eqref{ax-m}$ from Table \ref{tab-equations2}) $H_{[1,\alpha]}\omega^4_{[1]}\approx X_{[1,\alpha]}H_{[1,\alpha]}$ to get:\[
\xymatrix{
s\ar@{=>}[d]_{H_{[1,\alpha]}\omega^{h-\iota}_{[1]}}
\ar@{->}[r]^{\omega_{[1]}} & 
r \ar@{=>}[d]^{H_{[1,\alpha]}\omega^{h-\iota+3}_{[1]}}\\
t\ar@{->}[r]^{X_{[1,\alpha]}} & 
q.
}  	
  	\]
	
   	\subcase The pivot column of $s$ is of the form \[ \boldsymbol{v} = \frac{1}{\delta^k}\begin{bmatrix}  
		0\,(\delta)\\
		\vdots\\
		\omega^{\iota}\,(\delta^3)\\ 
		0\,(\delta)\\
		\vdots\\
		\omega^{h}\,(\delta^3)\\ 
		x\\
		\vdots\\
		z
      \end{bmatrix}.
      \]
    In this case, the pivot column of $r = \omega_{[1]}s$ is of the same form, too. Thus the modified Giles-Selinger algorithm prescribes the same syllable $N = N' = H_{[\alpha,\beta]}\omega^{h-\iota}_{[\alpha]}$ in the states $s$ and $r.$ Equations \eqref{ax-w} and \eqref{ax-b} from Table \ref{tab-equations} allow us to take $\vec{G'} = \omega_{[1]}$:\[ \xymatrix{
        s \ar@{=>}[d]_{H_{[\alpha,\beta]}\omega^{h-\iota}_{[\alpha]}} \ar[r]^{\omega_{[1]}} &
        r \ar@{=>}[d]^{H_{[\alpha,\beta]}\omega^{h-\iota}_{[\alpha]}} \\
        t \ar@{->}[r]^{\omega_{[1]}} &
        q
      }
      \]
      Since $\omega_{[1]}$ does not affect the state's level and $H_{[\alpha,\beta]}\omega^{h-\iota}_{[\alpha]}$ is a syllable, we must have $\level(t) = \level(q) < \level(s).$ 
   	\end{description}
\subcase The pivot column of $s$ is of the form \[     
      \boldsymbol{v} = \begin{bmatrix}  
		\omega^{\iota}\\
		0\\
		\vdots\\
		0
		\end{bmatrix}.
      \]
Then the modified Giles-Selinger algorithm specifies the syllable $s\nedge{\omega^{8-\iota}_{[\jay]}X_{[1,\jay]}}t,$ while the pivot column of $r = \omega_{[1]}s$ is \[
      \boldsymbol{v} = \begin{bmatrix}  
		\omega^{\iota+1}\\
		0\\
		\vdots\\
		0
		\end{bmatrix}.
\]
Thus the modified Giles-Selinger algorithm prescribes reduction step $r\nedge{\omega^{8-\iota-1}_{[\jay]}X_{[1,\jay]}}t.$ (Note that we have here again made the convention that $X_{[\jay,\jay]}=\epsilon,$ to condense the presentation of the case distinction.) Schematically:\[ \xymatrix{
        s \ar@{=>}[d]_{\omega^{8-\iota}_{[1]}X_{[1,\jay]}} \ar[r]^{\omega_{[1]}} &
        r \ar@{=>}[d]^{\omega^{8-\iota-1}_{[1]}X_{[1,\jay]}} \\
        t \ar@{->}[r]^{\epsilon} &
        t.
      }
      \]

\subcase The pivot column of $s$ is of the form \[      
v = \begin{bmatrix}  
		0\\
		\vdots\\
		\omega^{\iota}\\
		0\\
		\vdots\\
		0
		\end{bmatrix}.
\]
Let $\jay$ be the index of the pivot column, and let $\alpha>1$ be the index of the non-zero entry of $\boldsymbol{v}.$ Again making the convention that $X_{[\jay,\jay]}=\epsilon.$ Then the modified Giles-Selinger algorithm prescribes reduction step $s\nedge{\omega^{8-\iota}_{[\jay]}X_{[\alpha,\jay]}}t,$ while the pivot column of $r = \omega_{[1]}s$ is just $\boldsymbol{v}$ once again, whence the modified Giles-Selinger algorithm specifies the reduction step $r\nedge{\omega^{8-\iota}_{[\jay]}X_{[\alpha,\jay]}}t.$ Thus we have:\[ \xymatrix{
        s \ar@{=>}[d]_{\omega^{8-\iota}_{[1]}X_{[\alpha,\jay]}} \ar[r]^{\omega_{[1]}} &
        r \ar@{=>}[d]^{\omega^{8-\iota}_{[1]}X_{[\alpha,\jay]}} \\
        t \ar@{->}[r]^{\omega_{[1]}} &
        q.
      }
      \]   	
      The correctness of this is ensured by equations $\eqref{ax-w}$ and $\eqref{ax-c}$ from Table \ref{tab-equations}. Moreover, $\level(t) = \level(q) < \level(s).$ We have thus exhausted all subcases for Case~\ref{case-2}.

  \case\label{case-3} $G=X_{[1,2]}$.

    \begin{description}
    \subcase The pivot column of $s$ is of the form
      \[ \boldsymbol{v} = \frac{1}{\delta^k}\begin{bmatrix}  
        \omega^{\iota}\,(\delta^3)\\ 
        \omega^{h}\,(\delta^3)\\  
        x\\
        \vdots\\
        z
      \end{bmatrix},
      \]
    Note that the pivot column of $r = X_{[1,2]}s$ is then \[
    \boldsymbol{v} = \frac{1}{\delta^k}\begin{bmatrix}  
        \omega^{h}\,(\delta^3)\\  
        \omega^{\iota}\,(\delta^3)\\ 
        x\\
        \vdots\\
        z
      \end{bmatrix}.
    \] 
    \newpage 
    The modified Giles-Selinger algorithm specifies syllables $N = H_{[1,2]}\omega^{h-\iota\pmod{4}}_{[1]}$ and $\vec{N'} = H_{[1,2]}\omega^{\iota-h\pmod{4}}_{[1]}.$ There are four scenarios, depending on $h-\iota\pmod{4}$:\[ \xymatrix{
        s \ar@{=>}[d]_{H_{[1,2]}} \ar[r]^{X_{[1,2]}} &
        r \ar@{=>}[d]^{H_{[1,2]}} \\
        t \ar@{->}[r]^{\omega^{4}_{[2]}} &
        q
      }
      \xymatrix{
        s \ar@{=>}[d]_{H_{[1,2]}\omega_{[1]}} \ar[rr]^{X_{[1,2]}} &&
        r \ar@{=>}[d]^{H_{[1,2]}\omega^3_{[1]}} \\
        t \ar@{->}[rr]^{\omega^7_{[1]}\omega^{3}_{[2]}X_{[1,2]}} &&
        q
      }
      \xymatrix{
        s \ar@{=>}[d]_{H_{[1,2]}\omega^2_{[1]}} \ar[rr]^{X_{[1,2]}} &&
        r \ar@{=>}[d]^{H_{[1,2]}\omega^2_{[1]}} \\
        t \ar@{->}[rr]^{\omega^6_{[1]}\omega^{2}_{[2]}X_{[1,2]}} &&
        q
      }
      \]
      \[
      \xymatrix{
        s \ar@{=>}[d]_{H_{[1,2]}\omega^3_{[1]}} \ar[rr]^{X_{[1,2]}} &&
        r \ar@{=>}[d]^{H_{[1,2]}\omega_{[1]}} \\
        t \ar@{->}[rr]^{\omega^5_{[1]}\omega_{[2]}X_{[1,2]}} &&
        q
      }
\]
Clockwise from the top, we utilise equation $\eqref{ax-h}$ and then equations $\eqref{ax-z_1}$ through $\eqref{ax-z_3}$ from Tables \ref{tab-equations} and \ref{tab-equations2}. In all cases, $\level(\vec{G}) < \level(s),$ because $N$ is a syllable and the $\omega$ and $X$ operations do not change $\delta$-exponents.

  \subcase\label{subcase-3.2} The pivot column of $s$ is of the form \[ v = \frac{1}{\delta^k}\begin{bmatrix}  
        \omega^{\iota}\,(\delta^3)\\ 
        0\,(\delta)\\
        \vdots\\
        \omega^{h}\,(\delta^3)\\
		x\\
        \vdots\\
        z
      \end{bmatrix}.
      \]
  The modified Giles-Selinger algorithm prescribes the syllable $\vec{N} = H_{[1,\alpha]}\omega^{h-\iota}_{[1]},$ where $\alpha$ corresponds to the second non-zero entry of $\boldsymbol{v}.$ The pivot column of $r = X_{[1,2]}s$ is \[ X_{[1,2]}v = \frac{1}{\delta^k}\begin{bmatrix}  
         0\,(\delta)\\
         \omega^{\iota}\,(\delta^3)\\
         0\,(\delta)\\
         \vdots\\
		 \omega^{h}\,(\delta^3)\\
		 x\\
 		\vdots\\
 		z
      \end{bmatrix}.
      \]
      The modified Giles-Selinger algorithm specifies syllable $r\nedge{H_{[2,\alpha]}\omega^{h-\iota}_{[2]}}q,$ and by the equations $X_{[1,2]}H_{[2,\alpha]}\approx H_{[1,\alpha]}X_{[1,2]}$ and $\omega_{[1]}X_{[1,2]} \approx X_{[1,2]}\omega_{[2]}$ (respectively $\eqref{ax-p}$ and $\eqref{ax-c}$ from Tables \ref{tab-equations2} and \ref{tab-equations}), we can take $\vec{G'} = X_{[1,2]}$ and $\vec{N'} = H_{[2,3]}:$\[ \xymatrix{
        s \ar@{=>}[d]_{H_{[1,\alpha]}\omega^{h-\iota}_{[1]}} \ar[r]^{X_{[1,2]}} &
        r \ar@{=>}[d]^{H_{[2,\alpha]}\omega^{h-\iota}_{[2]}} \\
        t \ar@{->}[r]^{X_{[1,2]}} &
        q
      }
      \]
      Since $X_{[1,2]}$ does not increase $\delta$-exponents, and the normal edges are syllables, it is clear that $\level(t) = \level(q) < \level(s).$ 
      
\subcase The pivot column of $s$ is of the form \[
\boldsymbol{v} = \begin{bmatrix}
0\,(\delta)\\
\omega^\iota\,(\delta^3)\\
0\,(\delta)\\
\vdots\\
\omega^h\,(\delta^3)\\
x\\
\vdots\\
z
\end{bmatrix}.
\]
This case is virtually identical to Subcase~\ref{subcase-3.2}.

\subcase The pivot column of $s$ is of the form \[ \boldsymbol{v} = \frac{1}{\delta^k}\begin{bmatrix}  
         0\,(\delta)\\ 0\,(\delta)\\
         \vdots\\         
         \omega^{\iota}\,(\delta^3)\\
         0\,(\delta)\\
         \vdots\\
          0\,(\delta)\\
          \omega^{h}\,(\delta^3)\\
          x\\          
          \vdots\\
          z          
      \end{bmatrix}.
      \]
      The modified Giles-Selinger algorithm gives reduction step $s\nedge{H_{[\alpha,\beta]}\omega^{h-\iota}_{[\alpha]}}t,$ where $\alpha$ and $\beta$ correspond to the first two non-zero entries of $\boldsymbol{v}.$ Observe that the pivot column of $r = X_{[1,2]}s$ is of the same form as $\boldsymbol{v}$. Thus the modified Giles-Selinger algorithm specifies the syllable $r\nedge{H_{[\alpha,\beta]}\omega^{h-\iota}_{[\alpha]}}q.$ 
      
      Then equations \eqref{ax-e} and \eqref{ax-n} from Table \ref{tab-equations} allow us to take $\vec{G'} = X_{[1,2]}:$\[ \xymatrix{
        s \ar@{=>}[d]_{H_{[3,4]}\omega^{h-\iota}_{[3]}} \ar[r]^{X_{[1,2]}} &
        r \ar@{=>}[d]^{H_{[3,4]}\omega^{h-\iota}_{[3]}} \\
        t \ar@{->}[r]^{X_{[1,2]}} &
        q
      }
      \]
      Since $H_{[3,4]}\omega^{h-\iota}_{[3]}$ is a syllable and $X_{[1,2]}$ does not affect least $\delta$-exponents, we must have $\level(\vec{G'}) < \level(s).$ 
            
\subcase The pivot column of $s$ is of the form \[
      \boldsymbol{v} = \begin{bmatrix}  
		\omega^{\iota}\\
		0\\
		\vdots\\
		0
		\end{bmatrix}.
      \]
Let $\jay$ be the index of the pivot column. There are two cases to consider.

\begin{description}

\subsubcase $j=1.$ Again, we look at the second column $s,$ which is $e_2.$ Thus the second (necessarily pivot) column of $r = X_{[1,2]}s$ is $e_1,$ whence the modified Giles-Selinger algorithm prescribes the syllable $r\nedge{X_{[1,2]}}s.$ Therefore $s\edge{X_{[1,2]}}r$ is retrograde.

\subsubcase\label{sscase-3.5.2} $j > 1.$ Then the modified Giles-Selinger algorithm provides the syllable $s\nedge{\omega^{8-\iota}_{[\jay]}X_{[1,\jay]}}t,$ while the pivot column of $r = X_{[1,2]}s$ is \[
      X_{[1,2]}\boldsymbol{v} = \begin{bmatrix}  
		0\\
		\omega^{\iota}\\
		0\\
		\vdots\\
		0
		\end{bmatrix}.      
      \]
      The modified Giles-Selinger algorithm supplies the syllable $r\nedge{\omega^{8-\iota}_{[\jay]}X_{[2,\jay]}}t.$ \[ 
      \xymatrix{
        s \ar@{=>}[d]_{\omega^{8-\iota}_{[\jay]}X_{[1,\jay]}} \ar[r]^{X_{[1,2]}} &
        r \ar@{=>}[d]^{\omega^{8-\iota}_{[\jay]}X_{[2,\jay]}} \\
        t \ar@{->}[r]^{X_{[1,2]}} &
        q.
      }
      \]
      Equation $\eqref{ax-j}$ from Table \ref{tab-equations} assures us that the diagram commutes. Moreover, $\level(\vec{G'})<\level(s).$
\end{description}   
    
\subcase      
      The pivot column of $s$ is of the form \[
      \boldsymbol{v} = \begin{bmatrix}  
		0\\
		\omega^{\iota}\\
		0\\
		\vdots\\
		0
		\end{bmatrix}
      \]
      This virtually identical to Subcase~\ref{sscase-3.5.2}, as the index of the pivot column is necessarily greater than $1.$

\subcase The pivot column of $s$ is of the form \[
      \boldsymbol{v} = \begin{bmatrix}  
		0\\
		0\\
		\vdots\\
		\omega^{\iota}\\
		0\\
		\vdots\\
		0
		\end{bmatrix}.      
      \]
     Let $\alpha\geq 3$ be the index of the non-zero entry. Then the modified Giles-Selinger algorithm prescribes the reduction step $s\nedge{\omega^{8-\iota}_{[\jay]}X_{[\alpha,\jay]}}t,$ where we again make the local convention that $X_{[\jay,\jay]}=\epsilon,$ to dispose of the case $\alpha=\jay.$ The pivot column of $r = X_{[1,2]}s$ is again $\boldsymbol{v},$ and so the modified Giles-Selinger algorithm provides the reduction step $r\nedge{\omega^{8-\iota}_{[\jay]}X_{[\alpha,\jay]}}t.$ This allows us to take $\vec{G'} = X_{[1,2]}:$\[ 
      \xymatrix{
        s \ar@{=>}[d]_{\omega^{8-\iota}_{[\jay]}X_{[\alpha,\jay]}} \ar[r]^{X_{[1,2]}} &
        r \ar@{=>}[d]^{\omega^{8-\iota}_{[\jay]}X_{[\alpha,\jay]}} \\
        t \ar@{->}[r]^{X_{[1,2]}} &
        q.
      }
      \]
      This provides a complete description of Case~\ref{case-3}.
      
    \end{description}
  \case\label{case-4} $G = X_{[2,3]}.$ 
\begin{description}

\subcase Here we consider $\level(s) = (\jay,\kay,m).$ Suppose $\jay\leq 2;$ then for $\ell>2,$ the $\ell$-th column of $s$ is equal to $e_\ell.$ In this case, it is more convenient to display the whole matrix, rather than simply the pivot column: \[
s = \left[\begin{matrix}
x_1&y_1&0&\cdots&0\\
x_2&y_2&0&\cdots&0\\
x_3&y_3&1&\cdots&0\\
\vdots&\vdots&\vdots&1&\vdots\\
x_n&y_n&0&\cdots&1
\end{matrix}
\right].
\]
Then the edge $s\edge{X_{[2,3]}}r$ results in the following: \[
r = \left[\begin{matrix}
x_1&y_1&0&0&\cdots&0\\
x_2&y_2&0&0&\cdots&0\\
x_4&y_4&0&1&\cdots&0\\
x_3&y_3&1&0&\cdots&0\\
\vdots&\vdots&\vdots&\vdots&1&\vdots\\
x_n&y_n&0&\cdots&0&1
\end{matrix}
\right].
\]
Thus $\level(r) = (\jay+1,0,0);$ the modified Giles-Selinger algorithm specifies the reduction step $r\nedge{X_{[2,3]}}s.$ Therefore $s\edge{X_{[2,3]}}r$ is retrograde, and the result follows.

\subcase The pivot column of $s$ is of the form \[
\boldsymbol{v} = \frac{1}{\delta^k}\begin{bmatrix}
\omega^\iota\,(\delta^3)\\
\omega^h\,(\delta^3)\\
0\,(\delta)\\
x\\
\vdots\\
z
\end{bmatrix}.
\]
The modified Giles-Selinger algorithm prescribes the syllable $H_{[1,2]}\omega^{h-\iota}_{[1]},$ while the pivot column of $r = X_{[2,3]}s$ is \[
X_{[2,3]}\boldsymbol{v} = \frac{1}{\delta^k}\begin{bmatrix}
\omega^\iota\,(\delta^3)\\
0\,(\delta)\\
\omega^h\,(\delta^3)\\
x\\
\vdots\\
z
\end{bmatrix},
\]
and so the modified Giles-Selinger algorithm prescribes the syllable $H_{[1,3]}\omega^{h-\iota}_{[1]}.$ The equations $X_{[2,3]}H_{[1,2]}\approx H_{[1,3]}X_{[2,3]}$ and $\omega_{[1]}X_{[2,3]}\approx X_{[2,3]}\omega_{[1]}$ ($\eqref{ax-o'}$ and $\eqref{ax-e}$ from Table \ref{tab-equations}, respectively) allow us to select $\vec{G'}=X_{[2,3]}$:\[ \xymatrix{
        s \ar@{=>}[d]_{H_{[1,2]}\omega^{h-\iota}_{[1]}} \ar[r]^{X_{[2,3]}} &
        r \ar@{=>}[d]^{H_{[1,3]}\omega^{h-\iota}_{[1]}} \\
        t \ar@{->}[r]^{X_{[2,3]}} &
        q.
      }
      \]
Since $X_{[2,3]}$ does not affect the level of the state and $H_{[1,\alpha]}\omega^{h-\iota}_{[1]}$ are syllables, we have that $\level(t\edge{X_{[2,3]}}q)<\level(s).$

Note that this case is essentially the same as Case~\ref{subcase-3.2}, with a simple permutation of the indices of the equations used. Almost all other cases are directly analogous to Case~\ref{case-2} in the same way; i.e., with suitable permutations of the indices of equations. Below, we detail the only case which substantially differs.

\subcase\label{subcase-4.3} The pivot column of $s$ is of the form \[
	\boldsymbol{v}=\frac{1}{\delta^k}\begin{bmatrix}
	1\,(\delta^3)\\
	1\,(\delta^3)\\
	1\,(\delta^3)\\
	1\,(\delta^3)\\
	\end{bmatrix}.
\]
Let \[
	\boldsymbol{v}=\frac{1}{\delta^k}\begin{bmatrix}
	1+a\delta^3\\
	1+b\delta^3\\
	1+c\delta^3\\
	1+d\delta^3\\
	\end{bmatrix},
\]
where $a,b,c,d\in\Z[\omega].$ 

Observe that it is here, and in particular in Subcase~\ref{bad-case}, that we must restrict our attention to the case $n=4.$ 

The pivot column of $r = X_{[2,3]}r$ is \[
X_{[2,3]}\boldsymbol{v}=\frac{1}{\delta^k}\begin{bmatrix}
	1+a\delta^3\\
	1+c\delta^3\\
	1+b\delta^3\\
	1+d\delta^3\\
\end{bmatrix}.
\]
The modified Giles-Selinger algorithm specifies the syllable $H_{[1,2]}$ for \textit{both} $s$ and $r,$ though the results of these reductions will in general be distinct. Thus our analysis must be refined. Our strategy will be to explore a second application of the modified Giles-Selinger algorithm to each of our states: \[ \xymatrix{
        s \ar@{=>}[d]_{H_{[1,2]}} \ar[r]^{X_{[2,3]}} &
        r \ar@{=>}[d]^{H_{[1,2]}} \\
        t \ar@{=>}[d]_{H_{[3,4]}} &
        q \ar@{=>}[d]^{H_{[3,4]}} \\
        t' \ar@{->}[r]^{\vec{G''}}&
        q'.
      }
      \]
It is clear that completing the above diagram will suffice, since if we can find an appropriate value for $\vec{G''}$ which follows from our equations, we can conclude that $\vec{G'} = t\edge{H_{[3,4]}\vec{G''}H_{[3,4]}}q$ will satisfy the statement of the lemma. Note that the pivot column of $t'$ is \[\boldsymbol{v_{t'}}=\frac{1}{\delta^k\sqrt{2}}\begin{bmatrix}
	2 + (a+b)\delta^3\\
	(a-b)\delta^3\\
	2 + (c+d)\delta^3\\
	(c-d)\delta^3\\
\end{bmatrix},\]
while the pivot column of $q'$ is \[\boldsymbol{v_{q'}}=\frac{1}{\delta^k\sqrt{2}}\begin{bmatrix}
	2 + (a+c)\delta^3\\
	(a-c)\delta^3\\
	2 + (b+d)\delta^3\\
	(b-d)\delta^3\\
\end{bmatrix}.\]
In the following calculation, recall that $\sqrt{2}=\delta^2\omega^{-1}\lambda^{-1}$ and so $2 = \delta^4\omega^{-2}\lambda^{-2}.$ Let $D = \frac{2}{\delta^3} = \delta\omega^{-2}\lambda^{-2},$ and note that $D\equiv\delta\pmod{\delta^2.}$ Thus we have \[\boldsymbol{v_{t'}}=\frac{\lambda\omega}{\delta^{k+2}}\begin{bmatrix}
	2 + (a+b)\delta^3\\
	(a-b)\delta^3\\
	2 + (c+d)\delta^3\\
	(c-d)\delta^3\\
\end{bmatrix} = \frac{\lambda\omega}{\delta^{k-1}}\begin{bmatrix}
	D + (a+b)\\
	(a-b)\\
	D + (c+d)\\
	(c-d)\\
\end{bmatrix}.\]
Similarly \[\boldsymbol{v_{q'}}=\frac{\lambda\omega}{\delta^{k-1}}\begin{bmatrix}
	D + (a+c)\\
	(a-c)\\
	D + (b+d)\\
	(b-d)\\
\end{bmatrix}.\]

We are now ready to delve into further subcases.

\begin{description}
\subsubcase $a+b+c+d \equiv0\pmod{\delta^2}.$ 

Note that $H_{[2,4]}H_{[1,3]}t'$ has pivot column \[
H_{[2,4]}H_{[1,3]}\boldsymbol{v_{t'}} = \frac{\lambda\omega}{\delta^{k-1}\sqrt{2}}\begin{bmatrix}
  2D + (a+b+c+d)\\
  (a-b+c-d)\\
  (a+b-c-d)\\
  (a-b-c+d)\\
\end{bmatrix},
\]
while $H_{[2,4]}H_{[1,3]}q'$ has pivot column \[
H_{[2,4]}H_{[1,3]}\boldsymbol{v_{q'}} = \frac{\lambda\omega}{\delta^{k-1}\sqrt{2}}\begin{bmatrix}
  2D + (a+b+c+d)\\
  (a+b-c-d)\\
  (a-b+c-d)\\
  (a-b-c+d)\\
\end{bmatrix}.
\]
It is clear that the above two vectors differ only by $X_{[2,3]};$ i.e., \[ \xymatrix{
        s \ar@{=>}[d]_{H_{[1,2]}} \ar[r]^{X_{[2,3]}} &
        r \ar@{=>}[d]^{H_{[1,2]}} \\
        t \ar@{=>}[d]_{H_{[3,4]}} &
        q \ar@{=>}[d]^{H_{[3,4]}} \\
        t' \ar@{->}[d]_{H_{[2,4]}H_{[1,3]}}&
        q' \ar@{->}[d]^{H_{[2,4]}H_{[1,3]}} \\
        t'' \ar@{->}[r]^{X_{[2,3]}}
       &
      q''.
      }
      \]
The above diagram commutes thanks to equation $\eqref{ax-s}$ from Table \ref{tab-equations2}. Note also that $2D$ and $(a+b+c+d)$ are divisible by $\delta^2,$ so the least $\delta$-exponents of $t''$ and $q''$ are less than that of $s.$ Clearly the levels of $t,t',q,$ and $q'$ are also less than that of $s,$ whence we are assured that $\vec{G'}= t\edge{H_{[3,4]}H_{[2,4]}H_{[1,3]}X_{[2,3]}H_{[2,4]}H_{[1,3]}H_{[3,4]}}q$ is an appropriate choice. 

\subsubcase $a+b+c+d = \delta\pmod{\delta^2}.$ This case is similar to the above; to wit,
\[
H_{[1,4]}H_{[2,3]}\boldsymbol{v_{t'}} = \frac{\lambda\omega}{\delta^{k-1}\sqrt{2}}\begin{bmatrix}
D + (a+b+c-d)\\
D + (a-b+c+d)\\
-D + (a-b-c-d)\\
D + (a+b-c+d))\\
\end{bmatrix},\]
and \[
H_{[1,4]}H_{[2,3]}\boldsymbol{v_{q'}} = \frac{\lambda\omega}{\delta^{k-1}\sqrt{2}}\begin{bmatrix}
 D + (a+b+c-d)\\
 D + (a+b-c+d)\\
 -D + (a-b-c-d)\\
 D + (a-b+c+d))\\
 \end{bmatrix}.
 \]
Observe that the two states differ by $X_{[2,4]}:$\[ \xymatrix{
        s \ar@{=>}[d]_{H_{[1,2]}} \ar[r]^{X_{[2,3]}} &
        r \ar@{=>}[d]^{H_{[1,2]}} \\
        t \ar@{=>}[d]_{H_{[3,4]}} &
        q \ar@{=>}[d]^{H_{[3,4]}} \\
        t' \ar@{->}[d]_{H_{[1,4]}H_{[2,3]}}&
        q' \ar@{->}[d]^{H_{[1,4]}H_{[2,3]}} \\
        t'' \ar@{->}[r]^{X_{[2,4]}}
       &
      q''.
      }
      \]
      The correctness of the above diagram follows from equation $\eqref{ax-t}$ of Table \ref{tab-equations2}. Moreover, since $D + (a+b+c+d)$ is divisible by $\delta^2$, the
levels of $t''$ and $q''$ are as they
should be.

\subsubcase\label{bad-case} $ a+b+c+d = 1\pmod{\delta}.$ Recall from above that \[
\boldsymbol{v_{t'}}=\frac{\lambda\omega}{\delta^{k-1}}\begin{bmatrix}
	D + (a+b)\\
	(a-b)\\
	D + (c+d)\\
	(c-d)\\
\end{bmatrix}.
\]
Let $x = D+a+b,$ $y = a - b,$ $z = D+c+d,$ and $w = c-d.$ Note that $x,y,z,w$ are all distinct modulo $\delta^2.$ 

Since $\lambda\omega$ is a unit of the ring under consideration, it follows that $\lambda\omega x,\lambda\omega y, \lambda\omega z,$ and $\lambda\omega w$ are also distinct modulo $\delta^2.$ Therefore these values encompass $0,1,1+\delta,$ and $\delta$ (all $(\bmod~\delta^2)$) in some order. But the vector \[
\boldsymbol{v_{t'}} = \frac{1}{\delta^{k-1}}\begin{bmatrix}
\lambda\omega x\\
\lambda\omega y\\
\lambda\omega z\\
\lambda\omega w
\end{bmatrix}
\]
is unitary, directly contradicting Lemma \ref{NotExist4DimLemma}. This shows that Subcase~\ref{bad-case} cannot arise.

\end{description}

\subcase The pivot column of $s$ is of the form \[
	\boldsymbol{v}=\frac{1}{\delta^k}\begin{bmatrix}
	\omega^\iota\,(\delta^3)\\
	\omega^h\,(\delta^3)\\
	\omega^\ell\,(\delta^3)\\
	\omega^\tau\,(\delta^3)\\
	\end{bmatrix}.
\]
We will reduce this to Subcase~\ref{subcase-4.3} as follows.

The modified Giles-Selinger algorithm  specifies the syllable $s\nedge{H_{[1,2]}\omega^m_{[1]}}s',$ where $m\equiv h-\iota\pmod{4}.$ Now consider $\vec{G_1} = s'\edge{\omega^{-h}_{[1]}\omega^{-h}_{[2]}\omega^{-\ell}_{[3]}\omega^{-\tau}_{[4]}}t.$ To compute the pivot column $\boldsymbol{v_t}$ of $t,$ we use \[
\begin{array}{ccc}
\boldsymbol{v_t}&=&\omega^{-h}_{[1]}\omega^{-h}_{[2]}\omega^{-\ell}_{[3]}\omega^{-\tau}_{[4]}\boldsymbol{v_{s'}}\\
&=&\omega^{-h}_{[1]}\omega^{-h}_{[2]}\omega^{-\ell}_{[3]}\omega^{-\tau}_{[4]}H_{[1,2]}\omega^m_{[1]}\boldsymbol{v}\\
&=&H_{[1,2]}\omega^{m-h}_{[1]}\omega^{-h}_{[2]}\omega^{-\ell}_{[3]}\omega^{-\tau}_{[4]}\boldsymbol{v}.
\end{array}
\]
Here we have used equations $\eqref{ax-w},\eqref{ax-b}$ and $\eqref{ax-f'}$ from Table \ref{tab-equations} to derive the final equality.

But $m-h\equiv-\iota\pmod{4},$ so \[\omega^{m-h}_{[1]}\omega^{-h}_{[2]}\omega^{-\ell}_{[3]}\omega^{-\tau}_{[4]}\boldsymbol{v} = \begin{bmatrix}
1\,(\delta^3)\\
1\,(\delta^3)\\
1\,(\delta^3)\\
1\,(\delta^3)\\
\end{bmatrix}.
\]
Therefore \[\boldsymbol{v_t} = H_{[1,2]}\begin{bmatrix}
1+a\delta^3\\
1+b\delta^3\\
1+c\delta^3\\
1+d\delta^3\\
\end{bmatrix},
\]
which is precisely the situation obtained in Subcase~\ref{subcase-4.3}. Observe also that the pivot column of $r$ is \[
\boldsymbol{v_r} = X_{[2,3]}\boldsymbol{v}=\frac{1}{\delta^4}\begin{bmatrix}
	\omega^\iota\,(\delta^3)\\
	\omega^\ell\,(\delta^3)\\
	\omega^h\,(\delta^3)\\
	\omega^\tau\,(\delta^3)\\
\end{bmatrix}.
\]
The modified Giles-Selinger algorithm specifies the syllable $r\nedge{H_{[1,2]}\omega^\phi_{[1]}}r',$ where $\phi\equiv \ell-\iota\pmod{4}.$

Consider $r'\edge{\omega^{-\ell}_{[1]}\omega^{-\ell}_{[2]}\omega^{-h}_{[1]}\omega^{-\tau}_{[1]}}q.$ Using exactly the same calculation as for $s,$ we find that \[
\boldsymbol{v_q} = H_{[1,2]}\begin{bmatrix}
1+a\delta^3\\
1+c\delta^3\\
1+b\delta^3\\
1+d\delta^3\\
\end{bmatrix}.
\]
Therefore, we end up in the same position as in Subcase~\ref{subcase-4.3}; to wit:\[ \xymatrix{
        s \ar@{=>}[d]_{H_{[1,2]}\omega^m_{[1]}} \ar[r]^{X_{[2,3]}} &
        r \ar@{=>}[d]^{H_{[1,2]}\omega^\phi_{[1]}} \\
        s' \ar@{->}[d]_{\omega^{-h}_{[1]}\omega^{-h}_{[2]}\omega^{-\ell}_{[3]}\omega^{-\tau}_{[4]}} &
        r' \ar@{->}[d]^{\omega^{-\ell}_{[1]}\omega^{-\ell}_{[2]}\omega^{-h}_{[3]}\omega^{-\tau}_{[4]}} \\
        t&
        q,
      }
      \]
      where $t,q$ are exactly as in Subcase~\ref{subcase-4.3}. Thus the current case is completed precisely as in that case.
\end{description}      
  \case\label{case-5} $G = X_{[\alpha,\alpha+1]}$ for $\alpha > 2.$ This case is entirely analogous to Case~\ref{case-4}, as all arising subcases differ from Case~\ref{case-4} at most by a monotone re-indexing of rows.
  
  \end{description}
  
  With that, we have exhausted all of the basic edges, and thus have completed the proof of the lemma.

\chapter{Conclusion}

The main result of this thesis is Theorem \ref{MainThm}, which provides a presentation, in terms of generators and relations, of the group $U_4(\D[\omega])$ of unitary operators that are described by matrices whose entries come from the ring $\D[\omega].$ We achieved this result by modifying the algorithm from \cite{GILES-SEL}, which gave us prospective syllables on words in the $(*)$ generators for $U_4(\D[\omega]).$ We produced the equations from Tables \ref{tab-equations} and \ref{tab-equations2} by experimentation, particularly during the proof of Lemma \ref{TechLemma}. We then showed that any word in the $(*)$ generators is equationally equivalent to the unique string of syllables output by Algorithm \ref{GSAlg}. This yields the desired presentation of the group $U_4(\D[\omega])$.

Our result has a potential application in quantum mechanics, namely the simplification of Clifford+$T$ circuits for two qubits, as presented in the introduction. A further step would therefore be to translate the equations from Tables \ref{tab-equations} and \ref{tab-equations2} into equations about the Clifford+$T$ generators. 

While the complexity of Algorithm \ref{GSAlg} is not directly relevant to the results of this work (we only require the algorithm to terminate), it is nevertheless of some independent interest. In \cite{GILES-SEL}, Giles and Selinger show that their algorithm produces $O(3^nk)$ one- and two-level matrices, where $n$ refers to the dimension of the input matrix and $k$ refers to the least denominator exponent of that matrix. Our modification works instead with $\delta$ exponents, but the complexity argument is otherwise identical.

Some questions have been left open. Most notably, we have only treated the case $n=4.$ In fact, Subcase~\ref{bad-case} of the proof of Lemma \ref{TechLemma} was the only point within the proof architecture of Theorem \ref{MainThm} which did not readily admit generalisation to the case $n>4.$ Such generalisation could be applied to the Clifford+$T$ group on more than 2 qubits. Another question we did not address is whether the equations from Tables \ref{tab-equations} are independent; it may well be the case that some equations are consequences of the others. We leave these questions to future work.

\appendix
\bibliographystyle{abbrv}
  \bibliography{SethMastersThesis}
\end{document}